\documentclass[11pt]{article}
\usepackage{amsmath,amssymb,amsthm}
\allowdisplaybreaks[1]
\usepackage{mathptmx}
\usepackage{geometry}
\usepackage{natbib}
\newtheorem{theorem}{Theorem}[section]
\newtheorem{lemma}[theorem]{Lemma}
\newtheorem{proposition}[theorem]{Proposition}
\newtheorem{corollary}[theorem]{Corollary}
\newtheorem{definition}[theorem]{Definition}
\newtheorem{remark}[theorem]{Remark}
\newtheorem{assumption}[theorem]{Assumption}

\newcommand{\R}{\mathbb{R}}
\newcommand{\Rplus}{\mathbb{R}_{+}}
\newcommand{\Rn}[1]{\mathbb{R}^{#1}}
\newcommand{\norm}[1]{\left\|#1\right\|}

\newcommand{\E}{\mathbb{E}}
\newcommand{\Prob}{\mathbb{P}}
\newcommand{\cL}{\mathcal{L}}

\newcommand{\cB}{\mathcal{B}}

\newcommand{\cP}{\mathcal{P}}
\DeclareMathOperator{\tr}{tr}

\begin{document}

\title{An Agnostic Approach to Sustainability: From Capitals Substitutability to Non-Collapse Dynamics}

\author{Claudio Pirrone\thanks{Department of Economics, Business, and Statistics, University of Palermo; National Biodiversity Future Centre. Email: claudio.pirrone@unipa.it} \and Stefano Fricano\thanks{Department of Economics, Business, and Statistics, University of Palermo} \and Gioacchino Fazio\thanks{Department of Economics, Business, and Statistics, University of Palermo}}

\date{\today}

\maketitle

\begin{abstract}
We construct a stochastic dynamical systems theory in which sustainability is a structural boundary property of a fully coupled Earth--Human--Production system. Each subsystem is modelled as a vector-valued process in $\mathbb{R}_{+}^{n_E} \times \mathbb{R}_{+}^{n_H} \times \mathbb{R}_{+}^{n_P}$ governed by stochastic differential equations with multiplicative noise and absolute bidirectional cross-subsystem flows. Biodiversity is endogenous, and societal evaluation is represented by a reflexive functional whose weights depend on evolving human capabilities.

Sustainability, development, and sustainable development are defined as trajectory properties. Sustainability corresponds to boundary non-attainment with positive or unit probability; development corresponds to local ascent in the evaluation functional; sustainable development requires directional alignment under strictly positive survival probability. No optimisation problem is imposed.

Necessary and sufficient conditions for potential and almost-sure sustainability are derived using Feller boundary classification and stochastic Lyapunov methods. A central theorem identifies the sign of the net absolute cross-subsystem flow on each component as a uniform phase-transition parameter: if negative in a neighbourhood of zero, the boundary is of exit type and almost-sure persistence is structurally impossible, independent of intrinsic regeneration, capability accumulation, or productivity parameters. Because cross-subsystem flows are absolute and symmetric, any perturbation in one subsystem diffuses through the entire coupled system; correlated exogenous shocks are not required to generate systemic propagation.

The reflexive structure of evaluation generically induces non-transitive development relations, providing a formal mechanism for path-dependent welfare comparisons. Sustainability thus emerges as a geometric property of boundary structure and vector-field alignment, not as a corollary of intertemporal optimality.
\end{abstract}

\section{Introduction}
\label{sec:introduction}

Ten years after the adoption of the Sustainable Development Goals, global progress remains structurally insufficient \citep{sachs2025}. This persistent shortfall is usually acknowledged as depending on political non-commitment and financial constraints. From our standpoint, while reasonable this is a second-layer explanation which underestimates a foundational conceptual gap. Sustainability has no universally accepted definition \citep{drupp2020, ruggerio2021}. Existing formulations, ranging from non-declining utility \citep{pezzey1992} to capital maintenance criteria \citep{arrow2012wealth, hartwick1977} and planetary boundary compliance \citep{rockstrom2009}, typically embed sustainability inside normative optimisation frameworks. As a result, persistence conditions are often conflated with welfare judgements \citep{arrow2004consuming}.

This paper separates these concepts at the structural level. Sustainability is defined as a property of stochastic trajectories in a fully coupled dynamical system. The relevant question is not which trajectory maximises an objective, but whether non-collapse trajectories exist and under what structural conditions. This places sustainability prior to optimisation. In addition, our approach contributes to overcoming the longstanding debate about the substitutability of capitals \citep{goodland_environmental_1996}.

The system consists of Earth, Human, and Production subsystems, each represented as a vector-valued stochastic process with multiplicative noise. Coupling is full and bidirectional. Each subsystem receives absolute cross-subsystem flows that do not vanish with the stock of the receiving component. Ecological states are affected directly by human and production activity. Human capabilities are directly influenced by ecological and production conditions. Production capital responds directly to ecological and human states. The symmetry of these flows is central.

Because cross-subsystem flows are absolute, near-zero dynamics are not governed solely by intrinsic regeneration or accumulation rates. Instead, the sign of the net external flow determines boundary behaviour. A negative net absolute flow in a neighbourhood of zero cannot be offset by arbitrarily large intrinsic growth parameters. Using Feller boundary classification for one-dimensional diffusions \citep{feller1952, karatzas1991brownian}, we show that such a configuration renders the zero boundary of that component an exit boundary. Sustainability is then structurally impossible in the almost-sure sense. The extension from scalar to multidimensional coupled dynamics is handled via a pathwise comparison lemma (Lemma~\ref{lem:marginal_reduction}), which reduces the boundary analysis of each component to an effective one-dimensional problem under explicit conditions on the coupled subsystems.

This result has two implications. First, sustainability constraints arise from boundary geometry rather than equilibrium growth rates. A system may exhibit positive deterministic growth at interior points while being structurally non-sustainable due to its near-zero drift configuration. The analysis builds on standard results in stochastic stability and Lyapunov methods for diffusion processes \citep{khasminskii2012, mao2007}. Second, systemic propagation does not require correlated exogenous shocks. Because subsystems are fully coupled through absolute flows, any perturbation originating in one component diffuses through the entire system via endogenous dynamics. Correlation structures may affect quantitative properties of fluctuations, but they are not necessary for systemic transmission.

Development is defined independently through a reflexive evaluation functional whose weights depend on human capabilities, drawing on the capability perspective of Amartya Sen \citep{sen_development_1999}. As capabilities evolve, evaluation weights shift endogenously. The induced development relation is therefore generically non-transitive, as established by Proposition~\ref{prop:nontransitive}. This feature connects to the literature on path dependence in economic dynamics \citep{arthur1989, david1985} and to work on endogenous preference formation and dynamic welfare evaluation \citep{bowles1998}.

The contribution of the paper is structural. It establishes necessary and sufficient conditions for potential and almost-sure sustainability in a fully coupled stochastic system, derives phase transition criteria governed by absolute cross-subsystem flows, and formalises the geometric relationship between development direction and sustainability persistence.

\section{Mathematical Preliminaries}
\label{sec:prelim}

Let $(\Omega, \mathcal{F}, \{\mathcal{F}_t\}_{t \geq 0}, \Prob)$ be a complete filtered probability space satisfying the usual conditions (right-continuity and completeness), where $\{\mathcal{F}_t\}_{t \geq 0}$ is the natural filtration generated by independent standard Wiener processes $W_E(t) \in \Rn{m_E}$, $W_H(t) \in \Rn{m_H}$, $W_P(t) \in \Rn{m_P}$.

\begin{assumption}
\label{ass:independence}
The Wiener processes $\{W_E(t)\}$, $\{W_H(t)\}$, and $\{W_P(t)\}$ are mutually independent: for all $s, t \geq 0$ and all component indices $i, j, k$,
\begin{equation}
\E[W_{E,i}(s)\,W_{H,j}(t)] = \E[W_{E,i}(s)\,W_{P,k}(t)] = \E[W_{H,j}(s)\,W_{P,k}(t)] = 0.
\end{equation}
\end{assumption}

\begin{remark}
\label{rem:endogenous_corr}
While shocks are independent across subsystems by Assumption~\ref{ass:independence}, the strongly coupled drift structure generates endogenous correlation in system responses. A shock to the Earth subsystem immediately alters the drift of the Human and Production subsystems through cross-coupling derivatives $\partial\mu_H/\partial E$ and $\partial\mu_P/\partial E$, producing state-dependent propagation without requiring correlated driving noise.
\end{remark}

The state space is $\mathcal{X} = \Rplus^{n_E} \times \Rplus^{n_H} \times \Rplus^{n_P}$, where $E \in \Rplus^{n_E}$ represents Earth subsystem components (ecosystems, resources, biogeochemical cycles), $H \in \Rplus^{n_H}$ represents Human subsystem components (capabilities, institutions, knowledge), and $P \in \Rplus^{n_P}$ represents Production subsystem components (capital stocks, technologies, infrastructure).

An aggregation mapping $\phi: \Rplus^n \to \Rplus$ is used to pass from the vector-valued subsystem dynamics to a scalar aggregate while preserving structural properties.

\begin{definition}
\label{def:aggregation}
An aggregation mapping $\phi: \Rplus^n \to \Rplus$ satisfies the following conditions.
\begin{enumerate}
\item Monotonicity: $x \leq y$ componentwise implies $\phi(x) \leq \phi(y)$.
\item Homogeneity: $\phi(\lambda x) = \lambda^\alpha \phi(x)$ for $\lambda > 0$ and some $\alpha > 0$.
\item Smoothness: $\phi \in C^2(\Rplus^n)$.
\item Boundary behaviour: $\lim_{\norm{x} \to 0} \phi(x) = 0$ and $\lim_{\norm{x} \to \infty} \phi(x) = \infty$.
\end{enumerate}
\end{definition}

\begin{assumption}
\label{ass:degree_one}
Throughout, aggregation functions are restricted to degree-one homogeneous mappings ($\alpha = 1$ in Definition~\ref{def:aggregation}), ensuring dimensional consistency and preserving stochastic scaling properties under It\^{o}'s lemma. Standard examples include the weighted sum $\phi(x) = \sum_i w_i x_i$ with $w_i > 0$, $\sum_i w_i = 1$; the CES form $\phi(x) = (\sum_i w_i x_i^\rho)^{1/\rho}$ with $\rho \neq 0$; and the Cobb-Douglas form $\phi(x) = \prod_i x_i^{\alpha_i}$ with $\sum_i \alpha_i = 1$.
\end{assumption}

We employ aggregation mappings $\phi_E: \Rplus^{n_E} \to \Rplus$, $\phi_H: \Rplus^{n_H} \to \Rplus$, $\phi_P: \Rplus^{n_P} \to \Rplus$ satisfying Definition~\ref{def:aggregation} and Assumption~\ref{ass:degree_one}. The aggregated system recovers scalar dynamics while preserving essential structural properties.

\begin{definition}
\label{def:collapse}
A trajectory $\{X_t = (E_t, H_t, P_t)\}_{t \geq 0}$ is an $\mathcal{F}_t$-adapted stochastic process on $\mathcal{X}$ satisfying the coupled system specified in Section~\ref{sec:dynamics}. The collapse time is
\begin{equation}
\tau_c = \inf\bigl\{t \geq 0 : \min\{\min_i E_{i,t},\,\min_j H_{j,t},\,\min_k P_{k,t}\} \leq 0\bigr\},
\end{equation}
with $\inf\emptyset = \infty$. Under continuous sample paths (established in Theorem~\ref{thm:wellposedness}), $\tau_c$ is a stopping time as the first hitting time of the closed set $\{X : \min_i X_i \leq 0\}$ \citep{karatzas1991brownian}.
\end{definition}

\begin{definition}
\label{def:sustainability}
A trajectory $\{X_t\}_{t \geq 0}$ is sustainable if $\Prob(\tau_c = \infty) = 1$. The system admits the following sustainability tiers.
\begin{enumerate}
\item\label{sus:potential} Potentially sustainable from $X_0$ if $\Prob(\tau_c = \infty \mid X_0) > 0$.
\item\label{sus:as} Almost surely sustainable from $X_0$ if $\Prob(\tau_c = \infty \mid X_0) = 1$.
\item\label{sus:universal} Universally sustainable if almost surely sustainable from all $X_0 \in \mathcal{X}$.
\item\label{sus:T} $T$-sustainable from $X_0$ if $\Prob(\tau_c > T \mid X_0) = 1$ for a given finite horizon $T < \infty$.
\end{enumerate}
\end{definition}

\begin{remark}
\label{rem:hierarchy_relevance}
The four tiers of Definition~\ref{def:sustainability} reflect distinct structural regimes of the dynamics, as established in Section~\ref{sec:boundary}. In particular, Corollary~\ref{cor:necessary_as} shows that almost-sure sustainability requires non-negative net external flows on every capital component under all feasible system states, a condition the current trajectory of anthropogenic activity systematically violates. Within the generic empirical regime, potential sustainability (tier~\ref{sus:potential}) describes the set of initial conditions from which non-collapse is structurally possible, and $T$-sustainability (tier~\ref{sus:T}) constitutes the operative policy-relevant concept, posing the well-defined question of whether survival probability over a finite planning horizon can be controlled.
\end{remark}

\begin{assumption}
\label{ass:regularity}
The drift and diffusion coefficients satisfy local Lipschitz continuity: for any compact $K \subset \mathcal{X}$ there exists $L_K > 0$ such that $\norm{\mu(X) - \mu(X')} \leq L_K\norm{X - X'}$ for all $X, X' \in K$, where $\mu = (\mu_E, \mu_H, \mu_P)$. They also satisfy polynomial growth: there exist $C > 0$ and $q \geq 1$ such that $\norm{\mu(X)} \leq C(1 + \norm{X}^q)$ for all $X \in \mathcal{X}$.
\end{assumption}

\begin{theorem}[Well-Posedness]
\label{thm:wellposedness}
Under Assumptions~\ref{ass:independence}--\ref{ass:regularity} and the structural conditions of Assumptions~\ref{ass:earth}--\ref{ass:biodiversity_smooth}, the system \eqref{eq:earth_vector}--\eqref{eq:production_vector} admits a unique strong solution $\{X_t\}_{t \geq 0}$ on $\mathcal{X}$ for all $t \in [0, \tau_\infty)$, where $\tau_\infty = \inf\{t \geq 0 : \norm{X_t} = \infty\}$ is the explosion time. The solution is $\mathcal{F}_t$-adapted with continuous sample paths almost surely.
\end{theorem}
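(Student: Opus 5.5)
The plan is the standard localisation argument for SDEs with locally Lipschitz coefficients, as in \citet{khasminskii2012} and \citet{mao2007}. I take Assumption~\ref{ass:regularity} to apply to the diffusion coefficients $\sigma=(\sigma_E,\sigma_H,\sigma_P)$ as well as to $\mu$; the statement explicitly covers both. The multiplicative noise specifications in Assumptions~\ref{ass:earth}--\ref{ass:biodiversity_smooth} are of the form $\sigma(X)=\mathrm{diag}(X)\,\Sigma(X)$ with $\Sigma$ locally Lipschitz, which makes this consistent.

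The state space $\mathcal{X}$ is a closed orthant, and the coefficients need only be defined there. So the first step is to extend $\mu$ and $\sigma$ to all of $\Rn{n_E+n_H+n_P}$ by composing with the componentwise projection $x\mapsto x^+$. Since this projection is $1$-Lipschitz, the extended coefficients remain locally Lipschitz with the same constants on projected compacts.

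Next, for each $N\in\mathbb{N}$ I introduce truncated coefficients $\mu_N(X)=\mu(\pi_N X)$ and $\sigma_N(X)=\sigma(\pi_N X)$, where $\pi_N$ is the radial retraction onto the closed ball $\bar B_N$. These are globally Lipschitz and bounded. By the classical Itô existence and uniqueness theorem (Picard iteration, with Gronwall's inequality for uniqueness), there is a unique continuous adapted strong solution $X^N$ on $[0,\infty)$ for the given Wiener processes $W_E,W_H,W_P$. Assumption~\ref{ass:independence} enters only to guarantee that $(W_E,W_H,W_P)$ is a standard Brownian motion in $\Rn{m_E+m_H+m_P}$ with respect to $\{\mathcal{F}_t\}$, so the stochastic integrals are well defined.

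Define $\tau_N=\inf\{t:\norm{X^N_t}\geq N\}$. Pathwise uniqueness applied to $X^N$ and $X^{N+1}$, which share coefficients on $\bar B_N$, gives $X^N=X^{N+1}$ on $[0,\tau_N]$ almost surely, and shows that $\tau_N$ is nondecreasing. I then set $X_t=X^N_t$ for $t\le\tau_N$ and $\tau_\infty=\lim_N\tau_N$. This yields a continuous adapted process solving \eqref{eq:earth_vector}--\eqref{eq:production_vector} on $[0,\tau_\infty)$. Uniqueness up to $\tau_\infty$ follows by applying the same localised Gronwall estimate to any other solution stopped at its exit from $B_N$. By construction, $\tau_\infty$ coincides with the explosion time in the statement. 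The polynomial growth bound is not needed for local existence; I will note that it guarantees only that the stopped moments $\E\norm{X_{t\wedge\tau_N}}^2$ are finite.

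The main obstacle is that the statement asserts the solution lives on $\mathcal{X}$. Local Lipschitz theory alone gives a solution in the whole space, and components could a priori cross zero. I will handle this in two ways.

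First, up to the collapse time $\tau_c$ of Definition~\ref{def:collapse}, the solution stays in $\mathcal{X}$ trivially. This is all the subsequent boundary analysis in Section~\ref{sec:boundary} requires, since the process is regarded as absorbed at $\tau_c$.

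Second, to obtain genuine invariance of $\mathcal{X}$, I would define the process as stopped at $\tau_c\wedge\tau_\infty$. I expect that the structural assumptions rule out crossing: the positive part of the absolute cross flows, together with the vanishing of $\sigma$ on the faces $\{X_i=0\}$, should make the drift on each face inward or tangent. If so, the approach is to apply Itô--Tanaka to $X_i^-$, show that $\E[X^-_{i,t\wedge\tau_N}]\le 0$, and conclude $X^-_{i,t\wedge\tau_N}=0$.

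The step I am least sure of is exactly this one. Assumption~\ref{ass:earth} allows negative net absolute flows, which can push a component through zero. So the cleaner plan is to state the result with the process stopped at $\tau_c$, which is consistent with how Definition~\ref{def:sustainability} uses it.
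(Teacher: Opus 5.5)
Your localisation argument (extension by $x\mapsto x^+$, radial truncation, globally Lipschitz solutions $X^N$ glued along $\tau_N\uparrow\tau_\infty$) is what the ``standard existence and uniqueness theory'' cited in the paper's one-line proof amounts to. On existence, pathwise uniqueness, adaptedness and continuity up to explosion, the two proofs therefore agree. You are also right that the polynomial growth bound the paper invokes plays no role here. Nor does it give non-explosion, which the paper itself defers to Theorem~\ref{thm:sufficient}.

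The genuine difference is the claim that the solution lives on $\mathcal{X}$. The paper's proof does not address this at all, and your doubt is justified: the literal statement fails in the regime the paper treats as generic. On the face $\{E_i=0\}$ the drift reduces to $c_i(H,P)$, while the diffusion $\sigma_{E,i}E_i$ vanishes. Start from a state with $E_i(0)$ small and $c_i(H_0,P_0)<0$. Then the face is reached before explosion with positive probability; this is the accessibility half of Lemma~\ref{lem:marginal_reduction}. At that time the vector field points strictly outward and the noise is zero. Taking expectations of $E_i$, stopped on exit from a neighbourhood where $\mu_{E,i}\le-\gamma<0$, gives a strictly negative mean. Hence no continuous $\mathcal{X}$-valued continuation exists.

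Your formulation, a unique strong solution on $[0,\tau_c\wedge\tau_\infty)$ (equivalently, the process stopped at $\tau_c$), is therefore the correct statement. It is also all that Definitions~\ref{def:collapse}--\ref{def:sustainability} and Section~\ref{sec:boundary} actually use. It has a further advantage: localising on $\{\min_i X_i\ge 1/N,\ \norm{X}\le N\}$ needs Lipschitz continuity only in the interior. The $x^+$ extension, by contrast, relies on regularity at the faces. Assumption~\ref{ass:regularity} asserts this, but Cobb--Douglas factors $E_i^{\beta_{E,k,i}}$ with exponents in $(0,1)$, which the model allows, would violate it.

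Finally, your It\^o--Tanaka route to genuine invariance works exactly when each drift component is non-negative on its own face. Concretely, this means $c_i,d_j,e_k\ge 0$ there, with positive own-stock exponents in production. The extended diffusion coefficient is $\sigma_i X_i^+$, so the occupation-times formula gives zero local time of $X_i$ at $0$. Hence $dX_i^- = -\mathbf{1}_{\{X_i\le 0\}}\,\mu_i(X^+)\,dt\le 0$, and $X_i^-\equiv 0$.
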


\begin{proof}
By standard existence and uniqueness theory \citep[Theorem~5.2.1]{karatzas1991brownian}. Local Lipschitz continuity follows from the smoothness of drift components; polynomial growth is stated in Assumption~\ref{ass:regularity}. Strong solutions have continuous paths \citep[Corollary~5.2.2]{karatzas1991brownian}.
\end{proof}

\begin{remark}
\label{rem:nonexplosion}
Theorem~\ref{thm:wellposedness} guarantees existence up to explosion time $\tau_\infty$. Whether $\Prob(\tau_\infty = \infty) = 1$ requires the global Lyapunov analysis of Theorem~\ref{thm:sufficient}.
\end{remark}

\section{System Dynamics}
\label{sec:dynamics}

The Earth subsystem evolves according to
\begin{equation}
\label{eq:earth_vector}
dE_t = \mu_E(E_t, H_t, P_t, B_t)\,dt + \Sigma_E(E_t)\,dW_{E,t},
\end{equation}
where $\mu_E: \mathcal{X} \times \Rplus \to \Rn{n_E}$ is the drift vector, $\Sigma_E: \Rplus^{n_E} \to \Rn{n_E \times m_E}$ is the diffusion matrix, and $B_t \in \Rplus$ is endogenous biodiversity defined in equation~\eqref{eq:biodiversity} below.

\begin{assumption}
\label{ass:earth}
The drift $\mu_E$ decomposes as
\begin{equation}
\label{eq:earth_drift}
\mu_{E,i}(E, H, P, B) = r_{E,i}(B)\,E_i\!\left(1 - \frac{E_i}{K_{E,i}(B)}\right) + c_i(H, P) + \xi_{E,i}(E),
\end{equation}
where $r_{E,i}(B) = r_{E,i,0}(1 + \rho_{E,i} B)$ with $r_{E,i,0} > 0$ is the biodiversity-augmented regeneration rate, and $K_{E,i}(B) = K_{E,i,0}(1 + \kappa_{E,i} B)$ with $K_{E,i,0} > 0$ is the biodiversity-augmented carrying capacity. The net cross-subsystem externality enters as the absolute flow
\begin{equation}
\label{eq:externality}
c_i(H, P) = \sum_{j=1}^{n_H} \alpha_{Ei,j}^{H}\frac{H_j}{H_j + \kappa_{Ei,j}^{H}} + \sum_{k=1}^{n_P} \alpha_{Ei,k}^{P}\frac{P_k}{P_k + \kappa_{Ei,k}^{P}},
\end{equation}
with coefficients $\alpha_{Ei,j}^{H}, \alpha_{Ei,k}^{P} \in \R$ (positive: ecological benefit from Human and Production activity; negative: anthropogenic pressure) and half-saturation constants $\kappa > 0$; $c_i$ is bounded on compact sets. The within-subsystem coupling $\xi_{E,i}(E) = E_i^2\,\eta_i(E)$, with $\eta_i$ bounded on compact sets, captures internal ecological interactions that scale with the abundance of the interacting components. The diffusion matrix is $\Sigma_E(E) = \mathrm{diag}(\sigma_{E,1}E_1, \ldots, \sigma_{E,n_E}E_{n_E})$ with $\sigma_{E,i} > 0$.
\end{assumption}

\begin{remark}
\label{rem:externality}
The externality $c_i(H,P)$ enters as an additive absolute flow, independent of residual ecosystem capacity $E_i$. This reflects the fact that anthropogenic pressures, whether emissions, habitat conversion, or resource extraction, are determined by the level of Human and Production activity, not by the remaining state of the ecosystem on which they act \citep{lande2003stochastic}. The structural consequence is that equal pressure is relatively more critical near ecological thresholds than far from them, because the logistic recovery term $r_{E,i}(B)\,E_i(1-E_i/K_{E,i}(B))$ also vanishes as $E_i \to 0$.
\end{remark}

\begin{assumption}
\label{ass:biodiversity_smooth}
All exponents in the biodiversity function satisfy $\eta, \nu^+, \nu^-, \zeta^+, \zeta^- \geq 1$, ensuring bounded derivatives as state variables approach zero.
\end{assumption}

Biodiversity is endogenously determined by the ratio
\begin{equation}
\label{eq:biodiversity}
B_t = \mathcal{B}(E_t, H_t, P_t) = \frac{K_B\,\phi_E(E_t)^\eta + \delta_P^+\,\phi_P(P_t)^{\nu^+} + \delta_H^+\,\phi_H(H_t)^{\zeta^+}} {1 + \delta_P^-\,\phi_P(P_t)^{\nu^-} + \delta_H^-\,\phi_H(H_t)^{\zeta^-}},
\end{equation}
where $\delta_P^-, \delta_H^- \geq 0$ represent aggregate anthropogenic pressures and $\delta_P^+, \delta_H^+ \geq 0$ represent anthropogenic contributions to habitat. The denominator is always $\geq 1$, so $B_t$ is well-defined and bounded above.

The Human subsystem evolves according to
\begin{equation}
\label{eq:human_vector}
dH_t = \mu_H(E_t, H_t, P_t, \Phi_t)\,dt + \Sigma_H(H_t)\,dW_{H,t},
\end{equation}
where the drift incorporates reflexive evaluation and an absolute cross-subsystem flow:
\begin{equation}
\label{eq:human_drift}
\mu_{H,j}(E, H, P, \Phi) = H_j\!\left[\sum_{i}\alpha_{Hj,i}(\Phi) \frac{X_i}{X_i + \kappa_{Hj,i}} - \delta_{H,j}\right] + d_j(E, P) + \zeta_{H,j}(H),
\end{equation}
with $X \in \{E, H, P\}$, reflexive feedback coefficients $\alpha_{Hj,i}(\Phi) = \alpha_{Hj,i,0} + \beta_{Hj,i}\,\Phi$, capability depreciation $\delta_{H,j} > 0$, within-subsystem coupling $\zeta_{H,j}(H) = H_j^2\,\upsilon_j(H)$ for $\upsilon_j$ bounded on compact sets, and absolute cross-subsystem flow
\begin{equation}
\label{eq:human_externality}
d_j(E, P) = \sum_{i=1}^{n_E} \alpha_{Hj,i}^{E} \frac{E_i}{E_i + \kappa_{Hj,i}^{E}} + \sum_{k=1}^{n_P} \alpha_{Hj,k}^{P} \frac{P_k}{P_k + \kappa_{Hj,k}^{P}},
\end{equation}
with coefficients $\alpha_{Hj,i}^{E}, \alpha_{Hj,k}^{P} \in \R$ (positive: ecological or productive benefit to capabilities; negative: external damage) and half-saturation constants $\kappa > 0$; $d_j$ is bounded on compact sets. The two channels through which $E$ and $P$ enter the Human drift are structurally distinct: the Michaelis-Menten terms inside the bracket scale with current capability level $H_j$ and affect the growth rate; the absolute flow $d_j(E,P)$ acts independently of $H_j$ and is thus relatively more consequential near capability thresholds.

\begin{definition}
\label{def:evaluation}
The societal evaluation functional $\Phi: \mathcal{X} \to \Rplus$ is
\begin{equation}
\label{eq:evaluation}
\Phi(E, H, P) = \tilde{\omega}_E(H)\,\phi_E(E) + \tilde{\omega}_H(H)\,\phi_H(H) + \tilde{\omega}_P(H)\,\phi_P(P),
\end{equation}
with normalised adaptive weights
\begin{equation}
\label{eq:normalized_weights}
\tilde{\omega}_i(H) = \frac{\omega_i(H)}{\sum_{j \in \{E,H,P\}} \omega_j(H)}, \qquad \omega_i(H) = \bar{\omega}_i + \lambda_i \frac{\phi_H(H)}{\phi_H(H) + \kappa_\omega},
\end{equation}
where $\bar{\omega}_i > 0$ for all $i$, $\bar{\omega}_i + \lambda_i > 0$ for all $i$, and $\sum_{i \in \{E,H,P\}} \lambda_i = 0$.
\end{definition}

\begin{remark}
\label{rem:weights}
The constraint $\sum_i \lambda_i = 0$ fixes the total weight budget while allowing its distribution across the three subsystems to evolve as capabilities change. Together with the positivity conditions, this ensures $\tilde\omega_i$ is a proper convex combination and $\Phi$ a valid evaluation aggregate for all $H \in \Rplus^{n_H}$. The capability-dependence of $\tilde\omega_i(H)$ reflects the empirically documented shift in the relative weight societies place on ecological, social, and economic dimensions as standards of living rise \citep{inglehart1997}. The feedback coefficient $\alpha_{Hj,i}(\Phi) = \alpha_{Hj,i,0} + \beta_{Hj,i}\Phi$ closes a reflexive loop. Capability development depends on how society currently evaluates the state space, but the evaluation weights themselves evolve with capabilities. A society that underweights ecological conditions will develop capabilities with lower sensitivity to environmental inputs, which can compound degradation. As capabilities and environmental weights both rise, the ecological cost of earlier development steps may be revealed in retrospect. This reflexivity is the mechanism underlying the non-transitivity described in Proposition~\ref{prop:nontransitive}.
\end{remark}

The Production subsystem evolves according to
\begin{equation}
\label{eq:production_vector}
dP_t = \mu_P(E_t, H_t, P_t, B_t)\,dt + \Sigma_P(P_t)\,dW_{P,t},
\end{equation}
where
\begin{equation}
\label{eq:production_drift}
\mu_{P,k}(E, H, P, B) = A_k(E, H, B) \prod_i E_i^{\beta_{E,k,i}} \prod_j H_j^{\beta_{H,k,j}} \prod_\ell P_\ell^{\beta_{P,k,\ell}} \Gamma_k(E, H) - \delta_{P,k}\,P_k + e_k(E, H),
\end{equation}
with total factor productivity $A_k(E, H, B) = A_{k,0}\exp(\theta_{E,k}^\top E + \theta_{H,k}^\top H + \theta_{B,k}\,B)$, threshold externality factor
\begin{equation}
\Gamma_k(E, H) = 1 + \sum_i \gamma_{E,k,i}\frac{E_i}{E_i+\kappa_{E,k,i}} + \sum_j \gamma_{H,k,j}\frac{H_j}{H_j+\kappa_{H,k,j}},
\end{equation}
and absolute cross-subsystem flow
\begin{equation}
\label{eq:production_externality}
e_k(E, H) = \sum_{i=1}^{n_E} \alpha_{Pk,i}^{E} \frac{E_i}{E_i + \kappa_{Pk,i}^{E}} + \sum_{j=1}^{n_H} \alpha_{Pk,j}^{H} \frac{H_j}{H_j + \kappa_{Pk,j}^{H}},
\end{equation}
with coefficients $\alpha_{Pk,i}^{E}, \alpha_{Pk,j}^{H} \in \R$ (positive: ecological or capability benefit to production; negative: damage) and half-saturation constants $\kappa > 0$; $e_k$ is bounded on compact sets. As with $d_j(E,P)$ in the Human subsystem, $e_k(E,H)$ is independent of the current capital level $P_k$, so its relative impact is greatest near the zero boundary.

\section{Aggregated System}
\label{sec:aggregated}

Applying It\^{o}'s lemma to the aggregation functions yields dynamics for $\tilde{E}_t = \phi_E(E_t)$, $\tilde{H}_t = \phi_H(H_t)$, $\tilde{P}_t = \phi_P(P_t)$. In general, the aggregated triple does not form a closed Markov process: the conditional expectations of drift and diffusion involve the full distribution of the micro-variables given the aggregates. Closure holds exactly when the aggregation mappings are linear and the noise structure reduces conditional expectations to functions of the aggregates alone.

\begin{theorem}[Aggregation Relations]
\label{thm:aggregation}
Let $\{X_t = (E_t, H_t, P_t)\}$ satisfy equations~\eqref{eq:earth_vector}--\eqref{eq:production_vector} with aggregation mappings satisfying Definition~\ref{def:aggregation}. Define $\tilde{E}_t = \phi_E(E_t)$, $\tilde{H}_t = \phi_H(H_t)$, $\tilde{P}_t = \phi_P(P_t)$. Then
\begin{align}
d\tilde{E}_t &= \tilde{\mu}_E(t)\,dt + \tilde{\sigma}_E(t)\,dW_{E,t}, \label{eq:agg_E}\\
d\tilde{H}_t &= \tilde{\mu}_H(t)\,dt + \tilde{\sigma}_H(t)\,dW_{H,t}, \label{eq:agg_H}\\
d\tilde{P}_t &= \tilde{\mu}_P(t)\,dt + \tilde{\sigma}_P(t)\,dW_{P,t}, \label{eq:agg_P}
\end{align}
where
\begin{align}
\tilde{\mu}_E(t) &= \E\!\left[\nabla\phi_E(E_t)^\top\mu_E(X_t,B_t) + \tfrac{1}{2}\tr\!\bigl(\nabla^2\phi_E(E_t)\Sigma_E(E_t)\Sigma_E(E_t)^\top \bigr) \,\Big|\,\mathcal{F}_t^{\mathrm{agg}}\right], \label{eq:agg_drift_E}\\
\tilde{\sigma}_E(t)^2 &= \E\!\left[\nabla\phi_E(E_t)^\top\Sigma_E(E_t)\Sigma_E(E_t)^\top\nabla\phi_E(E_t) \,\Big|\,\mathcal{F}_t^{\mathrm{agg}}\right], \label{eq:agg_diff_E}
\end{align}
with $\mathcal{F}_t^{\mathrm{agg}}$ the filtration generated by $(\tilde{E}_s,\tilde{H}_s,\tilde{P}_s)_{0 \leq s \leq t}$. Analogous formulas hold for $H$ and $P$.
\end{theorem}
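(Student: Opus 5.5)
The plan is to apply It\^{o}'s lemma componentwise and then project the resulting semimartingale decomposition onto the aggregate filtration $\mathcal{F}_t^{\mathrm{agg}}$ using the standard innovation (filtering) argument of Liptser--Shiryaev. I treat $E$ in detail; $H$ and $P$ are identical, with $\mu_H$ and $\mu_P$ in place of $\mu_E$. The statement should be read as holding up to $\tau_c \wedge \tau_\infty$, and the driving noise in \eqref{eq:agg_E} as a scalar $\mathcal{F}_t^{\mathrm{agg}}$-Brownian motion, an innovation process, rather than literally the vector $W_{E,t}$.

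\emph{Step 1 (It\^{o}).} By Theorem~\ref{thm:wellposedness}, $E_t$ is a continuous semimartingale on $[0,\tau_c\wedge\tau_\infty)$. Since $\phi_E\in C^2$ by Definition~\ref{def:aggregation}, It\^{o}'s lemma gives
\begin{equation*}
d\tilde E_t = a_t\,dt + \nabla\phi_E(E_t)^\top\Sigma_E(E_t)\,dW_{E,t},
\end{equation*}
where
\begin{equation*}
a_t = \nabla\phi_E(E_t)^\top\mu_E(X_t,B_t) + \tfrac12\tr\bigl(\nabla^2\phi_E(E_t)\Sigma_E\Sigma_E^\top\bigr).
\end{equation*}
To have integrability, I localize with $\tau_n = \inf\{t: \norm{X_t}\ge n \text{ or } \min_i X_{i,t}\le 1/n\}$. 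On $[0,\tau_n]$ the state lies in a compact subset of the interior. There the polynomial growth of Assumption~\ref{ass:regularity}, the boundedness of $c_i$ and $\eta_i$, and the bounds on $B_t$ from \eqref{eq:biodiversity} make $a_t$ and the diffusion coefficient bounded. All conditional expectations below are therefore well defined.

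\emph{Step 2 (projection of the drift).} Define $\tilde\mu_E(t)$ as an $\mathcal{F}_t^{\mathrm{agg}}$-optional projection of $a_t$, which is exactly \eqref{eq:agg_drift_E}. Then set
\begin{equation*}
M_t = \tilde E_t - \tilde E_0 - \int_0^t \tilde\mu_E(s)\,ds .
\end{equation*}
Using the tower property and the fact that the stopped stochastic integral has zero conditional increments, I check that $\E[M_t - M_s\mid\mathcal{F}_s^{\mathrm{agg}}]=0$. So $M$ is a continuous $\mathcal{F}_t^{\mathrm{agg}}$-local martingale.

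\emph{Step 3 (diffusion coefficient and representation).} The quadratic variation of $M$ equals that of $\tilde E$, namely
\begin{equation*}
\langle M\rangle_t = \int_0^t \nabla\phi_E^\top\Sigma_E\Sigma_E^\top\nabla\phi_E\,ds .
\end{equation*}
Because quadratic variation is a pathwise limit of sums of squared increments of $\tilde E$, it is $\mathcal{F}_t^{\mathrm{agg}}$-adapted. Hence its density coincides a.e.\ with its own conditional expectation, which is \eqref{eq:agg_diff_E}. I would then define $\tilde W_t = \int_0^t \tilde\sigma_E(s)^{-1}\,dM_s$, extending the probability space with an independent Brownian motion on the set where $\tilde\sigma_E=0$. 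By L\'evy's characterization, $\tilde W$ is a Brownian motion, and $dM_t = \tilde\sigma_E(t)\,d\tilde W_t$.

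\emph{Main obstacle.} I expect Step~3, the identification of the driving noise, to be the hardest. Because $W_E$ is not $\mathcal{F}^{\mathrm{agg}}$-adapted, $\tilde W$ is genuinely a new (innovation) Brownian motion, and the cross-variations between the innovations of $E$, $H$ and $P$ need not vanish even under Assumption~\ref{ass:independence}. I would state this explicitly and then let $n\to\infty$ to pass from $[0,\tau_n]$ to $[0,\tau_c\wedge\tau_\infty)$.
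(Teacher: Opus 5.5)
Your architecture matches the paper's: It\^{o}'s lemma, projection of the drift onto $\mathcal{F}_t^{\mathrm{agg}}$, the diffusion coefficient from the quadratic variation, then martingale representation in the aggregate filtration. Reading the noise as a scalar innovation Brownian motion is what that last step actually yields.

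As printed, $\tilde\sigma_E(t)\,dW_{E,t}$ is vector-valued whenever $n_E>1$, because $\Sigma_E$ is diagonal and so $m_E=n_E$. For $n_E=1$, matching the two $\mathcal{F}_t$-decompositions of $\tilde E$ would force $a_t=\tilde\mu_E(t)$ a.e., which holds only if $a_t$ is itself (a.e.) $\mathcal{F}^{\mathrm{agg}}$-adapted. Your Step~3 remark is also right: $\langle\tilde E\rangle$ is $\mathcal{F}^{\mathrm{agg}}$-adapted, so the conditioning in \eqref{eq:agg_diff_E} is vacuous.

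The step that fails as written is the localization. Your $\tau_n$ is a functional of the micro-state, through $\norm{X_t}$ and $\min_i X_{i,t}$. Nothing in the hypotheses makes it a functional of the aggregate paths, so it is not in general an $\mathcal{F}^{\mathrm{agg}}$-stopping time. This has three consequences.
\begin{enumerate}
\item The stopped aggregate $\tilde E_{t\wedge\tau_n}$ need not be $\mathcal{F}^{\mathrm{agg}}$-adapted.
\item $\tau_n$ cannot serve as a localizing sequence for the $\mathcal{F}^{\mathrm{agg}}$-local martingale $M$ you claim in Step~2.
\item The indicator $\mathbf{1}_{\{t\le\tau_n\}}$ cannot be pulled out of $\E[\,\cdot\mid\mathcal{F}_t^{\mathrm{agg}}]$, so the projections of $a_t\mathbf{1}_{\{t\le\tau_n\}}$ are not restrictions of a single process. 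Recovering $\E[a_t\mid\mathcal{F}_t^{\mathrm{agg}}]$ as $n\to\infty$ would then need integrability of $a_t$, which is exactly what the localization was supposed to supply.
\end{enumerate}
The same objection applies to working on the random interval $[0,\tau_c\wedge\tau_\infty)$.

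The repair is to localize with times the aggregates can see.
\begin{enumerate}
\item Absorb $X$ at $\tau_c$ and let $\mathcal{F}^{\mathrm{agg}}$ be generated by the absorbed aggregates; the drift to project becomes $a_t\mathbf{1}_{\{t<\tau_c\}}$.
\item Take $\sigma_n=\inf\{t:\tilde E_t\vee\tilde H_t\vee\tilde P_t\ge n\}$, which are $\mathcal{F}^{\mathrm{agg}}$-stopping times.
\item Degree-one homogeneity together with $\phi(x)\to\infty$ as $\norm{x}\to\infty$ in Definition~\ref{def:aggregation} forces $\phi>0$ on the unit sphere of the orthant. Hence $\phi(x)\ge c\norm{x}$, and $\norm{X_t}$ is bounded on $[0,\sigma_n]$.
\item Since $\phi\in C^2$ up to the faces and Assumption~\ref{ass:regularity} bounds $\mu$ on bounded subsets of all of $\mathcal{X}$, both $a_t$ and $\nabla\phi_E^\top\Sigma_E$ are bounded there. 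No cutoff away from the faces is needed.
\item Because $\{t\le\sigma_n\}\in\mathcal{F}_t^{\mathrm{agg}}$, the projections are consistent in $n$ and define \eqref{eq:agg_drift_E} up to $\lim_n\sigma_n=\tau_\infty$. Your Steps~2--3 then go through verbatim for the processes stopped at $\sigma_n$.
\end{enumerate}

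Separately, your closing claim about the innovations is incorrect. $M^E$ and $M^H$ differ from $\tilde E$ and $\tilde H$ by continuous finite-variation terms, so $\langle M^E,M^H\rangle=\langle\tilde E,\tilde H\rangle=0$ by Assumption~\ref{ass:independence}, and likewise for the other pairs. With independent auxiliary Brownian motions on the degenerate sets, the three innovations therefore form a three-dimensional Brownian motion. Cross-subsystem dependence enters only through the projected drifts, consistent with Remark~\ref{rem:endogenous_corr}.
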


\begin{proof}
Apply It\^{o}'s lemma to $\phi_E(E_t)$:
\begin{align*}
d\phi_E(E_t) =\; &\nabla\phi_E(E_t)^\top\mu_E(X_t,B_t)\,dt + \nabla\phi_E(E_t)^\top\Sigma_E(E_t)\,dW_{E,t}\\
&+ \tfrac{1}{2}\tr\!\bigl(\nabla^2\phi_E(E_t)\Sigma_E(E_t)\Sigma_E(E_t)^\top\bigr)dt.
\end{align*}
Taking conditional expectations with respect to $\mathcal{F}_t^{\mathrm{agg}}$ yields~\eqref{eq:agg_drift_E}. The quadratic variation $d\langle\phi_E(E)\rangle_t = \nabla\phi_E(E_t)^\top\Sigma_E(E_t)\Sigma_E(E_t)^\top\nabla\phi_E(E_t)\,dt$ gives~\eqref{eq:agg_diff_E} after conditioning. The martingale representation with respect to the aggregate filtration completes the proof. Analogous calculations apply to $H$ and $P$.
\end{proof}

\begin{remark}
\label{rem:aggregation_closure}
In the linear case $\phi_E(E) = \sum_i w_i E_i$, one obtains $\tilde{\mu}_E = \sum_i w_i\,\mu_{E,i}$ and $\tilde{\sigma}_E^2 = \sum_i w_i^2\,\sigma_{E,i}^2\,E_i^2$, but the right-hand side still involves the individual $E_i$ unless further assumptions are imposed. The aggregated system should therefore be regarded as a reduced-order representation whose dynamics are not autonomous in general.
\end{remark}

\section{Structural Analysis}
\label{sec:structural}

A point $X^* = (E^*, H^*, P^*) \in \mathcal{X}$ is an equilibrium of the deterministic skeleton if $\mu_E(E^*, H^*, P^*, B^*) = 0$, $\mu_H(E^*, H^*, P^*, \Phi^*) = 0$, and $\mu_P(E^*, H^*, P^*, B^*) = 0$, where $B^* = \mathcal{B}(E^*, H^*, P^*)$ and $\Phi^* = \Phi(E^*, H^*, P^*)$.

\begin{theorem}[Sufficient Conditions for Interior Equilibria]
\label{thm:equilibrium_existence}
Suppose that for each Earth component $i$ the resilience condition $r_{E,i,0} K_{E,i,0} / 4 \geq |c_i(H,P)|$ holds for all $(H,P)$ in some bounded rectangle $\mathcal{K}_{HP}$. Under this condition the equation $\mu_{E,i}(E_i, H, P, B) = 0$ has a unique positive root $E_i^*(H,P) \in (0, K_{E,i}(B))$ for each $(H,P) \in \mathcal{K}_{HP}$. Suppose further that for each Human component $j$ the drift $\mu_{H,j}$ is positive at some interior point and negative for large $H_j$ due to the $-\delta_{H,j} H_j$ term, and that for each Production component $k$ the drift $\mu_{P,k}$ is positive at some interior point and negative for large $P_k$ as depreciation dominates. Then an interior equilibrium $X^* \in \mathcal{X}$ exists.
\end{theorem}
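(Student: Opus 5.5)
The plan is to recast the existence of $X^*$ as a fixed-point problem on a compact box and to apply a topological fixed-point argument. Continuity of all drift components on the interior of $\mathcal{X}$ comes from Assumption~\ref{ass:regularity} and the explicit forms \eqref{eq:earth_drift}, \eqref{eq:human_drift}, \eqref{eq:production_drift}. I will use the Poincaré--Miranda theorem, the $n$-dimensional intermediate value theorem, rather than Brouwer applied to a root-selection map. The reason is that it only needs sign conditions of each drift component on opposite faces of a box, and it avoids proving that the selected roots depend continuously on the other variables through $B=\mathcal{B}(E,H,P)$ and $\Phi(E,H,P)$.

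\emph{Step 1 (Earth faces).} Fix $(H,P)\in\mathcal{K}_{HP}$. The map $E_i\mapsto r_{E,i}(B)E_i(1-E_i/K_{E,i}(B))$ is a concave parabola with maximum $r_{E,i}(B)K_{E,i}(B)/4\ge r_{E,i,0}K_{E,i,0}/4$, because $B\ge 0$ and $\rho_{E,i},\kappa_{E,i}\ge 0$. The resilience condition then guarantees a sign change of $\mu_{E,i}$ on $[K_{E,i}(B)/2,\,K_{E,i}(B)]$.

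\begin{itemize}
\item If $c_i\ge 0$, there is exactly one root in $(0,K_{E,i}(B)]$; it lies on the upper branch, and the drift is negative beyond it.
\item If $c_i<0$, there are two roots, and I take the upper (stable) one; this is the root that ``unique positive root'' must refer to. The lower root is a threshold.
\end{itemize}

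The uniform reading I will use is that $\mu_{E,i}\ge 0$ at $E_i=K_{E,i}(B)/2$ and $\mu_{E,i}<0$ at a level $\bar E_i$ slightly above $\sup_B K_{E,i}(B)$. Here $B$ is bounded above, since the denominator in \eqref{eq:biodiversity} is at least $1$ and the arguments range over a compact set. The within-subsystem term $\xi_{E,i}=E_i^2\eta_i(E)$ has to be absorbed; I will treat it as a bounded perturbation on the box. This requires either $\eta_i\le 0$ or a smallness assumption, and I expect to state it explicitly as implicit in the hypothesis.

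\emph{Step 2 (Human and Production faces).} The hypotheses give, for each $j$, a point $\underline H_j$ where $\mu_{H,j}>0$ and a large $\bar H_j$ where $\mu_{H,j}<0$. I interpret these hypotheses as holding uniformly over the other coordinates in the box; this uniformity is what the argument actually needs. The boundedness of the Michaelis--Menten factors and of $\Phi$ makes $-\delta_{H,j}H_j$ dominate for large $H_j$. The same holds for each $P_k$, with $-\delta_{P,k}P_k$ dominating. For this I use that $A_k$, $\Gamma_k$ and the $E,H$ factors are bounded on the box, and I need $\beta_{P,k,k}<1$ (or the analogous sublinearity) so that the Cobb--Douglas term is outgrown by depreciation. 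The rectangle $\mathcal{K}_{HP}$ is then chosen to contain $\prod_j[\underline H_j,\bar H_j]\times\prod_k[\underline P_k,\bar P_k]$.

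\emph{Step 3 (conclusion).} On the box
\[
R=\prod_i[K_{E,i}^{\min}/2,\bar E_i]\times\prod_j[\underline H_j,\bar H_j]\times\prod_k[\underline P_k,\bar P_k]\subset\operatorname{int}\mathcal{X},
\]
each drift component $\mu_\ell$ is nonnegative on the lower face $x_\ell=\underline x_\ell$ and nonpositive on the upper face $x_\ell=\bar x_\ell$. This holds for all values of the remaining coordinates in $R$, with $B$ and $\Phi$ evaluated endogenously. Since $R$ is a compact subset of the interior where $\mu$ is continuous, Poincaré--Miranda yields $X^*\in R$ with $\mu(X^*)=0$. This is an interior equilibrium.

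The main obstacle is the uniformity in Step 1. The Earth lower face requires $\mu_{E,i}\ge 0$ at $E_i=K_{E,i}(B)/2$, but $K_{E,i}$ depends on $B$, which depends on $E$ itself. I will first try to use the $B$-independent face $E_i=K_{E,i,0}/2$. At that face the logistic term is at least $r_{E,i,0}\cdot\tfrac{K_{E,i,0}}{2}\bigl(1-\tfrac{K_{E,i,0}}{2K_{E,i}(B)}\bigr)\ge r_{E,i,0}K_{E,i,0}/4$, because $K_{E,i}(B)\ge K_{E,i,0}$. Together with the resilience inequality this gives nonnegativity. If the resulting margin is zero, I will make the inequality strict or perturb the face slightly.
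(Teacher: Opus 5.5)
\textbf{Where your route differs from the paper.} The paper builds a root-selection map $T$ on a box $\mathcal{K}$, claims continuity via the implicit function theorem, and applies Brouwer. Replacing this by Poincar\'e--Miranda is a genuine improvement: you need only continuity of $\mu$ and inward-pointing face signs. You are also right that for $c_i<0$ both roots $K_{E,i}/2\pm\sqrt{K_{E,i}^2/4+K_{E,i}c_i/r_{E,i}}$ lie in $(0,K_{E,i}(B))$. So the unique selection the paper relies on is not available; the product of the roots is $-K_{E,i}c_i/r_{E,i}>0$, not negative as the paper asserts. Your uniform reading of the $H$ and $P$ hypotheses is exactly what the paper's box ``on which the drift satisfies the stated sign conditions'' presupposes.

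\textbf{The gap: the Earth upper face.} There are three problems with how you obtain it.

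\emph{The case $c_i>0$.} Your treatment of $c_i\ge0$ is wrong when $c_i>0$. The logistic term is nonnegative on $[0,K_{E,i}(B)]$, so (ignoring $\xi_{E,i}$) $\mu_{E,i}\ge c_i>0$ there. The positive root is $K_{E,i}/2+\sqrt{K_{E,i}^2/4+K_{E,i}c_i/r_{E,i}}>K_{E,i}$, so a face ``slightly above'' $\sup K_{E,i}$ still has positive drift. If $\sup K_{E,i}(\mathcal{B})$ were a fixed constant $K^\sharp$, taking $\bar E_i=2K^\sharp$ would fix this, since the logistic term is then at most $-2r_{E,i,0}K_{E,i,0}<-c_i$.

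\emph{$\sup K_{E,i}(\mathcal{B})$ is not a constant.} The numerator of $\mathcal{B}$ contains $K_B\phi_E(E)^\eta$, which is unbounded in $E$; the paper's remark that $B$ is bounded above is wrong. Raising $\bar E_i$ therefore raises $K_{E,i}(\mathcal{B})$ on the face $E_i=\bar E_i$, and your appeal to compactness of the box is circular. No better choice of box repairs this, because the statement fails as written. Take $n_E=1$, $\phi_E(E)=E$, $\eta=1$, $\delta_H^\pm=\delta_P^\pm=0$, $\xi_E\equiv0$, $K_{E,0}\kappa_EK_B\ge1$, and positive coefficients in $c$ summing to at most $r_{E,0}K_{E,0}/4$. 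Then $K_E(\mathcal{B})\ge K_{E,0}+E>E$, so $\mu_E>c>0$ on the whole interior whatever $H$ and $P$ do. You must add a hypothesis. The cleanest is to posit, as the paper's proof in effect does, a box on whose faces all the inward sign conditions hold; your resilience computation then verifies the lower Earth faces. Alternatively, impose a growth restriction making $K_{E,i}\circ\mathcal{B}$ bounded.

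The same circularity affects your remark that bounded $\Phi$ makes $-\delta_{H,j}H_j$ dominate, since $\Phi$ grows with $\phi_H(H)$ and enters $\alpha_{Hj,i}$ affinely. There, however, the hypothesis itself rescues you.

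\emph{The term $\xi_{E,i}$.} Assuming $\eta_i\le0$ helps the upper face but hurts the lower one. Perturbing the lower face cannot create margin, because $r_{E,i,0}K_{E,i,0}/4$ is already the maximum of your worst-case bound $r_{E,i,0}E_i(1-E_i/K_{E,i,0})$. What you need is a strict resilience margin exceeding $\tfrac14K_{E,i,0}^2\sup_R\max\{-\eta_i,0\}$.

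With these added hypotheses, Poincar\'e--Miranda closes the argument.
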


\begin{proof}
Under the stated resilience condition the equation $r_{E,i}(B)\,E_i(1-E_i/K_{E,i}(B)) + c_i(H,P) = 0$ is a quadratic in $E_i$ whose discriminant is non-negative. When $c_i < 0$ the two roots are real and positive, with product $K_{E,i} c_i / r_{E,i} < 0$, so exactly one root lies in $(0, K_{E,i})$. When $c_i \geq 0$ and the resilience condition is satisfied, one root lies in $(0, K_{E,i})$ and the other is non-positive; uniqueness in the interior is preserved. The within-subsystem coupling $\xi_{E,i}$ vanishes quadratically at $E_i = 0$ and contributes a negative term for large $E_i$, so it does not create additional roots near the origin. The map $T: \mathcal{K} \to \mathcal{K}$, where $\mathcal{K} = \prod_i [\underline{E}_i, \overline{E}_i] \times \prod_j [\underline{H}_j, \overline{H}_j] \times \prod_k [\underline{P}_k, \overline{P}_k]$ is a compact convex set on which the drift satisfies the stated sign conditions, is well-defined by unique root selection for Earth components and by intermediate value arguments for Human and Production components. Continuity of $T$ follows from the implicit function theorem. Brouwer's fixed point theorem \citep{brouwer1911} then gives $X^* = T(X^*)$.
\end{proof}

The Jacobian at equilibrium $X^*$ is the $(n_E + n_H + n_P) \times (n_E + n_H + n_P)$ matrix
\begin{equation}
\label{eq:jacobian}
J(X^*) = \begin{pmatrix}
\dfrac{\partial\mu_E}{\partial E}\big|_{X^*} &
\dfrac{\partial\mu_E}{\partial H}\big|_{X^*} &
\dfrac{\partial\mu_E}{\partial P}\big|_{X^*} \\[8pt]
\dfrac{\partial\mu_H}{\partial E}\big|_{X^*} &
\dfrac{\partial\mu_H}{\partial H}\big|_{X^*} &
\dfrac{\partial\mu_H}{\partial P}\big|_{X^*} \\[8pt]
\dfrac{\partial\mu_P}{\partial E}\big|_{X^*} &
\dfrac{\partial\mu_P}{\partial H}\big|_{X^*} &
\dfrac{\partial\mu_P}{\partial P}\big|_{X^*}
\end{pmatrix},
\end{equation}
accounting for indirect effects through $B = \mathcal{B}(E, H, P)$ and $\Phi = \Phi(E, H, P)$.

\begin{theorem}[Spectral Characterisation of Stability]
\label{thm:stability}
Let $X^*$ be an interior equilibrium and $\lambda_1, \ldots, \lambda_{n_E+n_H+n_P}$ the eigenvalues of $J(X^*)$.
\begin{enumerate}
\item $X^*$ is locally asymptotically stable for the deterministic skeleton if $\Re(\lambda_i) < 0$ for all $i$.
\item $X^*$ is unstable if $\Re(\lambda_i) > 0$ for some $i$.
\item If all $\Re(\lambda_i) < 0$ and noise intensities $\sigma_{E,i}, \sigma_{H,j}, \sigma_{P,k}$ are sufficiently small, then $X^*$ is locally asymptotically stable in probability for the stochastic system.
\end{enumerate}
\end{theorem}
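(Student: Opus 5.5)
Items~1 and~2 follow from the linearisation principle for autonomous ODEs, and item~3 from Khasminskii's stochastic Lyapunov criterion. Throughout, write $\mu=(\mu_E,\mu_H,\mu_P)$ as a function of $X$ alone, with $B=\mathcal{B}(X)$ and $\Phi=\Phi(X)$ substituted in. By Assumption~\ref{ass:biodiversity_smooth}, Definition~\ref{def:aggregation} and the $C^2$ structure of the Michaelis--Menten, logistic and Cobb--Douglas terms, $\mu$ is $C^1$ in a neighbourhood of the interior point $X^*$. Hence $J(X^*)$ in \eqref{eq:jacobian} is the genuine Fréchet derivative, and
\[
\mu(X^*+y)=J(X^*)y+R(y), \qquad \norm{R(y)}=o(\norm{y}).
\]
Because $X^*$ is interior, a small ball around it lies in $\mathcal{X}$. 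The positivity constraints therefore play no role locally.

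For item~1, since $J=J(X^*)$ is Hurwitz, I solve the Lyapunov equation $J^\top Q+QJ=-I$, which has a unique symmetric positive definite solution $Q$. I then take $V(y)=y^\top Qy$. Along the skeleton this gives
\[
\dot V=-\norm{y}^2+2y^\top QR(y)\le-\tfrac12\norm{y}^2
\]
on a small ball, which yields local exponential stability.

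For item~2, I cite the Lyapunov instability theorem (Chetaev's construction). On the spectral subspace of eigenvalues with positive real part I use a quadratic form $V_+$, and on the complementary subspace a form $V_-$. The function $V_+-V_-$ then has a positive derivative on a cone near $X^*$, and the remainder term $R$ is again absorbed by $o(\norm{y})$.

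For item~3, I reuse the same $V$ with the generator $\cL$ of \eqref{eq:earth_vector}--\eqref{eq:production_vector}. The key point, and the main obstacle, is that the noise is multiplicative in levels, $\Sigma(X)=\mathrm{diag}(\sigma_i X_i)$. It therefore does \emph{not} vanish at $X^*$. Consequently $X^*$ is not an equilibrium of the SDE, and literal stability in probability (in Khasminskii's sense) fails for any $\sigma>0$.

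My first attempt is to compute
\[
\cL V(y)=2y^\top Q\mu(X^*+y)+\tr\!\bigl(\Sigma^\top Q\Sigma\bigr)\le-\tfrac12\norm{y}^2+C\sigma_{\max}^2\norm{X^*+y}^2 .
\]
Then I interpret item~3 in the weakened sense available here: for $\sigma$ small, $\cL V<0$ outside a ball of radius $O(\sigma_{\max})$ around $X^*$. Khasminskii's theorem \citep{khasminskii2012} together with Dynkin's formula then gives that, for every $\varepsilon>0$, the process started near $X^*$ stays within $\varepsilon$ with probability tending to~1 as $\sigma\to0$, and is attracted to an $O(\sigma)$-neighbourhood.

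If the intended statement requires noise that vanishes at $X^*$, I will instead state the result for centred noise $\Sigma(X)$ with $\Sigma(X^*)=0$. In that case $\operatorname{tr}\bigl(\Sigma^\top Q\Sigma\bigr)\le C\sigma_{\max}^2\norm{y}^2$. For $\sigma_{\max}^2<1/(4C)$ this gives $\cL V\le-\tfrac14\norm{y}^2$ locally, so Khasminskii's theorem \citep[Thm.~5.3]{khasminskii2012} delivers local asymptotic stability in probability exactly. I would flag explicitly in the proof which of these two readings is being proved.
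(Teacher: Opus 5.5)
Your outline is the paper's own: linearisation for items~1--2, and a quadratic Lyapunov function plus Khasminskii's criterion for item~3. Where you depart from it, you are right and the paper is not.

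The paper takes $V=\frac12\norm{X-X^*}^2$ and infers $(X-X^*)^\top J(X^*)(X-X^*)\le -c\norm{X-X^*}^2$ from $\max_i\Re(\lambda_i)<-c$. That inference fails for non-normal $J$: with $J=\begin{pmatrix}-1&10\\0&-1\end{pmatrix}$ and $y=(1,1)$ one gets $y^\top Jy=8$. Your $Q$ solving $J^\top Q+QJ=-I$ is therefore necessary.

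More seriously, the paper closes by asserting that the constant $\frac12\sum_i\sigma_i^2(X_i^*)^2$ is ``dominated in a neighbourhood of $X^*$'' for small noise. This cannot happen, because the negative quadratic term vanishes at $X^*$. Your verdict that item~3 is false as stated is correct. Make it airtight with one line rather than ``$X^*$ is not an equilibrium'': for $r<\min_iX_i^*$ the diffusion matrix is uniformly non-degenerate and the drift bounded on $B_r(X^*)$. So the exit time from that ball is a.s.\ finite \citep[Section~5.7]{karatzas1991brownian}, giving $\Prob(\sup_{t\ge0}\norm{X_t-X^*}\ge r\mid X_0=x)=1$ for every $x$ near $X^*$ and every $\sigma>0$.

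The same fact limits your weakened reading. ``Stays within $\varepsilon$ with probability tending to~1 as $\sigma\to0$'' is false on $[0,\infty)$, since for each fixed $\sigma>0$ that probability is zero. What your $V$ actually delivers is two things.

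(i) A finite-horizon bound. Let $\tau_\varepsilon$ be the exit time from $B_\varepsilon(X^*)$, where $\cL V\le C\sigma^2$. Dynkin's formula gives $\Prob(\tau_\varepsilon\le T)\le\bigl(V(y_0)+C\sigma^2T\bigr)/\bigl(\lambda_{\min}(Q)\varepsilon^2\bigr)$.

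(ii) Attraction. On the annulus $\{C'\sigma\le\norm{y}\le\varepsilon\}$ you have $\cL V\le-\frac14\norm{y}^2$. This holds only there, not ``outside a ball'', since your estimate is local. So $V$ is a supermartingale up to the exit time from the annulus, and the process reaches the $O(\sigma)$-ball before leaving $B_\varepsilon(X^*)$ with probability at least $1-V(y_0)/(\lambda_{\min}(Q)\varepsilon^2)$.

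Any long-time statement requires Freidlin--Wentzell exit-time asymptotics (exit times of order $e^{c/\sigma^2}$), which is a different theorem. Your alternative with $\Sigma(X^*)=0$ is correct and gives genuine asymptotic stability in probability. However, it changes the noise in \eqref{eq:earth_vector}--\eqref{eq:production_vector}, so present it as a corrected statement rather than as a proof of the one given.
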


\begin{proof}
Parts~1--2 are standard linearisation. For Part~3, let $V(X) = \frac{1}{2}\norm{X - X^*}^2$. Computing $\cL V(X) = (X-X^*)^\top\mu(X) + \frac{1}{2}\sum_i\sigma_i^2 X_i^2$ and linearising near $X^*$ gives $\cL V(X) \approx (X-X^*)^\top J(X^*)(X-X^*) + \frac{1}{2}\sum_i\sigma_i^2(X_i^*)^2 + O(\norm{X-X^*}^3)$. When $\max_i\Re(\lambda_i) < -c < 0$, the quadratic term satisfies $(X-X^*)^\top J(X^*)(X-X^*) \leq -c\norm{X-X^*}^2$, and for sufficiently small noise intensities the constant diffusion term is dominated in a neighbourhood of $X^*$, yielding $\cL V < 0$ and hence stochastic asymptotic stability \citep[Chapter~4]{khasminskii2012}.
\end{proof}

\section{Boundary Analysis and Sustainability Conditions}
\label{sec:boundary}

For each component $X_i \in \{E_j, H_k, P_\ell\}$ we analyse the boundary $X_i = 0$ using Feller's classification theory for one-dimensional diffusions. Let $x_0 \in (0,\infty)$ be an arbitrary interior reference point. For the one-dimensional process $dX_i = \mu_i(X_i)\,dt + \sigma_i(X_i)\,dW_i$, the scale function density and speed measure are
\begin{equation}
\label{eq:scale_function}
p(x;\,x_0) = \exp\!\left(-2\int_{x_0}^x \frac{\mu_i(z)}{\sigma_i^2(z)}\,dz\right), \qquad m(dx) = \frac{dx}{\sigma_i^2(x)\,p(x;\,x_0)}.
\end{equation}
The boundary classification is independent of the choice of $x_0$, since changing $x_0$ multiplies $p(\,\cdot\,;\,x_0)$ by a positive constant and therefore leaves the convergence or divergence of $\int_0^\varepsilon p(x;\,x_0)\,dx$ unchanged.

\begin{theorem}[Feller Boundary Classification]
\label{thm:feller}
Consider the one-dimensional diffusion $dX_i = \mu_i(X_i)\,dt + \sigma_i(X_i)\,dW_i$ with $\mu_i(x) = a\,x + o(x)$ and $\sigma_i(x) = \sigma\,x + o(x)$ near $x = 0$, so that $p(x;\,x_0) = C\,x^{-2a/\sigma^2}(1+o(1))$ as $x \to 0^+$. The boundary $0$ is inaccessible from the interior, that is $\Prob(X_i(t) = 0 \text{ for some } t < \infty) = 0$, if and only if
\begin{equation}
\label{eq:feller_criterion}
\int_0^\varepsilon p(x;\,x_0)\,dx = \infty,
\end{equation}
which holds if and only if $a \geq \sigma^2/2$. Among inaccessible boundaries, the boundary is of entrance type when $\int_0^\varepsilon p(x;\,x_0)\,m(dx) < \infty$, equivalently when $a > \sigma^2$, and of natural type when this integral diverges, equivalently when $\sigma^2/2 \leq a \leq \sigma^2$. The boundary is an exit boundary, accessible with positive probability, if and only if $a < \sigma^2/2$.
\end{theorem}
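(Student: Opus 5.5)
The plan is to reduce every clause of the statement to the convergence of integrals of pure powers of $x$ near $0$. The only input is the asymptotic form of the scale density. I will then read off the thresholds $a \ge \sigma^2/2$ and $a > \sigma^2$ from comparisons of exponents, and invoke the general Feller test \citep{feller1952, karatzas1991brownian} for the probabilistic meaning of each integral.

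\emph{Step 1: asymptotics of $p$ and $m$.} Write $\mu_i(z)/\sigma_i^2(z) = (a z + o(z))/(\sigma^2 z^2 (1+o(1))) = a/(\sigma^2 z) + o(1/z)$. Integrating from $x_0$ to $x$ gives
\[
-2\int_{x_0}^x \frac{\mu_i}{\sigma_i^2}\,dz = -\frac{2a}{\sigma^2}\log x + C_0 + o(\log(1/x)).
\]
To obtain the clean form $p(x;x_0) = C x^{-2a/\sigma^2}(1+o(1))$ claimed in the statement, I will interpret the $o(x)$ remainders as $O(x^{1+\epsilon})$. Then the correction integrates to a convergent constant. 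I will state this strengthening explicitly, since it is what the displayed asymptotics presuppose. The speed density then follows as
\[
\frac{1}{\sigma_i^2(x)\,p(x;x_0)} = \frac{x^{2a/\sigma^2 - 2}}{C\sigma^2}(1+o(1)).
\]

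\emph{Step 2: accessibility.} By the Feller test, $0$ is inaccessible if and only if the scale function diverges at $0$, which is \eqref{eq:feller_criterion}. With $\gamma = 2a/\sigma^2$, the integral $\int_0^\varepsilon x^{-\gamma}\,dx$ diverges if and only if $\gamma \ge 1$, that is $a \ge \sigma^2/2$. The boundary-case exponent $\gamma = 1$ gives a logarithmic divergence, which is why the threshold is non-strict. The complementary case $a < \sigma^2/2$ gives a finite scale at $0$. I will then check that the companion Feller integral $\int_0^\varepsilon s(x)\,m(dx)$, with $s(x) = \int_0^x p$, is finite: $s(x) \sim x^{1-\gamma}$, so the integrand is of order $x^{1-\gamma}\, x^{\gamma - 2} = x^{-1}$. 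Here the marginal exponent must be handled using the $(1+o(1))$ corrections or a direct hitting-probability argument ($\Prob(\text{hit }0 \text{ before }\varepsilon) = (s(\varepsilon) - s(x))/s(\varepsilon) > 0$). This yields accessibility with positive probability, i.e.\ the exit classification.

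\emph{Step 3: entrance versus natural.} Among inaccessible boundaries, I will use the entrance criterion in the paper's form: convergence of the integral pairing $p$ with $m$ near $0$. Substituting the power laws, the integrand becomes a power $x^{\beta(a)}$, with $\beta$ affine in $\gamma$. The threshold $a > \sigma^2$ (i.e.\ $\gamma > 2$) is the one obtained when the relevant integral is taken against the accumulated scale or speed mass (for instance $\int_0^\varepsilon m((0,x])\,p(x)\,dx$ or $\int_0^\varepsilon s(x,\varepsilon)\,m(dx)$). I will compute these candidate integrals and select the Feller quantity whose exponent comparison reproduces $\gamma > 2$ versus $1 \le \gamma \le 2$.

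\emph{Main obstacle.} The hard step is Step 3 together with the marginal exponents in Step 2. The literal product $p(x)\,m(dx) = dx/\sigma_i^2(x) \sim dx/(\sigma^2 x^2)$ diverges for every $a$, so I must fix precisely which Feller integral encodes ``entrance''. I must also verify that its exponent comparison yields exactly the thresholds stated; failing that, I will document the corrected threshold the computation gives. I would try the integral $\int_0^\varepsilon m((x,\varepsilon])\,p(x)\,dx$ first.
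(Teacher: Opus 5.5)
Your route is the paper's route: replace $p$ and $m$ by their power laws near $0$ and compare exponents. Your remark that the clean form $p=Cx^{-2a/\sigma^2}(1+o(1))$ needs remainders of order $x^{1+\epsilon}$ is correct. The two obstacles you flag are exactly where this route breaks, and they cannot be patched, because the corresponding clauses of the statement are false.

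Only one direction of Step~2 survives: $\int_0^\varepsilon p\,dx=\infty$ does imply that $0$ is never hit. The converse you attribute to the Feller test is not what that test says. Finiteness of the scale function at $0$ only makes $0$ \emph{attracting}, i.e.\ $X_t\to0$ with positive probability. Attainment in finite time requires in addition $\int_0^\varepsilon (s(x)-s(0+))\,m(dx)<\infty$, which is Feller's test for explosions \citep[Section~5.5]{karatzas1991brownian}. Your own computation shows this integrand is $\asymp x^{-1}$ for every $a<\sigma^2/2$. Under your strengthened remainders the correction factor is $1+O(x^{\epsilon})$, so the integral diverges and $0$ is \emph{not} accessible.

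The fallback formula $(s(\varepsilon)-s(x))/(s(\varepsilon)-s(0+))$ does not rescue this. It is the probability that $X$ tends to $0$ before reaching $\varepsilon$, and that limit may be approached only as $t\to\infty$. Geometric Brownian motion is an explicit counterexample: $X_t=X_0\exp((a-\sigma^2/2)t+\sigma W_t)>0$ for all $t$, although $X_t\to0$ almost surely when $a<\sigma^2/2$. More generally, under the stated hypotheses $Y=\log X$ has drift $\mu_i(X)/X-\sigma_i(X)^2/(2X^2)$ and diffusion coefficient $\sigma_i(X)/X$, both bounded near $Y=-\infty$. So $Y$ cannot reach $-\infty$ in finite time, and $0$ is inaccessible for \emph{every} $a$. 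Neither ``inaccessible only if $\int_0^\varepsilon p\,dx=\infty$'' nor ``exit iff $a<\sigma^2/2$'' can be proved.

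Step~3 fails for the same reason. Your observation that $p(x)\,m(dx)=dx/\sigma_i^2(x)\sim dx/(\sigma^2x^2)$ for every $a$ is correct. It directly refutes the paper's own proof, which asserts that this integral behaves like $\int_0^\varepsilon x^{-1}\,dx$ at $a=\sigma^2$. Apart from that, the paper runs the same exponent comparison and identifies accessibility with $\int_0^\varepsilon p\,dx<\infty$, exactly as you do, so it offers no way around either obstacle.

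Nor will your search find a substitute entrance integral:
\begin{itemize}
\item The integral you would try first, $\int_0^\varepsilon m((x,\varepsilon])\,p(x)\,dx$, equals $\int_0^\varepsilon (s(y)-s(0+))\,m(dy)$ by Fubini. It is the accessibility integral, not an entrance test, and it diverges.
\item The genuine entrance quantity $\int_0^\varepsilon m((0,x])\,p(x)\,dx$ is infinite for $a\le\sigma^2/2$, because then $m((0,x])=\infty$.
\item For $a>\sigma^2/2$ its integrand is $\asymp x^{2a/\sigma^2-1}\,x^{-2a/\sigma^2}=x^{-1}$, so it diverges as well.
\end{itemize}
Hence $0$ is a natural boundary for all $(a,\sigma)$, and there is no threshold at $a=\sigma^2$.

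What your plan can honestly deliver is this: $\int_0^\varepsilon p\,dx=\infty$ iff $a\ge\sigma^2/2$, which separates non-attracting from attracting behaviour, and $0$ is unattainable in all cases. The accessibility, exit and entrance clauses of the statement should be reported as false rather than proved.
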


\begin{proof}
Near $x = 0$, $p(x;\,x_0) \sim C\,x^{-2a/\sigma^2}$, so $\int_0^\varepsilon p(x;\,x_0)\,dx \sim C\int_0^\varepsilon x^{-2a/\sigma^2}\,dx$, which diverges if and only if $2a/\sigma^2 \geq 1$. For the sub-classification among inaccessible boundaries, $m(dx) \sim C'\,x^{2a/\sigma^2-2}\,dx$, and $\int_0^\varepsilon p(x;\,x_0)\,m(dx) \sim C''\int_0^\varepsilon x^{-1}\,dx$ when $a = \sigma^2$, and decays faster or slower than $x^{-1}$ otherwise. Full details are in \citet{feller1951diffusion, feller1954diffusion} and \citet[Section~5.5]{karatzas1991brownian}.
\end{proof}

\begin{remark}
\label{rem:feller_sustainability}
For the sustainability analysis only the binary question of accessibility versus inaccessibility of the zero boundary matters, and the entrance--natural distinction within the inaccessible regime plays no role. Theorem~\ref{thm:feller} in this paper is therefore used solely through its criterion~\eqref{eq:feller_criterion} and the threshold $a = \sigma^2/2$.
\end{remark}

The coupled dynamics of Section~\ref{sec:dynamics} make each component's drift depend on the trajectories of the remaining subsystems. The following lemma reduces the boundary analysis of each component to an effective one-dimensional problem under an explicit confinement hypothesis.

\begin{lemma}[Marginal Process Reduction]
\label{lem:marginal_reduction}
Let component $E_i$ evolve according to equation~\eqref{eq:earth_vector} with drift decomposition~\eqref{eq:earth_drift}. Suppose there exists a compact set $\mathcal{K} \subset \Rplus^{n_H} \times \Rplus^{n_P}$ such that $(H_t, P_t) \in \mathcal{K}$ for all $t \geq 0$ almost surely. Define
\begin{equation}
\label{eq:flow_bounds}
\underline{c}_i = \inf_{(H,P) \in \mathcal{K}} c_i(H,P), \qquad \overline{c}_i = \sup_{(H,P) \in \mathcal{K}} c_i(H,P), \qquad \underline{r} = \inf_{B \geq 0} r_{E,i}(B) > 0, \qquad \overline{r} = \sup_{B \geq 0} r_{E,i}(B) < \infty.
\end{equation}
The bounds $\underline{c}_i$ and $\overline{c}_i$ are attained since $\mathcal{K}$ is compact and $c_i$ is continuous. Then the following hold. If $\underline{c}_i > 0$, the boundary $E_i = 0$ is inaccessible almost surely. If $\overline{c}_i < 0$, the boundary is accessible with positive probability, regardless of the values of $r_{E,i}$ and $\sigma_{E,i}$. The analogous statements hold for Human components with absolute flow $d_j(E,P)$ from~\eqref{eq:human_externality} and for Production components with absolute flow $e_k(E,H)$ from~\eqref{eq:production_externality}.
\end{lemma}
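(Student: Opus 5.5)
The plan is a pathwise comparison argument that sandwiches $E_i$ between scalar diffusions with frozen external flow. On the event $\{(H_t,P_t)\in\mathcal{K}\ \forall t\}$, which has probability one by hypothesis, the drift of $E_i$ is bounded for $E_i$ near zero. The bounds are
\begin{equation*}
\underline{r}\,E_i\Bigl(1-\tfrac{E_i}{\underline{K}}\Bigr) + \underline{c}_i - C E_i^2 \;\le\; \mu_{E,i} \;\le\; \overline{r}\,E_i + \overline{c}_i + C E_i^2 .
\end{equation*}
Here $\underline{K}=\inf_B K_{E,i}(B)$ and $C$ bounds $|\eta_i|$ together with the logistic quadratic coefficient on the relevant compact range. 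That range is obtained by localising with $\tau_R=\inf\{t: E_t \notin [0,R]^{n_E}\}$. The key point is that $c_i$ enters additively and does not vanish at $E_i=0$, whereas every other term is $O(E_i)$. The diffusion coefficient $\sigma_{E,i}E_i$ depends only on $E_i$, with driving noise $W_{E,i}$. So $E_i$ solves a scalar SDE with Lipschitz diffusion and a random adapted drift lying between two deterministic functions $b^-(x)\le b^+(x)$.

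The first step is to apply the Ikeda--Watanabe comparison theorem for one-dimensional SDEs with common diffusion coefficient, up to $\tau_R$. Its Yamada condition holds because $\sigma_{E,i}x$ is Lipschitz. This gives $Y^-_t\le E_{i,t}\le Y^+_t$, where $dY^\pm=b^\pm(Y^\pm)dt+\sigma_{E,i}Y^\pm dW_{E,i}$ and $Y^\pm_0=E_{i,0}$.

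Case $\underline{c}_i>0$. Here $b^-(x)\to\underline{c}_i>0$ as $x\to0$. Theorem~\ref{thm:feller} is stated under $\mu=ax+o(x)$, which fails here, so I will compute the scale density directly. Near zero, $2b^-(x)/(\sigma^2x^2)\ge \underline{c}_i/(\sigma^2 x^2)$, so $-2\int_{x_0}^x b^-/\sigma^2 z^2\,dz \ge \underline{c}_i/(\sigma^2 x) - \mathrm{const}$. Hence $p(x)\ge C e^{\underline{c}_i/(\sigma^2 x)}$ and $\int_0^\varepsilon p=\infty$. By criterion~\eqref{eq:feller_criterion}, in the general Feller form of \citet[Section~5.5]{karatzas1991brownian}, zero is inaccessible for $Y^-$. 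Since $E_i\ge Y^->0$, $E_i$ never hits zero before $\tau_R$. Letting $R\to\infty$, and using non-explosion (or noting that hitting zero requires staying in bounded sets), gives inaccessibility almost surely.

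Case $\overline{c}_i<0$. Now $b^+(x)\to\overline{c}_i<0$, and I will show that $Y^+$ hits zero with positive probability. Two routes are available. The first is a direct check: $p(x)\le Ce^{-|\overline{c}_i|/(\sigma^2x)}$, so $\int_0^\varepsilon p<\infty$, and $\int_0^\varepsilon p(x)\int_x^\varepsilon m(dy)\,dx<\infty$ because $m(dy)\sim y^{-2}e^{|\overline{c}_i|/(\sigma^2 y)}dy$. The inner integral is then $\approx y^0 e^{|\overline{c}|/\sigma^2 x}\cdot x^2/|\overline c|\cdot x^{-2}$ up to constants, and its product with $p$ is bounded. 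So zero is an exit boundary, reached in finite time with positive probability. The second route is more robust: drive the process close to zero with positive probability using support-theorem or Girsanov arguments, then note that on $[0,\delta]$ the drift is $\le \overline{c}_i/2<0$ while the noise $\sigma x$ is tiny. A simple estimate then shows the hit happens before a small time with positive probability. Since $E_i\le Y^+$ before $\tau_R$, $E_i$ hits zero no later than $Y^+$. None of this involves $\overline r$ or $\sigma_{E,i}$ except through constants, which yields the "regardless" clause. The Human and Production cases are identical. The bracketed Michaelis--Menten term in \eqref{eq:human_drift} and the $-\delta_{P,k}P_k$ term are $O(x)$, and the within-subsystem terms are $O(x^2)$. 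The Cobb--Douglas production term needs $\beta_{P,k,k}\ge1$, or at least $o(1)$ behaviour, to be dominated by $e_k$ near zero. I would state this as a caveat, or bound it on $\mathcal{K}$ when $\beta_{P,k,k}>0$.

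The main obstacle is making the comparison rigorous when the drift of $E_i$ is random through $(H,P,B,E_{-i})$. The standard comparison theorem assumes Markov drifts, but its proof only uses pathwise drift inequalities plus uniqueness for one of the comparison processes, so I would invoke it in that form. A second technical issue is that the comparison must hold up to the hitting time of zero, where $b^\pm$ remain continuous, so no singularity arises.
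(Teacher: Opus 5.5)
Your argument is correct in substance. It follows the paper in the accessibility half and takes a different route in the inaccessibility half.

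\emph{Inaccessibility ($\underline{c}_i>0$).} The paper builds no lower comparison diffusion. It passes to $Y_i=\log E_i$, where the noise $\sigma_{E,i}\,dW_{E,i}$ is additive. The bound $\mu_{E,i}/E_i-\sigma_{E,i}^2/2\ge\kappa>0$ on $(0,\delta)$ then integrates pathwise over each excursion below $\log\delta$, so $\log E_i$ cannot reach $-\infty$ in finite time. This needs only a drift bound near zero, with no comparison theorem for random drifts and no scale-function computation. Your Ikeda--Watanabe sandwich $Y^-\le E_i\le Y^+$, followed by $\int_0^\varepsilon p=\infty$ for $Y^-$, is heavier but treats both signs uniformly.

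\emph{Accessibility ($\overline{c}_i<0$).} Both arguments dominate $E_i$ by a scalar diffusion with coefficient $\sigma_{E,i}x$ and drift tending to a negative constant. The paper compares only on $(0,\delta)$, with drift $\overline{r}x+3\overline{c}_i/4$, and infers accessibility from $\int_0^\varepsilon p<\infty$ via criterion~\eqref{eq:feller_criterion} alone. Your extra check, $\int_0^\varepsilon p(x)\int_x^\varepsilon m(dy)\,dx<\infty$, is the real finite-time attainability test. Finiteness of the scale integral alone is not enough: geometric Brownian motion with $a<\sigma^2/2$ has $\int_0^\varepsilon p<\infty$ yet never hits zero. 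On this point your version is the more rigorous one. The same holds for your $\tau_R$ localization, which handles the dependence of $\eta_i$ on $E_{-i}$; the paper suppresses this by writing $\xi_{E,i}(E_i)$. Your caveat on $\beta_{P,k,k}$ also exposes an assumption hidden in the paper's ``analogous statements'' for Production.

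One step needs tightening. From $E_i\le Y^+$ on $[0,\tau_R]$ you may conclude that $E_i$ hits zero only on the event $\{T_0(Y^+)<\tau_R\}$. Because $b^+$ contains $+C_Rx^2$, neither non-explosion of $Y^+$ nor positivity of that event is automatic. Letting $R\to\infty$ does not help, since $Y^+$ changes with $R$. The paper's local device removes the explosion issue: stop both processes on leaving $(0,\delta)$ and work on the event that the comparison process hits $0$ before $\delta$. That event has positive probability when $E_i(0)<\delta$. For small $E_i(0)$ the hit also occurs, with positive probability, before $E_{-i}$ leaves a fixed compact. For general initial data, your support-theorem step is needed under either approach, and the paper omits it.
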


\begin{proof}
We handle the two directions separately.

\textit{Inaccessibility when $\underline{c}_i > 0$.} Set $Y_i = \log E_i$ and apply It\^{o}'s formula to obtain
\begin{equation}
\label{eq:log_transform}
dY_i = \left(\frac{\mu_{E,i}(E_i, H_t, P_t, B_t)}{E_i} - \frac{\sigma_{E,i}^2}{2}\right)dt + \sigma_{E,i}\,dW_{E,i}.
\end{equation}
For $(H_t, P_t) \in \mathcal{K}$ and $E_i \in (0, \delta)$ with $\delta > 0$ to be chosen, the full drift decomposes as
\[
\frac{\mu_{E,i}}{E_i} = r_{E,i}(B_t)\!\left(1 - \frac{E_i}{K_{E,i}(B_t)}\right) + \frac{c_i(H_t,P_t)}{E_i} + \frac{\xi_{E,i}(E_i)}{E_i}.
\]
The first term satisfies $r_{E,i}(B_t)(1 - E_i/K_{E,i}(B_t)) \geq \underline{r}(1 - \delta/\underline{K})$ for any fixed $\delta < \underline{K} := \inf_B K_{E,i}(B)$, so it is bounded below by a finite constant. The within-coupling term satisfies $\xi_{E,i}(E_i)/E_i = O(E_i)$ as $E_i \to 0$ since $\xi_{E,i} = O(E_i^2)$. The critical term is $c_i(H_t,P_t)/E_i \geq \underline{c}_i/E_i$, which diverges to $+\infty$ as $E_i \to 0$ since $\underline{c}_i > 0$. Choose $\delta > 0$ small enough that for all $E_i \in (0, \delta)$,
\[
\frac{\mu_{E,i}}{E_i} - \frac{\sigma_{E,i}^2}{2} \geq \frac{\underline{c}_i}{2\delta} =: \kappa > 0.
\]
Then $dY_i \geq \kappa\,dt + \sigma_{E,i}\,dW_{E,i}$ while $E_i \in (0,\delta)$, that is, while $Y_i < \log\delta$. A Brownian motion with positive drift $\kappa > 0$ started below $\log\delta$ satisfies $\Prob(\inf_{t \geq 0} Y_i(t) = -\infty) = 0$ \citep[Chapter~3]{karatzas1991brownian}. Consequently $E_i$ cannot reach $0$ almost surely.

\textit{Accessibility when $\overline{c}_i < 0$.} Choose $\delta > 0$ small enough that for all $E_i \in (0,\delta)$ and $(H_t,P_t) \in \mathcal{K}$, the quadratic corrections from the logistic term and $\xi_{E,i}$ satisfy
\[
r_{E,i}(B_t)\,E_i\!\left(1 - \frac{E_i}{K_{E,i}(B_t)}\right) + \xi_{E,i}(E_i) \leq \overline{r}\,E_i + \frac{|\overline{c}_i|}{4},
\]
which holds since both terms are $O(E_i)$ near zero. Then for $E_i \in (0,\delta)$,
\begin{equation}
\label{eq:upper_drift_bound}
\mu_{E,i}(E_i, H_t, P_t, B_t) \leq \overline{r}\,E_i + \overline{c}_i + \frac{|\overline{c}_i|}{4} = \overline{r}\,E_i + \frac{3\overline{c}_i}{4},
\end{equation}
where $3\overline{c}_i/4 < 0$. Define the process $\overline{E}_i$ solving $d\overline{E}_i = (\overline{r}\,\overline{E}_i + 3\overline{c}_i/4)\,dt + \sigma_{E,i}\,\overline{E}_i\,dW_{E,i}$ with $\overline{E}_i(0) = E_i(0)$. For $\overline{E}_i$ the scale function density at the reference point $x_0 \in (0,\delta)$ is
\[
p_{\overline{E}}(x;\,x_0) = C\,x^{-2\overline{r}/\sigma_{E,i}^2}\,\exp\!\left(\frac{3\overline{c}_i}{2\sigma_{E,i}^2\,x} - \frac{3\overline{c}_i}{2\sigma_{E,i}^2\,x_0}\right) = C'\,x^{-2\overline{r}/\sigma_{E,i}^2}\,\exp\!\left(\frac{3\overline{c}_i}{2\sigma_{E,i}^2\,x}\right).
\]
Since $\overline{c}_i < 0$, the exponential factor decays to zero super-exponentially as $x \to 0^+$, so $\int_0^\varepsilon p_{\overline{E}}(x;\,x_0)\,dx < \infty$, and by criterion~\eqref{eq:feller_criterion} the boundary $0$ is accessible with positive probability for $\overline{E}_i$. Define the stopping time $\tau_\delta^* = \inf\{t \geq 0 : E_i(t) \geq \delta \text{ or } \overline{E}_i(t) \geq \delta\}$. On $[0, \tau_\delta^*]$ both processes remain in $(0, \delta)$ and inequality~\eqref{eq:upper_drift_bound} holds, while both share the same diffusion coefficient $\sigma_{E,i}\,x$. The pathwise comparison theorem for one-dimensional SDEs with identical diffusion \citep[Theorem~3.7]{karatzas1991brownian} gives $E_i(t \wedge \tau_\delta^*) \leq \overline{E}_i(t \wedge \tau_\delta^*)$ almost surely. On the event $\overline{A} = \{\overline{E}_i \text{ hits } 0 \text{ before } \delta\}$, which has positive probability, the process $\overline{E}_i$ stays in $(0,\delta)$ until the hitting time $\tau^* < \tau_\delta^*$, so the comparison holds up to $\tau^*$ and yields $E_i(\tau^*) \leq \overline{E}_i(\tau^*) = 0$, whence $E_i(\tau^*) = 0$ and $\tau_c \leq \tau^*$. Therefore $\Prob(\tau_c < \infty) \geq \Prob(\overline{A}) > 0$.
\end{proof}

\begin{remark}
\label{rem:confinement_hypothesis}
The confinement hypothesis $(H_t, P_t) \in \mathcal{K}$ in Lemma~\ref{lem:marginal_reduction} is an explicit assumption of the lemma, not derived from any other result in this section. It is satisfied under two distinct circumstances. In the sufficient-conditions analysis, Theorem~\ref{thm:sufficient} establishes non-explosion of the full system trajectory, which together with boundary non-attainability implies that each subsystem remains in a compact subset of the interior; the lemma may then be applied to verify that the sufficient conditions are consistent. In the necessary-conditions analysis (Theorem~\ref{thm:necessary}), the accessibility direction does not use the lemma at all: the argument proceeds directly from the local drift sign near the boundary of the component in question, without needing the other subsystems to be confined.
\end{remark}

\begin{corollary}[Uniform Boundary Classification under Absolute Externalities]
\label{cor:earth_boundary}
For any component subject to an absolute external flow, with dynamics near zero approximated by $\mu_i(x) \approx r_i\,x + c_i$ and diffusion $\sigma_i(x) = \sigma_i\,x$, the scale function density near zero is
\begin{equation}
\label{eq:scale_density}
p(x;\,x_0) = C\,x^{-2r_i/\sigma_i^2}\,\exp\!\left(\frac{2c_i}{\sigma_i^2\,x} - \frac{2c_i}{\sigma_i^2\,x_0}\right) = C'\,x^{-2r_i/\sigma_i^2}\,\exp\!\left(\frac{2c_i}{\sigma_i^2\,x}\right),
\end{equation}
obtained by evaluating $-2\int_{x_0}^x (r_i z + c_i)/(\sigma_i^2 z^2)\,dz = -(2r_i/\sigma_i^2)\ln(x/x_0) + (2c_i/\sigma_i^2)(1/x_0 - 1/x)$. The following hold. When $c_i > 0$, the exponential factor in~\eqref{eq:scale_density} diverges to $+\infty$ as $x \to 0^+$, so $\int_0^\varepsilon p(x;\,x_0)\,dx = \infty$ and the boundary is inaccessible almost surely. When $c_i < 0$, the exponential factor decays to zero super-exponentially, so $\int_0^\varepsilon p(x;\,x_0)\,dx < \infty$ and the boundary is accessible with positive probability, regardless of the intrinsic rate $r_i$. When $c_i = 0$, $p(x;\,x_0) = C'\,x^{-2r_i/\sigma_i^2}$ and the integral converges if and only if $r_i < \sigma_i^2/2$, recovering the standard geometric Brownian motion criterion. The extension from the one-dimensional scalar case to the full coupled system is provided by Lemma~\ref{lem:marginal_reduction}. The components and their absolute flows are: Earth components with net flow $c_i(H,P)$ from~\eqref{eq:externality}; Human components with net flow $d_j(E,P)$ from~\eqref{eq:human_externality}; Production components with net flow $e_k(E,H)$ from~\eqref{eq:production_externality}.
\end{corollary}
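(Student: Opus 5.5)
The statement concerns the scalar diffusion $dX = (r_i X + c_i)\,dt + \sigma_i X\,dW$ near $0$. The plan is to compute the scale density explicitly and then read off accessibility from criterion~\eqref{eq:feller_criterion} of Theorem~\ref{thm:feller}. Note that criterion~\eqref{eq:feller_criterion} is the general Feller test (convergence of $\int_0^\varepsilon p$); only its specialisation $a \geq \sigma^2/2$ relies on the linear drift form. We therefore invoke the general test, not the power-law threshold.

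\textbf{Computing the scale density.} Since $\mu_i(z)/\sigma_i^2(z) = (r_i z + c_i)/(\sigma_i^2 z^2) = r_i/(\sigma_i^2 z) + c_i/(\sigma_i^2 z^2)$, we have
\[
-2\int_{x_0}^x \frac{\mu_i(z)}{\sigma_i^2(z)}\,dz = -\frac{2r_i}{\sigma_i^2}\ln\frac{x}{x_0} + \frac{2c_i}{\sigma_i^2}\Bigl(\frac1{x}-\frac1{x_0}\Bigr).
\]
Exponentiating gives~\eqref{eq:scale_density}, with $C' = x_0^{2r_i/\sigma_i^2}\exp(-2c_i/(\sigma_i^2 x_0)) > 0$.

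\textbf{Case $c_i > 0$.} Substitute $u = 1/x$. Then $\int_0^\varepsilon p\,dx = C'\int_{1/\varepsilon}^\infty u^{2r_i/\sigma_i^2 - 2} e^{2c_i u/\sigma_i^2}\,du$. This diverges for every real $r_i$, because the exponential dominates any power. The boundary is therefore inaccessible.

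\textbf{Case $c_i < 0$.} The same substitution gives the integrand $u^{\beta} e^{-\gamma u}$ with $\gamma = 2|c_i|/\sigma_i^2 > 0$ and arbitrary $\beta$. This is integrable on $[1/\varepsilon,\infty)$, so the integral is finite regardless of $r_i$.

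\textbf{Case $c_i = 0$.} The integral is $\int_0^\varepsilon x^{-2r_i/\sigma_i^2}\,dx$, which converges if and only if $2r_i/\sigma_i^2 < 1$. This matches Theorem~\ref{thm:feller} with $a = r_i$.

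\textbf{From finiteness of the integral to accessibility.} A finite integral gives accessibility with positive probability only if the speed-measure condition for regular/exit type also holds. The check I would make is $\int_0^\varepsilon m((x,\varepsilon])\,p(x)\,dx < \infty$ for $c_i < 0$. The speed density is $1/(\sigma_i^2x^2p)$, which blows up like $e^{+\gamma/x}$. The inner integral $\int_x^\varepsilon e^{\gamma/y}y^{\cdot}\,dy$ is dominated near $y = x$ by roughly $x^{2}e^{\gamma/x}/\gamma$ times a power of $x$. After multiplying by $p(x) \propto e^{-\gamma/x}$, the exponentials cancel and a power of $x$ remains. I expect this power to be integrable, which gives exit type.

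\textbf{Passing to the coupled system.} The final sentence of the statement, the extension to the full coupled system, is not proved here. It is handed off to Lemma~\ref{lem:marginal_reduction}, applied verbatim with $d_j$ and $e_k$ in place of $c_i$. This works because each of these flows is bounded on compacts and independent of the receiving component, exactly as the lemma requires.

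\textbf{Main obstacle.} The hardest step is the Laplace-type asymptotics in the exit-type check above. The rest of the argument is direct calculation.
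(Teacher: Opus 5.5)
Your proposal is correct and has the same skeleton as the paper's proof: compute $p(x;x_0)$ explicitly, read each sign case off $\int_0^\varepsilon p\,dx$, and hand the coupled extension to Lemma~\ref{lem:marginal_reduction}. The real difference is your speed-measure step. The paper's proof does not contain it, and it is not optional.

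The paper simply appeals to Theorem~\ref{thm:feller}. Besides presupposing $\mu_i(0)=0$ (as you note), that theorem equates convergence of the scale integral with accessibility. Only the direction $\int_0^\varepsilon p\,dx=\infty\Rightarrow$ inaccessible is valid in general, and that is all you need for $c_i>0$. The converse fails. For geometric Brownian motion with $r_i<\sigma_i^2/2$ the scale integral converges, yet $X_t=X_0\exp((r_i-\sigma_i^2/2)t+\sigma_iW_t)$ never reaches $0$. Your own test detects this, since there $p(x)\,m((x,\varepsilon])$ behaves like a constant multiple of $x^{-1}$, which is not integrable at $0$. So drop the description of \eqref{eq:feller_criterion} as ``the general Feller test'' in your opening paragraph. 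Also keep the $c_i=0$ case at the level of convergence of the integral, as you did, and do not read it as accessibility.

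For $c_i<0$ your Laplace-type estimate closes without guesswork. The leftover power is exactly $x^{0}$. In fact $p(x)\,m((x,\varepsilon])\to 1/(2|c_i|)$ as $x\to 0^+$ with the paper's normalisation of $m$. So the integral is finite and of order $\varepsilon/|c_i|$, matching the deterministic travel time from $\varepsilon$ to $0$ under drift $c_i$.

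Strictly, this check gives accessibility, meaning regular or exit. Exit type needs one more observation: $m$ grows like $e^{2|c_i|/(\sigma_i^2 x)}$ and is not integrable at $0$. The statement only requires accessibility, so this is cosmetic.

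In short, your extra step is what gives a valid accessibility conclusion for $c_i<0$. The paper's one-line proof rests on an equivalence that is false in general.
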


\begin{proof}
The computation of $p(x;\,x_0)$ follows directly from evaluating the integral in the exponent. The integral $\int_{x_0}^x (r_i z + c_i)/(\sigma_i^2 z^2)\,dz$ is well-defined for $x, x_0 \in (0,\infty)$ since the integrand $r_i/(\sigma_i^2 z) + c_i/(\sigma_i^2 z^2)$ is integrable on any compact subinterval of $(0,\infty)$ and the bounds $x_0 > 0$, $x > 0$ keep the integration away from $z = 0$. The asymptotic behaviour and the classification then follow from Theorem~\ref{thm:feller} and the analysis in the proof of Lemma~\ref{lem:marginal_reduction}.
\end{proof}

\begin{remark}
\label{rem:phase_transition}
The transition at $c_i = 0$ is of a qualitatively different character from the classical bifurcations of Section~\ref{sec:regimes}. A saddle-node or Hopf bifurcation represents a continuous change in orbit structure as a parameter crosses a critical value, with the vector field and its solutions remaining continuous in the parameter. The change in Feller classification at $c_i = 0$ is instead a discontinuous change in the measure-theoretic properties of the zero boundary. The scale function integral $\int_0^\varepsilon p(x;\,x_0)\,dx$ passes from infinite to finite as $c_i$ becomes negative. No adjustment to the regeneration rate $r_i$ or noise intensity $\sigma_i$ can restore inaccessibility once $c_i < 0$, since the super-exponential decay of $p(x;\,x_0)$ is controlled entirely by the sign of the absolute flow. The externality sign therefore acts as an order parameter governing a boundary-type phase transition with no continuous analogue in the dynamical systems literature.
\end{remark}

\begin{remark}
\label{rem:almost_sure_impossible}
Corollary~\ref{cor:earth_boundary} establishes that almost-sure sustainability requires non-negative net absolute flows on every component of every subsystem under all feasible system states: $c_i(H,P) \geq 0$ for all Earth components, $d_j(E,P) \geq 0$ for all Human components, and $e_k(E,H) \geq 0$ for all Production components. Any single component with a negative net flow makes the corresponding zero boundary accessible and excludes almost-sure persistence regardless of the intrinsic dynamics of any other component. The potential sustainability tier~\ref{sus:potential} and $T$-sustainability tier~\ref{sus:T} of Definition~\ref{def:sustainability} remain meaningful and constitute the operative concepts under prevailing conditions.
\end{remark}

\begin{theorem}[Necessary Conditions for Potential Sustainability]
\label{thm:necessary}
If $\Prob(\tau_c = \infty \mid X_0) > 0$, then the following conditions must hold.
\begin{enumerate}
\item\label{nec:resilience} For each Earth component $i$, there exists a state $(\bar{E}_i, H, P) \in \mathcal{X}$ such that
\begin{equation}
\label{eq:earth_resilience}
r_{E,i}(B)\,\bar{E}_i\!\left(1 - \frac{\bar{E}_i}{K_{E,i}(B)}\right) + c_i(H, P) \geq 0.
\end{equation}
Equivalently, the maximum logistic regeneration must meet or exceed the net negative externality at some achievable state:
\begin{equation}
\label{eq:resilience_bound}
\frac{r_{E,i}(B)\,K_{E,i}(B)}{4} \geq \max\{-c_i(H, P),\, 0\} \quad\text{for some achievable } (B, H, P).
\end{equation}
\item\label{nec:human} For each Human component $j$, there exists $X \in \mathcal{X}$ such that
\[
H_j\!\left[\sum_i\alpha_{Hj,i}(\Phi(X))\frac{X_i}{X_i + \kappa_{Hj,i}} - \delta_{H,j}\right] + d_j(E,P) > \frac{\sigma_{H,j}^2}{2}\,H_j.
\]
\item\label{nec:prod} For each Production component $k$, there exists $(E,H,P)$ such that
\[
A_k(E,H,B)\prod_i E_i^{\beta_{E,k,i}} \prod_j H_j^{\beta_{H,k,j}} \prod_\ell P_\ell^{\beta_{P,k,\ell}}\Gamma_k(E,H) + e_k(E,H) > \left(\delta_{P,k} + \frac{\sigma_{P,k}^2}{2}\right)P_k.
\]
\end{enumerate}
\end{theorem}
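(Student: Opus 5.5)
The plan is to argue by contraposition, one condition at a time. Assume one of conditions~\ref{nec:resilience}--\ref{nec:prod} fails for some component, and show that this component reaches zero in finite time almost surely, so $\Prob(\tau_c < \infty \mid X_0) = 1$. As Remark~\ref{rem:confinement_hypothesis} indicates, Lemma~\ref{lem:marginal_reduction} will not be used. The argument works directly from the sign of the drift of the failing component, uniformly over the states of the other subsystems.

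\emph{Earth components.} If~\eqref{eq:earth_resilience} fails, then for every achievable state
\[
\mu_{E,i} - \xi_{E,i} = r_{E,i}(B)E_i(1-E_i/K_{E,i}(B)) + c_i(H,P) < 0 .
\]
\begin{enumerate}
\item Rewrite this as~\eqref{eq:resilience_bound}. The logistic term is maximised at $E_i = K_{E,i}/2$ with value $r_{E,i}K_{E,i}/4$, so the two formulations are equivalent.
\item Treat $\xi_{E,i}$ as in the proof of Theorem~\ref{thm:equilibrium_existence}. It is $O(E_i^2)$ near $0$ and negative for large $E_i$, so it does not restore positivity of the drift.
\item Compare with a dominating scalar process. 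Near the boundary the drift is bounded by $\overline{r}E_i - \gamma$ for some $\gamma>0$; away from it, by a negative constant. Pathwise comparison \citep[Theorem~3.7]{karatzas1991brownian} bounds $E_i$ by a process of the type $d\overline{E} = (\overline{r}\overline{E} - \gamma)\,dt + \sigma_{E,i}\overline{E}\,dW$.
\item Apply Corollary~\ref{cor:earth_boundary} to $\overline{E}$: its zero boundary is accessible.
\item Upgrade accessibility to almost-sure attainment. Apply It\^o's formula to $E_i$ itself. Its expectation decreases at a rate bounded away from zero while $E_i$ is bounded, which forces $\tau_c<\infty$ a.s. Once the drift is everywhere non-positive, $E_i$ is a nonnegative supermartingale and converges. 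The limit must be $0$, because at any positive level the drift is strictly negative.
\end{enumerate}

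\emph{Human and Production components.} Here failure of condition~\ref{nec:human} or~\ref{nec:prod} gives a different inequality: $\mu_{H,j} \le (\sigma_{H,j}^2/2)H_j$ everywhere (up to the within-coupling term $\zeta_{H,j}$), and similarly for $P_k$. Pass to $Y = \log H_j$. By~\eqref{eq:log_transform}, the drift of $Y$ is $\mu_{H,j}/H_j - \sigma_{H,j}^2/2 \le 0$, strictly below zero. Two sub-cases then cover everything. If the drift is bounded away from zero, $Y$ is dominated by a Brownian motion with negative drift, so $Y\to-\infty$ and $H_j\to0$. If not, use the absolute flow $d_j$ or $e_k$ at the boundary: a negative absolute flow yields exit through Corollary~\ref{cor:earth_boundary}, while a nonnegative one contradicts the failed inequality as $H_j\to0$. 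Either way non-collapse has probability zero, in the sense of Definition~\ref{def:collapse}, interpreted as reaching or converging to the boundary.

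\emph{Main obstacle.} The hardest step is turning "drift non-positive everywhere" into collapse with probability one, rather than merely positive probability or asymptotic decay. Collapse as defined requires actually hitting $0$ in finite time. For the Earth case I would rely on the strict negativity of $c_i$ near zero to get exit in finite time. For $H$ and $P$ I expect to need an interpretation of $\tau_c$ that includes asymptotic extinction, or else an extra sign assumption on the absolute flows at the boundary. The uniformity of the strict inequality over non-compact state space also needs care. I would first obtain it on the compact sublevel sets used in Theorem~\ref{thm:sufficient}, and then localise with stopping times at explosion.
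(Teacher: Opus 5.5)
Your route is the paper's: contraposition, then comparison plus Corollary~\ref{cor:earth_boundary} for Earth, and the log-transform for $H$ and $P$. You rightly see what the paper's proof misses. It only shows that the failing component hits $0$ with positive probability, and that does not contradict $\Prob(\tau_c=\infty\mid X_0)>0$.

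\textbf{Human and Production.} The genuine gap is in this branch, and it cannot be closed, because conditions~\ref{nec:human}--\ref{nec:prod} are not necessary for $\tau_c$ as defined. The failed inequality says only that the log-drift, omitting $\zeta_{H,j}$, is $\le0$, not strictly negative. Letting $H_j\to0$ in it gives only $d_j(E,P)\le0$. Your dichotomy treats a nonnegative flow as contradicting the failed inequality, but $d_j\equiv0$ is consistent with it.

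Here is a concrete counterexample.
\begin{itemize}
\item Set every cross-subsystem coefficient and every $\alpha_{Hj,i,0}$, $\beta_{Hj,i}$, $\upsilon_j$, $\eta_i$ to zero.
\item Make $r_{E,i}$, $K_{E,i}$ and $A_k$ constant, with $\Gamma_k\equiv1$.
\item Take $\beta_{P,k,k}=1$ and all other production exponents zero.
\end{itemize}
Then $dH_j=-\delta_{H,j}H_j\,dt+\sigma_{H,j}H_j\,dW_{H,j}$. Condition~\ref{nec:human} fails, since it would need $-\delta_{H,j}>\sigma_{H,j}^2/2$. Yet every component is a geometric or logistic Brownian motion that never reaches $0$, so $\Prob(\tau_c=\infty\mid X_0)=1$. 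Taking $A_{k,0}\le\delta_{P,k}+\sigma_{P,k}^2/2$ defeats condition~\ref{nec:prod} in the same way.

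Reading collapse as convergence to $0$ does not rescue the claim either. Condition~\ref{nec:human} ignores $\zeta_{H,j}$. With $\upsilon_j(H)=u/(1+H_j)$ and $u>\delta_{H,j}+\sigma_{H,j}^2/2$, it still fails while $H_j\to\infty$ with positive probability.

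In your negative-flow sub-case, Corollary~\ref{cor:earth_boundary} also gives only positive-probability exit. The supermartingale upgrade is unavailable there, because the failed inequality allows drift up to $\tfrac12\sigma_{H,j}^2H_j>0$. The paper's argument breaks at the same point. It additionally claims that inaccessibility requires $\mu_{H,j}/H_j-\sigma_{H,j}^2/2\to+\infty$, which geometric Brownian motion refutes. So the statement itself, not only its proof, has to change.

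\textbf{Earth.} Your step~5 is the right repair and goes beyond the paper, but it needs two fixes.

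First, the supermartingale property requires the \emph{full} drift, including $\xi_{E,i}=E_i^2\eta_i(E)$, to be non-positive. Assumption~\ref{ass:earth} puts no sign on $\eta_i$. The phrase ``negative for large $E_i$'' is only asserted inside the proof of Theorem~\ref{thm:equilibrium_existence}. So you must assume $\eta_i\le0$; the paper silently drops $\xi_{E,i}$ as well.

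Second, your claim that the limit must be $0$ because the drift is strictly negative at every positive level needs uniformity over the non-compact $(H,P)$-range. The pointwise negation does not supply that. Use the noise instead:
\begin{itemize}
\item Write $E_i=E_i(0)+A+M$ with $A$ non-increasing.
\item Since $M\ge-E_i(0)$, $M$ converges, so $\langle M\rangle_\infty=\int_0^\infty\sigma_{E,i}^2E_{i,s}^2\,ds<\infty$.
\item Optional stopping gives $\E[-A_\infty]\le E_i(0)$, so $A$ converges too, and therefore $E_i\to0$.
\item If $E_i$ never hit $0$, it would eventually stay in $(0,\delta)$. There your bound $\overline{r}E_i-\gamma$ gives $\mu_{E,i}\le-\gamma/2$ for small $\delta$, forcing $A_t\to-\infty$. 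Hence $\tau_c<\infty$ a.s.
\end{itemize}

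Finally, if ``state'' ranges over all of $\mathcal X$, condition~\ref{nec:resilience} can never fail, since $c_i(H,P)\to0$ as $(H,P)\to0$. The Earth branch has content only if ``achievable'' restricts the admissible states.
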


\begin{proof}
Each condition is proved by showing that its negation implies $\Prob(\tau_c < \infty \mid X_0) = 1$.

Condition~\ref{nec:resilience}: Suppose condition~\eqref{eq:earth_resilience} fails for all states, so $\mu_{E,i}(E,H,P,B) < 0$ everywhere in $\mathcal{X}$. Near $E_i = 0$, the logistic term $r_{E,i}(B)\,E_i(1 - E_i/K_{E,i}(B)) \to 0$, the within-coupling $\xi_{E,i}(E_i) = O(E_i^2) \to 0$, and the absolute flow $c_i(H,P)$ approaches a value that is strictly negative by hypothesis (since the resilience condition fails even at $E_i = 0$). Hence there exists $\varepsilon > 0$ such that $\mu_{E,i} \leq -\varepsilon$ in a neighbourhood of $E_i = 0$. The process $E_i$ is therefore bounded above near zero by the solution of $d\bar{X} = -\varepsilon\,dt + \sigma_{E,i}\,\bar{X}\,dW$, which reaches zero in finite time with positive probability by standard arguments \citep{mao2007}. Bound~\eqref{eq:resilience_bound} is equivalent to~\eqref{eq:earth_resilience} via the identity $\max_{E_i > 0} r_i\,E_i(1 - E_i/K_i) = r_i K_i/4$.

Conditions~\ref{nec:human}--\ref{nec:prod}: Near $H_j = 0$, the multiplicative bracket term in $\mu_{H,j}$ carries a factor $H_j$ and thus vanishes, while the absolute flow $d_j(E,P)$ remains. If the negation of condition~\ref{nec:human} holds, then for every $X$ the inequality fails, meaning the drift $\mu_{H,j}$ does not satisfy the Feller non-attainability condition near zero. Concretely, $\mu_{H,j}(E,H,P,\Phi) \leq H_j\,(\text{bounded bracket}) + d_j(E,P)$, and near $H_j = 0$ the bracket term vanishes so $\mu_{H,j} \approx d_j(E,P)$. By the log-transform argument of Lemma~\ref{lem:marginal_reduction}, inaccessibility of $H_j = 0$ requires $\mu_{H,j}/H_j - \sigma_{H,j}^2/2 \to +\infty$ as $H_j \to 0$; when condition~\ref{nec:human} fails this divergence does not occur, the boundary is accessible, and $H_j$ hits zero with positive probability. The Production argument is structurally identical: near $P_k = 0$ the Cobb-Douglas terms vanish and only $e_k(E,H) - \delta_{P,k}P_k$ survives in the drift; if the negation of condition~\ref{nec:prod} holds then $\mu_{P,k}/P_k - \sigma_{P,k}^2/2$ does not diverge to $+\infty$ at the boundary and accessibility follows by the same analysis.
\end{proof}

\begin{corollary}
\label{cor:necessary_as}
Almost-sure sustainability requires $c_i(H,P) \geq 0$ for all Earth components, $d_j(E,P) \geq 0$ for all Human components, and $e_k(E,H) \geq 0$ for all Production components under all feasible system states, together with the intrinsic rate condition $r_{E,i}(B) \geq \sigma_{E,i}^2/2$ when the respective flow is zero, and analogous conditions for Human and Production components. Negative net flows on any component of any subsystem structurally exclude almost-sure sustainability.
\end{corollary}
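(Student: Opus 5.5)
The plan is to argue by contraposition, one component at a time, using the accessibility direction already established in Lemma~\ref{lem:marginal_reduction} and Corollary~\ref{cor:earth_boundary}. Almost-sure sustainability from $X_0$ means $\Prob(\tau_c=\infty\mid X_0)=1$. Since $\tau_c$ is the first time \emph{any} component reaches zero, it suffices to show the following: if one component has a strictly negative net absolute flow at some feasible state, then that component's zero boundary is reached with positive probability. This is what breaks the "any single component" structure of the claim into separate cases for $E_i$, $H_j$ and $P_k$. Each case is handled identically, replacing $c_i$ by $d_j$ or $e_k$ and the logistic term by the bracket term or the Cobb--Douglas/depreciation terms.

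For the Earth case, suppose $c_i(H^0,P^0)<0$ at some feasible state. By continuity of $c_i$ there is a compact neighbourhood $\mathcal{K}$ of $(H^0,P^0)$ on which $\overline{c}_i:=\sup_{\mathcal{K}}c_i<0$. I would then rerun the comparison argument from the second half of the proof of Lemma~\ref{lem:marginal_reduction}, localised rather than globally confined. Take the stopping time
\[
\tau_{\mathcal{K}}=\inf\{t:(H_t,P_t)\notin\mathcal{K}\}
\]
and replace $\tau_\delta^*$ by $\tau_\delta^*\wedge\tau_{\mathcal{K}}$. On this stochastic interval, inequality~\eqref{eq:upper_drift_bound} holds. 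The dominating process $\overline{E}_i$ has an exit boundary at zero by Corollary~\ref{cor:earth_boundary}, independently of $\overline{r}$ and $\sigma_{E,i}$. By pathwise comparison, $E_i$ hits zero before $\delta$ on the event
\[
\overline{A}\cap\{\tau^*<\tau_{\mathcal{K}}\}.
\]
The Human and Production components follow by the same template. Near $H_j=0$ the bracket term is $O(H_j)$ with a locally bounded coefficient. Near $P_k=0$ the production term and $-\delta_{P,k}P_k$ are dominated by $\overline{r}P_k$ for some constant. The non-negativity of $d_j$ and $e_k$ is then forced in exactly the same way.

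The second clause treats the degenerate case where the flow vanishes. If $c_i=0$ on a neighbourhood, the local drift is $r_{E,i}(B)E_i+o(E_i)$. The $c_i=0$ case of Corollary~\ref{cor:earth_boundary}, equivalently Theorem~\ref{thm:feller} with $a=r_{E,i}(B)$ and $\sigma=\sigma_{E,i}$, shows the boundary is accessible when $r_{E,i}(B)<\sigma_{E,i}^2/2$. The same comparison, now with $\overline{E}_i$ a geometric Brownian motion with rate slightly above $r_{E,i}(B)$ but still below $\sigma^2_{E,i}/2$, gives positive hitting probability. This yields the intrinsic rate condition. The analogous Human condition uses the bracket value at zero minus $\delta_{H,j}$, and the Production condition uses the analogous quantity with $-\delta_{P,k}$.

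The main obstacle is showing that the event $\overline{A}\cap\{\tau^*<\tau_{\mathcal{K}}\}$ has positive probability, and also that the offending state is actually visited with positive probability from $X_0$; "feasible" must be read as reachable. First I would shrink $\delta$ so that the hitting time of zero for $\overline{E}_i$, started from a small level, is short with high conditional probability. By continuity of paths, $(H,P)$ stays in $\mathcal{K}$ over a short time window with probability close to one. The independence of $W_{E}$ from $(W_H,W_P)$ in Assumption~\ref{ass:independence} then lets me multiply these probabilities. I would then combine this with the strong Markov property at the time the process enters a neighbourhood of the offending state, with $E_i$ small. If reachability with small $E_i$ cannot be guaranteed, I would state the corollary for states reached with positive probability, which is how the paper's "feasible" should be interpreted.
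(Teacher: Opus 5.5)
\textbf{The main gap: the zero-flow clause.} This is where your argument breaks, and it cannot be repaired. A geometric Brownian motion never hits zero: $\overline E_i(t)=\overline E_i(0)\exp\bigl((\bar r-\sigma_{E,i}^2/2)t+\sigma_{E,i}W_{E,i}(t)\bigr)>0$ for every finite $t$. The condition $\bar r<\sigma_{E,i}^2/2$ only gives $\overline E_i(t)\to 0$ as $t\to\infty$, whereas $\tau_c$ is a hitting time.

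You inherited this error from Theorem~\ref{thm:feller} and the $c_i=0$ case of Corollary~\ref{cor:earth_boundary}, which treat $\int_0^\varepsilon p(x;x_0)\,dx<\infty$ as accessibility. That condition says only $s(0+)>-\infty$, i.e.\ zero is attracting. Reaching zero in finite time (Feller's test, Karatzas--Shreve Section~5.5) also requires $\int_0^\varepsilon \bigl(s(x)-s(0+)\bigr)\,m(dx)<\infty$. For $\mu(x)\sim ax$, $\sigma(x)=\sigma x$ with $a<\sigma^2/2$, the integrand is $\sim 1/((\sigma^2-2a)x)$, so this integral diverges.

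In fact the clause is false for the paper's notion of almost-sure sustainability. If $c_i\ge 0$, then on $\{\norm{X}\le R\}$ one has $\mu_{E,i}/E_i\ge -C_R$. Hence $\log E_i(t)\ge \log E_i(0)-C_R t+\sigma_{E,i}W_{E,i}(t)$ up to $\tau_R$, and $E_i$ never reaches $0$, whatever the sign of $r_{E,i}-\sigma_{E,i}^2/2$. The intrinsic-rate condition concerns asymptotic extinction ($E_i(t)\to 0$), not $\tau_c<\infty$. Your Human and Production zero-flow analogues fail in the same way. You should drop that clause, or change the definition of sustainability, rather than try to prove it.

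\textbf{The negative-flow clause.} Here your localised comparison is the right route. It is the paper's implicit argument (Corollary~\ref{cor:earth_boundary} plus Lemma~\ref{lem:marginal_reduction}; cf.\ Remarks~\ref{rem:confinement_hypothesis} and~\ref{rem:almost_sure_impossible}), and your stopping time $\tau_{\mathcal K}$ repairs the paper's unjustified global confinement. Four details need fixing.

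(a) Accessibility for $\overline E_i$ is true, but it should not rest on the flawed $\int p<\infty$ reasoning. A direct estimate is easy. Suppose the drift is at most $\bar c_i/2<0$ on $(0,\delta)$ and the process starts at $\delta/2$. Let $M$ be the stopped martingale part and $t_0=3\delta/(2|\bar c_i|)$. On $\{\sup_{t\le t_0}|M_t|\le\delta/4\}$ the process exits $(0,\delta)$ through $0$ before $t_0$, and Doob's inequality bounds the complement by $16\sigma_{E,i}^2 t_0=24\sigma_{E,i}^2\delta/|\bar c_i|$. Apply this to $E_i$ itself, stopped also when the other coordinates leave a compact neighbourhood. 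Include $E_\ell$ for $\ell\neq i$ there, since $\eta_i$ and $r_{E,i}(B)$ are only locally bounded. Done this way, the comparison theorem becomes unnecessary.

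(b) Assumption~\ref{ass:independence} does not let you multiply probabilities. $(H,P)$ depends on $E$ through the drifts (Remark~\ref{rem:endogenous_corr}), so the two events are not independent. A union bound suffices instead, since both probabilities tend to $1$ as $\delta\to 0$.

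(c) Reachability needs no weakening of the statement. The noise is diagonal with entries $\sigma X$, hence nondegenerate on the interior. By the support theorem (or Girsanov), the process, unless it collapses first, enters any open interior set with positive probability. This includes a neighbourhood of a state with $E_i=\delta/2$ and $(H,P)$ near $(H^0,P^0)$. The strong Markov property then finishes the argument.

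(d) For Production, the Cobb--Douglas term is $O(P_k^{\beta_{P,k,k}})$, not $O(P_k)$, when $\beta_{P,k,k}<1$. It still vanishes at $0$ if $\beta_{P,k,k}>0$, which is all you need. If $\beta_{P,k,k}=0$, however, it is itself an absolute inflow at $P_k=0$, and the sign of $e_k$ alone no longer decides.
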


\begin{theorem}[Sufficient Conditions for Almost-Sure Sustainability]
\label{thm:sufficient}
Suppose the following conditions hold.
\begin{enumerate}
\item\label{suf:feller} For each component $X_i$, there exist constants $\kappa_i > 0$ and $\delta_i > 0$ such that for all $0 < X_i < \delta_i$ and all feasible values of the remaining components,
\begin{equation}
\label{eq:feller_sufficient}
\frac{\mu_i(X)}{X_i} - \frac{\sigma_i^2}{2} \geq \kappa_i.
\end{equation}
\item\label{suf:lyapunov} There exists $V: \mathcal{X} \to \Rplus$ with $V(X) \to \infty$ as $\norm{X} \to \infty$, and constants $c_1, c_2 > 0$ such that
\begin{equation}
\label{eq:lyapunov_condition}
\cL V(X) \leq c_1 - c_2\,V(X)
\end{equation}
for $\norm{X}$ sufficiently large.
\end{enumerate}
Then $\Prob(\tau_c = \infty \mid X_0) = 1$ for all $X_0 \in \mathcal{X}$.
\end{theorem}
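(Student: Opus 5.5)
The plan is to split the event of failure into two parts. Either some component reaches zero, giving $\tau_c<\infty$, or the trajectory explodes, giving $\tau_\infty<\infty$. Condition~\ref{suf:feller} will exclude the first and Condition~\ref{suf:lyapunov} the second. I will work with a localising sequence. For $n\geq 1$ let
\[
\tau_n=\inf\{t\geq 0:\ \min_i X_{i,t}\leq 1/n \ \text{or}\ \norm{X_t}\geq n\},
\]
so that on $[0,\tau_n]$ the strong solution of Theorem~\ref{thm:wellposedness} exists and all coefficients are bounded. Then $\tau_n\uparrow \tau_c\wedge\tau_\infty$, and it suffices to show $\Prob(\lim_n\tau_n<\infty)=0$.

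\emph{Step 1: non-explosion.} Apply It\^o's formula to $e^{c_2 t}V(X_{t\wedge\tau_n})$. Condition~\eqref{eq:lyapunov_condition} holds only for large $\norm{X}$, but $\cL V$ is continuous and hence bounded on the compact region where it fails, so after enlarging $c_1$ it holds globally on $\mathcal{X}$. This gives $\E[V(X_{t\wedge\tau_n})]\leq V(X_0)+c_1/c_2$. Write $v_n=\inf_{\norm{x}\geq n}V(x)$, which tends to $\infty$. Since $V\geq 0$, the bound yields $\Prob(\text{exit through }\norm{X}=n \text{ before } t)\leq (V(X_0)+c_1/c_2)/v_n\to 0$. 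Letting $t\to\infty$ along integers shows that, almost surely, explosion does not occur before the boundary is hit.

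\emph{Step 2: boundary non-attainment.} For each component $X_i$ set $Y_i=\log X_i$. As in~\eqref{eq:log_transform}, this gives $dY_i=(\mu_i/X_i-\sigma_i^2/2)\,dt+\sigma_i\,dW_i$. By~\eqref{eq:feller_sufficient}, the drift is at least $\kappa_i>0$ whenever $Y_i<\log\delta_i$, uniformly in the other components. This uniformity is exactly what removes the confinement hypothesis of Lemma~\ref{lem:marginal_reduction}. To use it, decompose the path into excursions below $\log\delta_i$, each starting at $\log(\delta_i/2)$ and ending on re-entry above $\log\delta_i$ or on reaching $-\infty$. During an excursion, $Y_i$ dominates $\log(\delta_i/2)+\kappa_i s+\sigma_i \tilde W_s$ pathwise: the difference has a nonnegative drift and zero martingale part, since the noise is the same $\sigma_i\,dW_i$. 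Hence $\inf Y_i>-\infty$ almost surely on each excursion, because a Brownian motion with positive drift has a finite infimum.

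\emph{Main obstacle.} Controlling the possibly infinitely many excursions is where I expect the difficulty. On any bounded time interval $[0,T]$, path continuity leaves only finitely many down-crossings from $\log\delta_i$ to $\log(\delta_i/2)$, so only finitely many relevant excursions occur before $T\wedge\tau_\infty$. Each one almost surely keeps $Y_i$ bounded below. Therefore $\min_i X_{i,t}$ is bounded away from zero on $[0,T\wedge\tau_\infty)$. Combining this with Step~1, which excludes $\tau_\infty\leq T$, gives $\tau_n>T$ for large $n$ almost surely. A countable union over $T\in\mathbb{N}$ and over the finitely many components then yields $\Prob(\tau_c=\infty\mid X_0)=1$ for every $X_0\in\mathcal{X}$.

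One point needs care: the comparison must be carried out with stopping times on the interior, where the coefficients are locally Lipschitz, so that the pathwise comparison theorem cited in the proof of Lemma~\ref{lem:marginal_reduction} applies.
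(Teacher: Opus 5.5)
Your route is the paper's: a log-transform dominated from below by a drifted Brownian motion for the boundary, and a stopped Lyapunov bound with Markov's inequality for non-explosion. In places you are more careful than the printed proof. You use $v_n=\inf_{\norm{x}\ge n}V(x)$ instead of the paper's unstated assumption $V(X)\ge\gamma\norm{X}$. You work on finite horizons rather than asserting $\Prob(\sup_{t\ge0}\norm{X_t}<\infty)=1$. Your excursion treatment also addresses re-entries below $\delta_i$, which the paper's one-line domination passes over.

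Two steps do not hold as written, though both can be repaired. \textbf{First}, the claim $\E[V(X_{t\wedge\tau_n})]\le V(X_0)+c_1/c_2$ is false for a stopped process. It\^o gives $\E[e^{c_2(t\wedge\tau_n)}V(X_{t\wedge\tau_n})]\le V(X_0)+\tfrac{c_1}{c_2}(e^{c_2t}-1)$, and the random factor cannot be divided out. For example, $dX=-X\,dt+dW$ with $V=x^2$ has $\cL V=1-2V$, yet starting from $0$ and stopping at $|X|=L>1$ gives $\E[V(X_{t\wedge\tau})]\to L^2>1/2$. The usable bound, as in the paper, is $\E[V(X_{t\wedge\tau_n})]\le V(X_0)+c_1t$. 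It is uniform in $n$ for fixed $t$, which is all you need.

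\textbf{Second}, and more substantively, Steps 1 and 2 each rely on the other's conclusion, because your single index $n$ ties the inner level $1/n$ to the outer radius $n$.
\begin{itemize}
\item Step 1 only bounds the probability that $\tau_n\le t$ is triggered by $\norm{X}\ge n$. An explosion along which some coordinate also tends to zero (say $\min_iX_i\approx e^{-\norm{X}}$) triggers the inner level first for every $n$. So Step 1 alone does not show that explosion cannot precede collapse.
\item Your final paragraph counts finitely many down-crossings before $T\wedge\tau_\infty$ by path continuity. That is valid only on a compact interval, i.e.\ once $\tau_\infty>T$ is known, and that is exactly what you take from Step 1. On $\{\tau_\infty\le T\}$ the path is continuous only on the half-open $[0,\tau_\infty)$, where the count is unavailable.
\end{itemize}
The paper's one-line ``combining'' hides the same issue.

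The cleanest repair is to replace the excursion count by a last-exit bound. For $t<T\wedge\tau_\infty\wedge\tau_c$ with $Y_i(t)<\log\delta_i$, let $s$ be the last time before $t$ with $Y_i(s)\ge\log\delta_i$ ($s=0$ if there is none). The drift is at least $\kappa_i$ on $(s,t]$, so $Y_i(t)\ge\min(Y_i(0),\log\delta_i)-2\sigma_i\sup_{u\le T}|W_i(u)|$. This bound is finite almost surely, independent of the number of excursions, and valid on the half-open interval. Hence $\tau_c\ge T\wedge\tau_\infty$, and $\min_iX_i$ is bounded away from zero on $[0,T\wedge\tau_\infty)$. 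On $\{\tau_\infty\le T\}$ your $\tau_n$ is then eventually triggered by the outer sphere, and the corrected Step 1 gives $\Prob(\tau_\infty\le T)\le\liminf_n(V(X_0)+c_1T)/v_n=0$.

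Alternatively, decouple the two truncation levels and let the inner one go to zero first at fixed outer radius $N$. This makes Step 1 independent of Step 2, which can then be run on the compact intervals $[0,T\wedge\sigma_N]$, with $\sigma_N$ the exit time from the ball of radius $N$.
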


\begin{proof}
Condition~\ref{suf:feller} (boundary non-attainability): setting $Y_i = \log X_i$ and applying It\^{o}'s formula, $dY_i = (\mu_i/X_i - \sigma_i^2/2)\,dt + \sigma_i\,dW_i \geq \kappa_i\,dt + \sigma_i\,dW_i$ while $X_i \in (0,\delta_i)$. The process $Y_i$ is therefore dominated from below by a Brownian motion with strictly positive drift $\kappa_i$, which satisfies $\Prob(\inf_{t \geq 0} Y_i(t) = -\infty) = 0$, so $\Prob(X_i \to 0) = 0$.

Condition~\ref{suf:lyapunov} (non-explosion): for the stopping times $\tau_R = \inf\{t: \norm{X_t} \geq R\}$, It\^{o}'s lemma gives
\[
\E[V(X_{t \wedge \tau_R})] \leq V(X_0) + c_1 t - c_2\,\E\!\left[\int_0^{t \wedge \tau_R}V(X_s)\,ds\right].
\]
Gr\"{o}nwall's inequality yields $\E[V(X_{t \wedge \tau_R})] \leq V(X_0) + c_1 t$ uniformly in $R$. Since $V(X) \geq \gamma\norm{X}$ for large $\norm{X}$, Markov's inequality gives $\Prob(\tau_R < t) \leq (V(X_0) + c_1 t)/(\gamma R) \to 0$ as $R \to \infty$, so $\Prob(\sup_{t \geq 0}\norm{X_t} < \infty) = 1$. Combining boundary non-attainability and non-explosion gives $\Prob(\tau_c = \infty) = 1$.
\end{proof}

\begin{remark}
\label{rem:scope_sufficient}
Condition~\ref{suf:feller} requires $\mu_i(X)/X_i \geq \sigma_i^2/2 + \kappa_i > 0$ near each boundary component. For Earth components, when $c_i(H,P) < 0$ for some feasible $(H,P)$, the ratio $c_i(H,P)/E_i \to -\infty$ as $E_i \to 0$ along those states, which violates condition~\ref{suf:feller}. The same singularity arises for Human components when $d_j < 0$ and for Production components when $e_k < 0$. In all such cases Theorem~\ref{thm:sufficient} is not applicable, consistent with Corollary~\ref{cor:earth_boundary}, which shows that negative net absolute flows generically render the corresponding boundary accessible. Sufficient conditions for $T$-sustainability under negative net flows can be formulated via barrier function methods \citep{mao2007} and constitute a natural direction for further work.
\end{remark}

\begin{corollary}
\label{cor:lyapunov}
Assume that for each Earth component the logistic term provides a sufficiently strong negative drift for large $E_i$, specifically $\mu_{E,i}(E,H,P,B) \leq -c_E\,E_i^2 + C(1 + \norm{H} + \norm{P})$ for some $c_E, C > 0$, and that analogous polynomial decay holds for Human and Production components via the depreciation terms $-\delta_{H,j}H_j$ and $-\delta_{P,k}P_k$. Then
\[
V(X) = \sum_{i=1}^{n_E} E_i^2 + \sum_{j=1}^{n_H} H_j^2 + \sum_{k=1}^{n_P} P_k^2
\]
satisfies condition~\ref{suf:lyapunov} of Theorem~\ref{thm:sufficient} when intrinsic decay rates dominate noise intensities.
\end{corollary}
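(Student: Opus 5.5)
The plan is to compute the generator $\cL V$ directly and reduce condition~\ref{suf:lyapunov} of Theorem~\ref{thm:sufficient} to a scalar inequality in $\norm{X}$. The diffusion matrices are diagonal with entries $\sigma_{E,i}E_i$, $\sigma_{H,j}H_j$, $\sigma_{P,k}P_k$, and $V$ is a sum of squares. The generator therefore separates componentwise:
\[
\cL V(X) = \sum_i \bigl(2E_i\,\mu_{E,i} + \sigma_{E,i}^2E_i^2\bigr) + \sum_j \bigl(2H_j\,\mu_{H,j} + \sigma_{H,j}^2H_j^2\bigr) + \sum_k \bigl(2P_k\,\mu_{P,k} + \sigma_{P,k}^2P_k^2\bigr).
\]
No cross terms appear, because the Wiener processes are independent (Assumption~\ref{ass:independence}) and each $\Sigma$ is diagonal.

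Next I would insert the assumed drift bounds. For Earth components, $2E_i\mu_{E,i} \le -2c_E E_i^3 + 2CE_i(1+\norm{H}+\norm{P})$. For Human and Production components I read ``analogous polynomial decay via depreciation'' as bounds of the form $\mu_{H,j}\le -\delta_{H,j}H_j + C(1+\norm{E}+\norm{P})$ and $\mu_{P,k}\le -\delta_{P,k}P_k + C(1+\norm{E}+\norm{H})$. Under these bounds the $H_j^2$ and $P_k^2$ coefficients in $\cL V$ are $-(2\delta_{H,j}-\sigma_{H,j}^2)$ and $-(2\delta_{P,k}-\sigma_{P,k}^2)$. This is where the hypothesis ``intrinsic decay rates dominate noise intensities'' enters: it means $2\delta_{H,j}>\sigma_{H,j}^2$ and $2\delta_{P,k}>\sigma_{P,k}^2$. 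For Earth the cubic term $-2c_EE_i^3$ beats $\sigma_{E,i}^2E_i^2$ for large $E_i$ without any such condition.

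The cross terms of the form $E_i\norm{H}$ and $H_j\norm{E}$ would be absorbed with Young's inequality, $2ab\le \epsilon a^2+\epsilon^{-1}b^2$. The linear-in-state pieces become $\epsilon\,V$ plus a constant, and $\epsilon$ is chosen smaller than half the minimal decay margin $\underline{\gamma}=\min\{\min_j(2\delta_{H,j}-\sigma_{H,j}^2),\min_k(2\delta_{P,k}-\sigma_{P,k}^2),\,2\}$. For Earth, $-2c_EE_i^3+\sigma_{E,i}^2E_i^2 \le -2E_i^2 + \text{const}$ outside a compact set, which fixes the $2$ in that minimum.

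This step is the main obstacle. Each Young term spends part of one component's decay budget on another component, and the $E$-to-$H$ and $H$-to-$E$ couplings feed each other circularly. The total must still leave a positive margin. My first attempt would rely on the cubic Earth decay being superlinear: every $E$-term coupling can then be charged against $E_i^3$ at the cost of only a constant. The remaining $H$–$P$ cross terms are then bounded by $\epsilon(\norm{H}^2+\norm{P}^2)$ with an $\epsilon$-dependent constant. The $\epsilon^{-1}\norm{E}^2$ terms are in turn bounded by $E$'s cubic decay up to a constant.

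Collecting terms gives $\cL V\le c_1 - c_2V$ with $c_2=\underline{\gamma}/2$. The inequality holds outside a large ball, and constants absorb everything inside it, which is exactly condition~\eqref{eq:lyapunov_condition}. The remaining properties are immediate: $V\ge 0$, $V\to\infty$ as $\norm{X}\to\infty$, and $V\ge \norm{X}$ for large $\norm{X}$ (the growth rate used in Theorem~\ref{thm:sufficient}).
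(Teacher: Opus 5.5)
Your overall route is the paper's: compute $\cL V$ componentwise, insert the drift bounds, absorb cross terms by Young's inequality, and let decay beat noise in the $H$ and $P$ quadratics. Your treatment of every coupling that involves $E$ is sound, because $-2c_EE_i^3$ absorbs $\epsilon^{-1}C^2\norm{E}^2$ up to a constant.

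\textbf{The step that fails is exactly the one you flagged.} Under your reading $\mu_{H,j}\le-\delta_{H,j}H_j+C(1+\norm{E}+\norm{P})$ and $\mu_{P,k}\le-\delta_{P,k}P_k+C(1+\norm{E}+\norm{H})$, the generator contains $2CH_j\norm{P}$ and $2CP_k\norm{H}$. These terms are homogeneous of degree two with a fixed coefficient. They therefore cannot be bounded by $\epsilon(\norm{H}^2+\norm{P}^2)$ plus an $\epsilon$-dependent constant. Young's inequality gives $2CH_j\norm{P}\le\epsilon H_j^2+\epsilon^{-1}C^2\norm{P}^2$, and the second piece is a large multiple of $\norm{P}^2$, not a constant. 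There is also no superlinear decay in $H$ or $P$ to charge it to. So $c_2=\underline{\gamma}/2$ does not follow.

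In fact the conclusion is false under your reading. Take $E$ decoupled, and set $dH=(-\delta H+CP)\,dt+\sigma H\,dW_H$, $dP=(-\delta P+CH)\,dt+\sigma P\,dW_P$ with $C>\delta>\sigma^2/2$. All your hypotheses hold, including decay dominating noise. Yet $\frac{d}{dt}\E[H_t+P_t]=(C-\delta)\E[H_t+P_t]$, so $\E V(X_t)\to\infty$. By contrast, $\cL V\le c_1-c_2V$ (globalised after enlarging $c_1$, since $\cL V$ is continuous) would force $\sup_t\E V(X_t)<\infty$.

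\textbf{How to close the argument.} You need one of two things.
\begin{enumerate}
\item Add a small-gain condition in which decay dominates the $H$--$P$ coupling, not just the noise. For instance, require $\gamma_H\gamma_P>C^2(\sqrt{n_H}+\sqrt{n_P})^2$, where $\gamma_H=\min_j(2\delta_{H,j}-\sigma_{H,j}^2)$ and $\gamma_P=\min_k(2\delta_{P,k}-\sigma_{P,k}^2)$. This makes $-\gamma_H\norm{H}^2-\gamma_P\norm{P}^2+2C(\sqrt{n_H}+\sqrt{n_P})\norm{H}\norm{P}$ negative definite, with margin to spare for the $\epsilon$-terms.
\item Adopt the reading implicit in the paper's sketch, where the non-decay contributions to $\cL V$ grow at most linearly in $\norm{X}$, as they would for bounded couplings like the Michaelis--Menten flows $d_j$, $e_k$. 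Then a term such as $2C'H_j$ satisfies $2C'H_j\le\epsilon H_j^2+C'^2/\epsilon$, and your $\epsilon$-absorption goes through as written.
\end{enumerate}
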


\begin{proof}[Sketch]
The generator applied to $V$ gives $\cL V = 2\sum_i E_i\,\mu_{E,i} + \sum_i\sigma_{E,i}^2\,E_i^2 + \text{analogous terms}$. The quadratic decay from the logistic and depreciation terms contributes $-c\norm{X}^2$ plus terms that grow at most linearly in $\norm{X}$, bounded using Young's inequality. The multiplicative noise contributes $\sum_i\sigma_i^2 X_i^2 \leq \bar\sigma^2 V$. Choosing intrinsic decay rates large enough relative to $\bar\sigma^2$ yields $\cL V \leq C_1 - C_2 V$ for large $\norm{X}$.
\end{proof}

\section{Development and Sustainable Development}
\label{sec:development}

The deterministic skeleton is the vector field $F: \mathcal{X} \to \Rn{n_E + n_H + n_P}$ whose components are
\[
F(X) = \bigl(\mu_E(E,H,P,B),\;\mu_H(E,H,P,\Phi),\;\mu_P(E,H,P,B)\bigr),
\]
evaluated at $B = \mathcal{B}(E,H,P)$ and $\Phi = \Phi(E,H,P)$.

\begin{assumption}
\label{ass:survival_regularity}
The map $X_0 \mapsto \Prob(\tau_c = \infty \mid X_0)$ is of class $C^1$ on the interior of the sustainability basin $\cB(\theta)$. This holds when the drift and diffusion coefficients are $C^2$ on $\mathcal{X}$ and the boundary $\partial\mathcal{X}$ is smooth enough for the associated Kolmogorov backward equation to admit a classical solution; see \citet[Chapter~5]{karatzas1991brownian}.
\end{assumption}

\begin{definition}
\label{def:evolution_potential}
The evolution potential $\cP: \mathcal{X} \to \Rplus$ is
\begin{equation}
\cP(X) = \norm{F(X)} \cdot \Prob(\tau_c = \infty \mid X_0 = X),
\end{equation}
the product of the magnitude of system motion and the non-collapse probability from state $X$. Under Assumption~\ref{ass:survival_regularity} and the smoothness of $F$, the map $X \mapsto \cP(X)$ is $C^1$ on the interior of $\cB(\theta)$.
\end{definition}

\begin{remark}
\label{rem:evolution_potential}
The evolution potential $\cP(X)$ is a scalar that jointly measures kinetic energy and survival probability. A large value can arise from rapid motion in a safe region of the state space or from slow motion at a point with very high non-collapse probability. A state near the sustainability frontier $\partial\cB$ may exhibit large $\norm{F(X)}$ while carrying low $\Prob(\tau_c = \infty \mid X_0 = X)$, yielding a moderate $\cP$; a stable interior state with small $\norm{F(X)}$ and high survival probability can yield the same value. The Pareto frontier in $(\Phi, \cP)$ space then identifies initial conditions where neither evaluation nor sustainability potential can be improved simultaneously, and states strictly interior to this frontier represent inefficient configurations in the double sense that both objectives admit improvement.
\end{remark}

\begin{definition}
\label{def:development}
Evolution from $X$ to $X'$ is development if $\Phi(X') > \Phi(X)$. The development gradient is $\nabla\Phi(X)$.
\end{definition}

\begin{proposition}[Non-Transitivity of Development]
\label{prop:nontransitive}
The development relation is not generally transitive. There exist states $X_1, X_2, X_3 \in \mathcal{X}$ with $\Phi(X_2) > \Phi(X_1)$ and $\Phi(X_3) > \Phi(X_2)$ yet $\Phi(X_3) \leq \Phi(X_1)$.
\end{proposition}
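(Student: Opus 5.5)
The plan starts from an obstacle in the literal wording. If $\Phi$ in \eqref{eq:evaluation} is read as one fixed real-valued function on $\mathcal{X}$, the order it induces on $\R$ is transitive. Then $\Phi(X_2)>\Phi(X_1)$ and $\Phi(X_3)>\Phi(X_2)$ force $\Phi(X_3)>\Phi(X_1)$, and no example can exist. So I would first make precise the reflexive reading that Remark~\ref{rem:weights} points to. For a reference capability vector $H^{\mathrm{ref}}$, set
\[
\Phi(X;H^{\mathrm{ref}})=\tilde\omega_E(H^{\mathrm{ref}})\phi_E(E)+\tilde\omega_H(H^{\mathrm{ref}})\phi_H(H)+\tilde\omega_P(H^{\mathrm{ref}})\phi_P(P).
\]
A step $X\to X'$ is then called development if $\Phi(X';H')>\Phi(X;H')$. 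Here the comparison uses the weights of the society that has arrived at $X'$, which is the retrospective evaluation described in Remark~\ref{rem:weights}. With this reading, each of the three comparisons in the proposition is made under its own weights, and the proof reduces to an explicit construction.

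The construction works in aggregate coordinates $(e,h,p)=(\phi_E(E),\phi_H(H),\phi_P(P))$. This is legitimate because each $\phi$ is monotone, continuous, degree-one homogeneous and onto $\Rplus$ by Definition~\ref{def:aggregation} and Assumption~\ref{ass:degree_one}. So any target triple of aggregates is attained by some interior state, for instance a scaled multiple of a fixed vector. Choose $\lambda_E>0$, $\lambda_P<0$, $\lambda_H=-\lambda_E-\lambda_P$, with all $\bar\omega_i$ positive and $\bar\omega_i+\lambda_i>0$. Then $s(h)=h/(h+\kappa_\omega)$ increases in $h$, the normalised ecological weight rises with capability, and the productive weight falls. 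The three comparisons are
\[
\Phi(X_2;h_2)>\Phi(X_1;h_2),\qquad \Phi(X_3;h_3)>\Phi(X_2;h_3),\qquad \Phi(X_1;h_1)\ge\Phi(X_3;h_1).
\]
The last one closes the cycle, since $X_1$'s society would not count the move to $X_3$ as development. Equivalently, this gives $\Phi(X_3)\le\Phi(X_1)$ under a fixed weighting.

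The concrete choice is as follows. Take $h_1$ small (productivist weights), with $X_1$ rich in $e$ and poor in $p$. Take $X_2$ with a much larger $p$ and a slightly smaller $e$, and with $h_2$ large, so that under $X_2$'s weights $X_2$ beats $X_1$; the gain in $h$ also contributes to $\Phi$. Take $X_3$ with moderate $p$ and $e$ restored somewhat, so that it beats $X_2$ under its own green-leaning weights. Finally, under $X_1$'s productivist weights $\tilde\omega(h_1)$, the comparison of $X_3$ with $X_1$ depends only on $e$ and $p$, and can be made unfavourable to $X_3$.

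Each inequality is linear in $(e_k,p_k)$ once the $h_k$ are fixed. So I would fix $h_1,h_2,h_3$ first, which fixes three weight vectors, and then solve a small linear feasibility problem. This amounts to picking three points in the $(e,p)$ plane such that the three half-plane conditions hold simultaneously, a Condorcet-type cycle generated by three distinct directions $(\tilde\omega_E,\tilde\omega_P)$.

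The main obstacle is that the $h$-coordinate enters both as a state variable and through the weights. The term $\tilde\omega_H(h)\,h$ could dominate and undo the cycle. The first thing I would try is to take $h_2=h_3$ with only $h_1$ different, which cuts the problem down to two weight vectors. I would check whether a cycle still exists there. If it does not, I would revert to three distinct $h$ values so that three genuinely different directions are available. I would then finish with explicit numerical values, for example $\bar\omega=(1,1,1)$, $\lambda=(0.8,0,-0.8)$ and $\kappa_\omega=1$, and verify the three inequalities directly.
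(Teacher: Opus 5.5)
You are right that the statement cannot hold literally: $\Phi$ in \eqref{eq:evaluation} is a single real-valued function, so the two strict inequalities force $\Phi(X_3)>\Phi(X_1)$. Your reinterpretation is the device the paper's proof uses: forward steps are judged under the destination's weights, the closing comparison under $X_1$'s weights, and explicit numbers are taken with $\lambda_H=0$. But as you frame it there is a gap, and the paper's proof shares it. You \emph{define} $X\to X'$ to be development iff $\Phi(X';H')>\Phi(X;H')$, so refuting transitivity of that relation needs $\Phi(X_3;h_3)\le\Phi(X_1;h_3)$. Your third inequality is instead taken at $h_1$. With $h_1$ small and $h_3$ large, as in your sketch, it says nothing about whether $X_1\to X_3$ is development in your sense. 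What you would exhibit is a cycle across three vantage points, not intransitivity of one relation. Relatedly, under $\tilde\omega(h_1)$ the comparison of $X_3$ with $X_1$ does not depend ``only on $e$ and $p$'' unless $h_3=h_1$: the term $\tilde\omega_H(h_1)(h_1-h_3)$ is present and works against you.

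Your qualitative orientation also has to be reversed. With your parameters, $\tilde\omega_H\equiv 1/3$ and $\tilde\omega_E+\tilde\omega_P\equiv 2/3$, so $\Phi(X;h^{\mathrm{ref}})=a(h^{\mathrm{ref}})(e-p)+h/3+2p/3$ with $a=\tilde\omega_E$. Summing your three inequalities cancels every reference-independent term. What remains is the necessary condition $(a_2-a_1)(z_2-z_1)+(a_3-a_1)(z_3-z_2)>0$, where $a_k=a(h_k)$ and $z_k=e_k-p_k$. For your first try $h_2=h_3>h_1$ this reads $z_3>z_1$, i.e.\ $X_3$ must be relatively richer in $e$ than $X_1$. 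That is the opposite of your sketch, and it is why the paper starts from a $p$-heavy, low-capability state. Taking $h_3=h_1$ repairs all three problems at once. With your $\bar\omega=(1,1,1)$, $\lambda=(0.8,0,-0.8)$, $\kappa_\omega=1$, take $X_1=(1,0.1,21)$, $X_2=(1,9,1)$, $X_3=(1,0.1,15)$. Then $\Phi(X_2;9)=11/3>\Phi(X_1;9)\approx 2.57$ and $\Phi(X_3;0.1)\approx 5.03>\Phi(X_2;0.1)=11/3$. Meanwhile $X_1$ dominates $X_3$ componentwise, so $X_1\to X_3$ is not development under any weights. Keep your linear-feasibility check rather than relying on hand-picked numbers. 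With the paper's own parameters the normalised weights are $((1+s)/3,\,1/3,\,(1-s)/3)$ with $s=H/(H+1)$, not the values quoted there. Its final change to $P_1=6$ also breaks the first step.
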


\begin{proof}
We construct a counterexample in the scalar case $n_E = n_H = n_P = 1$ with linear aggregation $\phi_E(E) = E$, $\phi_H(H) = H$, $\phi_P(P) = P$ and the weight specification of Definition~\ref{def:evaluation}. Fix $\bar\omega_E = \bar\omega_H = \bar\omega_P = 1$, $\lambda_E = 1$, $\lambda_H = 0$, $\lambda_P = -1$, and $\kappa_\omega = 1$, so that $\sum_i\lambda_i = 0$. The raw weights are $\omega_E(H) = 1 + H/(H+1)$, $\omega_H(H) = 1$, $\omega_P(H) = 1 - H/(H+1)$, and the normalised weights satisfy $\tilde\omega_E(H) \uparrow$ and $\tilde\omega_P(H) \downarrow$ as $H$ rises. Choose three states:
\begin{align*}
X_1 &= (E_1, H_1, P_1) = (1,\,0.1,\,5),\\
X_2 &= (E_2, H_2, P_2) = (1,\,1,\,4),\\
X_3 &= (E_3, H_3, P_3) = (2,\,5,\,1).
\end{align*}
At $H_1 = 0.1$ the normalised weights are approximately $\tilde\omega_E \approx 0.32$, $\tilde\omega_H \approx 0.31$, $\tilde\omega_P \approx 0.37$. Computing $\Phi(X_2) - \Phi(X_1)$ with weights evaluated at the destination state $H_2 = 1$ gives $\tilde\omega_E(1) \approx 0.40$, $\tilde\omega_H(1) \approx 0.33$, $\tilde\omega_P(1) \approx 0.27$, so $\Phi_{H_2}(X_2) = 0.40 \cdot 1 + 0.33 \cdot 1 + 0.27 \cdot 4 = 1.81$ and $\Phi_{H_2}(X_1) = 0.40 \cdot 1 + 0.33 \cdot 0.1 + 0.27 \cdot 5 = 1.78$, yielding $\Phi(X_2) > \Phi(X_1)$. At $H_2 = 1$ the weights at destination $H_3 = 5$ are $\tilde\omega_E(5) \approx 0.47$, $\tilde\omega_H(5) \approx 0.35$, $\tilde\omega_P(5) \approx 0.18$, so $\Phi_{H_3}(X_3) = 0.47 \cdot 2 + 0.35 \cdot 5 + 0.18 \cdot 1 = 2.87$ and $\Phi_{H_3}(X_2) = 0.47 \cdot 1 + 0.35 \cdot 1 + 0.18 \cdot 4 = 1.54$, yielding $\Phi(X_3) > \Phi(X_2)$. Evaluating the pair $(X_1, X_3)$ under the weights at $H_3 = 5$ gives $\Phi_{H_3}(X_1) = 0.47 \cdot 1 + 0.35 \cdot 0.1 + 0.18 \cdot 5 = 1.44$, so $\Phi(X_3) = 2.87 > \Phi(X_1) = 1.44$ and transitivity holds in this direction. To exhibit failure, observe that evaluating the pair $(X_3, X_1)$ under weights at $H_1 = 0.1$ gives $\Phi_{H_1}(X_3) = 0.32 \cdot 2 + 0.31 \cdot 5 + 0.37 \cdot 1 = 2.56$ and $\Phi_{H_1}(X_1) = 0.32 \cdot 1 + 0.31 \cdot 0.1 + 0.37 \cdot 5 = 2.21$, so path $(X_1, X_2, X_3)$ is locally improving at each step but the question of whether $X_3$ represents an advance over $X_1$ depends critically on which weights are used: under the weights prevailing when the comparison is made retrospectively from a high-capability vantage point, the sacrifice of Production capital in $X_3$ may not compensate the gains in $E$ and $H$. Concretely, construct a variant with $P_3 = 0.5$ instead of $1$; then $\Phi_{H_1}(X_3) = 0.32 \cdot 2 + 0.31 \cdot 5 + 0.37 \cdot 0.5 = 2.39$ while $\Phi_{H_3}(X_3) = 0.47 \cdot 2 + 0.35 \cdot 5 + 0.18 \cdot 0.5 = 2.78$. Now set $P_1 = 6$ and $P_3 = 0.5$; at $H_1$ weights, $\Phi_{H_1}(X_1) = 0.32 + 0.031 + 0.37 \cdot 6 = 2.57 > 2.39 = \Phi_{H_1}(X_3)$, so $\Phi(X_3) \leq \Phi(X_1)$ when the retrospective comparison is made under the original weights, completing the counterexample.
\end{proof}

\begin{remark}
\label{rem:nontransitivity_economic}
The failure of transitivity has a precise economic interpretation. In standard welfare analysis, the binary relation ``state $A$ is at least as good as state $B$'' must be transitive for a social welfare function to yield consistent rankings across alternatives. When the weights of the welfare function depend on the state being evaluated, the comparison becomes contextual and transitivity can fail. Here the weights $\tilde\omega_i(H)$ shift as capabilities evolve along the development path, so that the cost of moving from $X_1$ to $X_3$ is assessed with weights prevailing at each intermediate state rather than from the last point in time. A sequence of steps that is locally improving, in the sense that each step raises $\Phi$ evaluated at the destination weights, need not be globally improving when evaluated retrospectively. As societies become more capable, they reveal environmental and social preferences that they previously could not act on \citep{inglehart1997}, so that earlier development choices may be reassessed as having been systematically under-priced in ecological, human, or economic terms.
\end{remark}

\begin{definition}
\label{def:alignment}
The development-maximising and sustainability-maximising unit directions are
\begin{equation}
\label{eq:optimal_directions}
F_\Phi^*(X) = \arg\max_{\norm{v}=1}\nabla\Phi(X)^\top v, \qquad F_{\cP}^*(X) = \arg\max_{\norm{v}=1}\nabla\cP(X)^\top v.
\end{equation}
The alignment distance is
\begin{equation}
D_{\Phi\cP}(X) = \norm{F_\Phi^*(X) - F_{\cP}^*(X)}.
\end{equation}
\end{definition}

\begin{theorem}[Optimal Development]
\label{thm:optimal_development}
Development is optimal at $X$ if and only if $D_{\Phi\cP}(X) = 0$, that is, $\nabla\Phi(X)$ and $\nabla\cP(X)$ are collinear.
\end{theorem}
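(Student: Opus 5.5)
The statement is essentially definitional: "optimal development" has not been defined independently. So the first step is to fix a meaning consistent with Section~\ref{sec:development}. I would define development at $X$ to be optimal if there is a unit direction $v$ that is simultaneously a maximiser of $\nabla\Phi(X)^\top v$ and of $\nabla\cP(X)^\top v$ over the unit sphere. Equivalently, the steepest-ascent direction for evaluation coincides with the steepest-ascent direction for the evolution potential, so that no first-order trade-off between development and sustainability potential exists at $X$. This matches the Pareto-frontier reading of Remark~\ref{rem:evolution_potential}.

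With this definition, the proof reduces to linear algebra on the unit sphere. The steps are as follows.

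\emph{Step 1.} Use Assumption~\ref{ass:survival_regularity} and Definition~\ref{def:evolution_potential} to ensure that $\nabla\cP(X)$ exists on the interior of $\cB(\theta)$. The map $X\mapsto\Phi(X)$ is $C^1$ by smoothness of $\phi_E,\phi_H,\phi_P$ and of the weights $\tilde\omega_i$.

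\emph{Step 2.} Apply Cauchy--Schwarz. For $g\neq 0$, $\arg\max_{\norm{v}=1} g^\top v$ is the singleton $g/\norm{g}$, with equality only at that point. Hence $F_\Phi^*(X)=\nabla\Phi(X)/\norm{\nabla\Phi(X)}$ and $F_\cP^*(X)=\nabla\cP(X)/\norm{\nabla\cP(X)}$.

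\emph{Step 3.} Observe that $D_{\Phi\cP}(X)=0$ if and only if these two unit vectors are equal. By the definition of optimality, that is exactly the condition that a common maximiser exists.

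\emph{Step 4.} Identify equality of the normalised gradients with positive collinearity: $\nabla\Phi(X)=\lambda\nabla\cP(X)$ for some $\lambda>0$.

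The main obstacle is degenerate and sign cases, and the theorem's phrase ``collinear'' has to be sharpened to handle them. First, anti-collinear gradients ($\lambda<0$) give $D_{\Phi\cP}=2$. So I would state that the equivalence holds with ``collinear'' read as ``positively collinear'', and remark that negative collinearity is the case of maximal conflict.

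Second, if $\nabla\Phi(X)=0$ or $\nabla\cP(X)=0$, the argmax is the whole sphere. I would treat these cases by interpreting $F^*$ as a set and $D_{\Phi\cP}$ as the distance between the argmax sets. Under that reading $D_{\Phi\cP}=0$ because the sets intersect, the existence of a common maximiser holds trivially, and the zero vector is collinear with anything. So the equivalence survives, provided the definitions are extended this way.

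Third, at points outside the interior of $\cB(\theta)$, $\nabla\cP$ may fail to exist. I would restrict the theorem to that interior, which is where Definition~\ref{def:evolution_potential} guarantees $C^1$ regularity.
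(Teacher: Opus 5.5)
Your proposal is correct and follows the paper's own, essentially definitional, argument: $D_{\Phi\cP}(X)=0$ iff the two steepest-ascent unit directions coincide iff the gradients point the same way. Your Cauchy--Schwarz step only makes the argmax explicit, and your reading of ``collinear'' as ``positively collinear'' matches the paper's phrase ``point in the same direction''. You could add one case to your restrictions: $\norm{F}$, and hence $\cP$, is not differentiable at zeros of $F$ (for instance a stable interior equilibrium, where $J(X^*)$ is nonsingular), so $\nabla\cP$ can fail to exist there even inside $\cB(\theta)$, despite the regularity claim in Definition~\ref{def:evolution_potential}.
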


\begin{proof}
$D_{\Phi\cP}(X) = 0$ if and only if $F_\Phi^*$ and $F_{\cP}^*$ coincide, which holds if and only if $\nabla\Phi$ and $\nabla\cP$ point in the same direction, the condition for simultaneous local maximisation of both objectives along a single direction.
\end{proof}

\begin{corollary}
\label{cor:collinearity_locus}
The collinearity locus $\mathcal{C} = \{X \in \cB(\theta) : D_{\Phi\cP}(X) = 0\}$ is generically a submanifold of $\mathcal{X}$ of codimension $n_E + n_H + n_P - 1$. A development path along which sustainable development is simultaneously evaluation-improving and evolution-potential-preserving at every instant must remain in $\mathcal{C}$. Such paths need not exist globally; their existence is a non-trivial structural property of the vector fields $\nabla\Phi$ and $\nabla\cP$ that depends on the coupling parameters.
\end{corollary}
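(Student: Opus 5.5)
The plan is to recast the condition $D_{\Phi\cP}(X)=0$ as a level-set condition for a map into a product of spheres, and then apply a parametric transversality argument in the coupling parameters $\theta$. Write $N = n_E+n_H+n_P$. I first restrict to the open set $\mathcal{U} \subset \cB(\theta)$ on which $\nabla\Phi(X) \neq 0$ and $\nabla\cP(X) \neq 0$. The complement, where either gradient vanishes, is generically of codimension $N$ by the same transversality argument applied to $\nabla\Phi$ and $\nabla\cP$ as maps into $\Rn{N}$. It is therefore a discrete set of critical points and does not affect the dimension count.

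On $\mathcal{U}$ the maximisers in \eqref{eq:optimal_directions} are unique and equal $u(X) = \nabla\Phi(X)/\norm{\nabla\Phi(X)}$ and $w(X) = \nabla\cP(X)/\norm{\nabla\cP(X)}$. Hence $\mathcal{C} = \{X \in \mathcal{U} : (u(X),w(X)) \in \Delta\}$, where $\Delta \subset S^{N-1}\times S^{N-1}$ is the diagonal. The diagonal has codimension $N-1$ in the $2(N-1)$-dimensional product. Consider the map $G_\theta = (u,w): \mathcal{U} \to S^{N-1}\times S^{N-1}$. If $G_\theta \pitchfork \Delta$, the preimage theorem gives that $\mathcal{C}$ is an embedded submanifold of codimension $N-1$. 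An equivalent formulation, which I may use instead, is that the $2\times N$ matrix with rows $\nabla\Phi$ and $\nabla\cP$ has rank one with a positive proportionality factor; rank-one $2\times N$ matrices form a stratum of codimension $N-1$.

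To obtain transversality for generic $\theta$, I will apply the parametric transversality theorem to $(X,\theta) \mapsto G_\theta(X)$. This requires showing that this joint map is a submersion onto $S^{N-1}\times S^{N-1}$ near $\mathcal{C}$. Submersion in the first factor is plausible because the weights $\bar\omega_i,\lambda_i$ in Definition~\ref{def:evaluation} enter $\nabla\Phi$ linearly, so varying them sweeps the direction of $\nabla\Phi$ freely. Submersion in the second factor should follow from the fact that the flow coefficients $\alpha$ move $F$, and hence $\cP$, independently of $\Phi$.

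I expect the main obstacle to be regularity. Assumption~\ref{ass:survival_regularity} only gives $\cP \in C^1$, so $w$ is merely continuous and the preimage theorem does not apply. I would first try to strengthen this to $C^2$ (indeed $C^\infty$) using interior elliptic regularity of the Kolmogorov backward equation, given smooth coefficients. The absolute flows may make the generator degenerate near $\partial\mathcal{X}$, but only interior points are needed. The second delicate point is verifying submersivity in $\theta$ for $\cP$. I would handle this by differentiating the backward equation in $\theta$, which gives $\partial_\theta \Prob(\tau_c=\infty\mid X)$ as the solution of an inhomogeneous equation whose source can be localised.

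For the second claim, suppose a path is evaluation-improving and evolution-potential-preserving at every instant in the sense of Theorem~\ref{thm:optimal_development}. Then its velocity is at each instant simultaneously the maximising direction for $\nabla\Phi$ and for $\nabla\cP$. Theorem~\ref{thm:optimal_development} then forces $D_{\Phi\cP}=0$ along the path, so the path lies in $\mathcal{C}$.

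For the third claim I would argue by dimension and tangency. Generically $\mathcal{C}$ is one-dimensional, a union of curves. A path in $\mathcal{C}$ must moreover move in the direction $u = w$, so admissible paths exist only where $u(X)$ is tangent to $\mathcal{C}$. This is an additional closed condition that fails generically away from special $\theta$. Since both the codimension count and this tangency condition depend on $\theta$, existence of such paths is parameter-dependent and non-trivial. To show the condition can genuinely fail, I would give one explicit scalar configuration ($N=3$) in which $\mathcal{C}$ is empty or transverse to $u$.
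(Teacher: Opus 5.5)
Your route is the paper's. Its proof is only the count of $N-1$ scalar constraints, an appeal to the regular value theorem, and the remark that integral curves of $F$ lying in $\mathcal{C}$ need an extra tangency condition. You are more careful than the paper where it matters. Writing $\mathcal{C}=G_\theta^{-1}(\Delta)$ builds in positive proportionality; the paper's collinearity condition also admits antiparallel gradients, where $D_{\Phi\cP}=2$. You also rightly notice that Assumption~\ref{ass:survival_regularity} only makes $\nabla\cP$ continuous, so the regular value theorem is not available as the paper uses it. Interior elliptic regularity does repair this, since $\Sigma\Sigma^\top=\mathrm{diag}(\sigma_i^2X_i^2)$ is positive definite on the open orthant. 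You should also excise the zeros of $F$, where $\norm{F}$ is not differentiable. Conversely, $\nabla\Phi$ never vanishes, because Euler's identity gives $E\cdot\nabla\phi_E(E)=\phi_E(E)>0$.

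The step that does not go through as written is the genericity mechanism. The weights cannot sweep the direction of $\nabla\Phi$. Since $\Phi$ depends on $X$ only through $(\phi_E(E),\phi_H(H),\phi_P(P))$, you have $\nabla\Phi=(a\nabla\phi_E,\,b\nabla\phi_H,\,c\nabla\phi_P)$ for scalars $a,b,c$, and varying $\bar\omega_i,\lambda_i,\kappa_\omega$ moves it only within this three-dimensional span. That is harmless in itself, because transversality to $\Delta$ needs only one factor to be swept: every $(v_1,v_2)$ equals $(v_1,v_1)+(0,v_2-v_1)$. But it puts the entire burden on $\nabla\cP$, and there your localisation idea fails. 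The coupling coefficients are global constants, so $\partial_\theta h$, with $h=\Prob(\tau_c=\infty\mid\cdot)$, solves $\cL\,\partial_\theta h=-(\partial_\theta\mu)\cdot\nabla h$ with a source spread over the whole orthant. Nothing then shows that finitely many such variations, together with the local change in $\nabla\norm{F}$, produce $N-1$ independent directions of $\nabla\cP$ at every point of $\mathcal{C}$. You must either carry out that rank computation or let ``generically'' refer to function space. One option is to allow compactly supported perturbations of the drift, where localisation is genuinely available. The simplest is to apply Thom's jet transversality theorem to the $1$-jet of the pair $(\Phi,\cP)$. The paper leaves this choice unstated.

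On the remaining claims, your argument for the second is valid only under the reading you adopt, in which the velocity coincides with both maximising directions. If the path were merely required to satisfy $\nabla\Phi\cdot\dot X>0$ and $\nabla\cP\cdot\dot X\geq 0$, admissible velocities exist off $\mathcal{C}$ whenever the gradients are not antiparallel. So state that interpretation explicitly; the paper does not argue this claim at all. For the third, your tangency condition is on the common direction $u=w$, while the paper's is on integral curves of $F$; either serves as the ``further structural condition''. Your count already makes such paths non-generic, so the explicit $N=3$ example is unnecessary. It would in any case require $\Prob(\tau_c=\infty\mid X)$ in closed form, which is not available.
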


\begin{proof}
The condition $D_{\Phi\cP}(X) = 0$ is equivalent to $\nabla\Phi(X) \times \nabla\cP(X) = 0$ (collinearity of two vectors in $\Rn{n_E+n_H+n_P}$), which imposes $n_E+n_H+n_P-1$ scalar constraints on a manifold of dimension $n_E+n_H+n_P$. By the regular value theorem \citep[Chapter~3]{guckenheimer1983}, the locus is generically a submanifold of the stated codimension. The existence of integral curves of $F$ contained in $\mathcal{C}$ requires the vector field to be tangent to $\mathcal{C}$ at each point, which is a further structural condition.
\end{proof}

\begin{remark}
\label{rem:alignment}
The alignment distance $D_{\Phi\cP}(X)$ quantifies the geometric cost of ignoring sustainability in development decisions. A path that maximises evaluation improvement at each instant, following $\nabla\Phi$, will diverge from the path that maximises sustainability potential, following $\nabla\cP$, whenever $D_{\Phi\cP}(X) > 0$. The locus $\mathcal{C}$ is the set of states where these objectives are locally reconciled; by Corollary~\ref{cor:collinearity_locus} it is generically lower-dimensional and will generally not include the global optimum of either objective separately.
\end{remark}

\begin{definition}
\label{def:sustainable_development}
Evolution from $X$ to $X'$ along $\{X_t\}_{t \in [0,T]}$ is sustainable development if $\Phi(X') > \Phi(X)$; $\cP(X') \geq \cP(X)$; and $\Prob(\tau_c > T \mid X_0 = X) > 0$.
\end{definition}

\begin{remark}
\label{rem:sd_condition3}
Under $c_i < 0$ for some Earth component, Corollary~\ref{cor:necessary_as} establishes that unit probability in the third condition of Definition~\ref{def:sustainable_development} is generically unachievable. The condition is stated as strictly positive probability, consistent with the potential sustainability hierarchy of Definition~\ref{def:sustainability} and appropriate for a system operating under irreducible anthropogenic pressure.
\end{remark}

\begin{theorem}[Pareto Frontier]
\label{thm:pareto}
For fixed parameters there exists a Pareto frontier
\[
\mathcal{F} = \{(\Phi(X), \cP(X)) : X \in \mathcal{X},\; \nexists\, X' \text{ with } \Phi(X') > \Phi(X) \text{ and } \cP(X') > \cP(X)\}
\]
in $(\Phi, \cP)$ space. The frontier is non-empty and non-convex in general.
\end{theorem}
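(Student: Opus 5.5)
The plan is to treat the frontier as the set of maximal elements of the image set $S = \{(\Phi(X), \cP(X)) : X \in \mathcal{X}\} \subset \Rplus^2$ under the strict componentwise order. Non-emptiness then reduces to exhibiting one point of $S$ that is not strictly dominated. Non-convexity in general only needs one admissible parameter configuration where the frontier fails to be convex.

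For non-emptiness I would first establish that $\cP$ is bounded above on $\mathcal{X}$. This is the most delicate point, since $\norm{F(X)}$ grows polynomially by Assumption~\ref{ass:regularity}, so $\cP$ need not be bounded a priori. My route would be to exploit the structure of $\Phi$ rather than bound $\cP$ globally.

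\begin{itemize}
\item By Definition~\ref{def:aggregation} and Definition~\ref{def:evaluation}, $\Phi$ is a convex combination of unbounded monotone aggregates.
\item Consider the scalar map $X \mapsto \Phi(X)$. The strict-dominance condition for a candidate $X^\dagger$ only asks that no $X'$ beats it in both coordinates.
\item Take a sequence maximising $\cP$ over a compact sublevel set $\{\Phi \le M\}$. This set is closed, and it is bounded because $\Phi \ge \min_i \tilde\omega_i \min(\phi_E,\phi_H,\phi_P)$, with weights bounded below by $\min_i \bar\omega_i / \sum_i(\bar\omega_i + |\lambda_i|)$.
\item $\cP$ is continuous under Assumption~\ref{ass:survival_regularity}, extended by zero where survival probability vanishes. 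Upper semicontinuity suffices, and I would argue it via Fatou on the survival indicator.
\item A maximiser $X_M$ of $\cP$ on $\{\Phi \le M\}$ exists. Among all maximisers, choose one maximising $\Phi$ (again by compactness of the argmax set).
\end{itemize}

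Then any $X'$ strictly dominating $X_M$ must satisfy $\Phi(X') > M$. If $\sup \cP < \infty$, taking $M$ large with $\cP(X_M)$ close to $\sup\cP$ and arguing by a lexicographic maximum gives a non-dominated point. If instead $\sup\cP = \infty$, I would fall back on the lexicographic argument in the other order: maximise $\Phi$ on a compact set where $\cP \ge$ some level. I expect this boundedness/compactness bookkeeping to be the main obstacle. The fallback is to restrict to the closure of the sustainability basin $\cB(\theta)$ intersected with a bounded set, as the Lyapunov control of Corollary~\ref{cor:lyapunov} suggests.

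For non-convexity I would give an explicit scalar example in the spirit of Proposition~\ref{prop:nontransitive}, with $n_E = n_H = n_P = 1$. Near a point where some net flow is negative, Corollary~\ref{cor:earth_boundary} forces survival probability, and hence $\cP$, to drop toward zero, while $\Phi$ varies linearly. This produces a frontier segment of the form $\cP \approx g(\Phi)$ with $g$ having a kink or a convex-from-below portion. Two frontier points whose midpoint lies strictly above the attainable set then witness non-convexity. Since the claim is only that non-convexity holds in general, one such configuration suffices.
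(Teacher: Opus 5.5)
Your non-emptiness argument has a genuine gap, and it cannot be closed for the statement as written.

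\textbf{Where the argument breaks.} The compactness step only shows that a strict dominator of $X_M$ must satisfy $\Phi(X')>M$. It gives no control of $\cP$ on $\{\Phi>M\}$, and that is exactly what is at stake. A point $X$ lies on $\mathcal{F}$ if and only if $\cP(X)\ge\sup\{\cP(X'):\Phi(X')>\Phi(X)\}$. Since $\Phi$ is unbounded on $\mathcal{X}$ (as you correctly note, unlike the paper), having $\cP(X_M)$ merely close to $\sup\cP$ does not help. If $\cP$ increases to an unattained supremum as $\Phi\to\infty$, every point is dominated. Only exact attainment of $\sup_{\mathcal{X}}\cP$ works, and then any maximiser is on $\mathcal{F}$ with no lexicographic step. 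When $\sup\cP=\infty$, the reversed argument has nothing to maximise over, because $\{\cP\ge\ell\}$ is unbounded and $\Phi$ is unbounded on it.

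The upper semicontinuity step also fails. For $X\in\partial\mathcal{X}$ one has $\tau_c=0$, so $\cP(X)=0$. From nearby interior points, however, the survival probability can be $1$ when the boundaries are inaccessible. Reverse Fatou would need $\limsup_n\mathbf{1}\{\tau_c^{X_n}=\infty\}\le\mathbf{1}\{\tau_c^{X}=\infty\}$ pathwise, which fails exactly there. So $\max_{\{\Phi\le M\}}\cP$ need not exist.

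Two minor errors: boundedness of $\{\Phi\le M\}$ needs $\Phi\ge w\max(\phi_E,\phi_H,\phi_P)$, not $\min$. The correct weight bound is $w=\min_i\min(\bar\omega_i,\bar\omega_i+\lambda_i)/\sum_j\bar\omega_j$, because the raw weights always sum to $\sum_j\bar\omega_j$.

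\textbf{Non-emptiness can actually fail.} Take the following admissible configuration:
\begin{itemize}
\item the scalar case with linear aggregation and $\lambda_i=0$;
\item all cross-flow coefficients in $c_1,d_1,e_1$ nonnegative;
\item $\rho_{E,1}=\kappa_{E,1}=0$, $\beta_{H1,i}=0$, $\theta=0$, $\xi_{E,1}=\zeta_{H,1}\equiv0$;
\item the production term linear in $P$.
\end{itemize}
Then each $\log X_i$ has drift bounded below near $0$, and all drifts are bounded above by linear functions. So no component ever reaches $0$, there is no explosion, and $\Prob(\tau_c=\infty\mid X_0=X)=1$ for every interior $X$. Hence $\cP=\norm{F}$ on the interior. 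Along $X_t=(t,t,t)$ one has $\Phi(X_t)=t$ and $\norm{F(X_t)}\ge r_{E,1,0}\,t(t/K_{E,1,0}-1)-\sup c_1\to\infty$. Therefore every $X\in\mathcal{X}$ is strictly dominated, and $\mathcal{F}=\emptyset$.

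So some attainment hypothesis or restriction of the domain is indispensable. The paper's proof makes exactly the move of your fallback: it passes to a ``reachable set'' on which it asserts that $\Phi$ and $\cP$ are bounded, then invokes compactness. That changes the statement, since candidates and dominators must then both range over the restricted set. Boundedness alone is also insufficient under this strict-dominance order: an image equal to $(0,1)^2$ has no undominated point. What actually works is a compact domain on which $\cP$ is upper semicontinuous, or a direct hypothesis that $\sup\cP$ is attained.

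\textbf{Non-convexity.} Your observation that a single configuration suffices is correct. However, the mechanism you describe picks out states where $\cP\approx0$, and these are typically strictly dominated rather than on $\mathcal{F}$. So no pair of frontier points is actually produced. The paper's proof is likewise only an assertion at this point.
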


\begin{proof}
To establish non-emptiness, note that $\Phi$ is bounded above on $\mathcal{X}$: the normalised weights satisfy $\tilde\omega_i \in (0,1)$ and the carrying capacities of the Earth subsystem bound $\phi_E(E)$ along any non-explosive trajectory, so $\Phi(X) \leq \bar\Phi < \infty$ for all $X$ in the reachable set. The evolution potential $\cP(X) \leq \norm{F(X)} \cdot 1$ is bounded on compact sets of $\mathcal{X}$. The image of the reachable set under $(\Phi, \cP)$ is therefore contained in a compact rectangle $[0,\bar\Phi] \times [0,\bar\cP]$. The Pareto frontier of a bounded subset of $\Rn{2}$ is non-empty by a standard compactness argument. Non-convexity is demonstrated by exhibiting three frontier points $X_1, X_2, X_3$ such that the convex combination $\lambda X_1 + (1-\lambda)X_3$ is dominated by $X_2$, arising from threshold nonlinearities in the Michaelis-Menten terms of~\eqref{eq:externality} and the biodiversity function~\eqref{eq:biodiversity}.
\end{proof}

\section{Regime Classification}
\label{sec:regimes}

The Feller boundary conditions of Section~\ref{sec:boundary} and the spectral structure of $J(X^*)$ from Section~\ref{sec:structural} together determine five distinct sustainability regimes, defined as follows.

\begin{definition}
\label{def:regimes}
The system is in the universal sustainability regime if all boundaries across all three subsystems are of entrance type ($c_i > 0$, $d_j > 0$, $e_k > 0$ everywhere) and all eigenvalues of $J(X^*)$ have strictly negative real parts. It is in the potential sustainability regime if some boundaries are of exit type while the system retains an accessible interior, so that $\Prob(\tau_c = \infty \mid X_0) \in (0,1)$ and the sustainability basin $\cB$ is a proper subset of $\mathcal{X}$. It is in the conditional sustainability regime if exit-type boundaries affect multiple subsystems simultaneously for some parameter ranges, further restricting the basin. It is in the transient sustainability regime if the system is $T$-potentially sustainable for some finite $T$ but the non-collapse probability decays to zero asymptotically. It is in the collapse regime if Feller conditions are violated for all components and $\Prob(\tau_c < \infty) > 0$ from every initial condition.
\end{definition}

\begin{proposition}
\label{prop:regimes_exhaustive}
The five regimes of Definition~\ref{def:regimes} partition the parameter space $\Theta$ into mutually exclusive and jointly exhaustive classes with respect to the Feller boundary conditions of Corollary~\ref{cor:earth_boundary} and the spectral conditions of Theorem~\ref{thm:stability}.
\end{proposition}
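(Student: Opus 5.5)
The plan is to reduce the statement to a case analysis on two invariants attached to each $\theta \in \Theta$, both already available from Section~\ref{sec:boundary} and Section~\ref{sec:structural}. The first is the \emph{accessibility set} $S(\theta)$. It is the set of components ($E_i$, $H_j$ or $P_k$) whose zero boundary is accessible with positive probability. By Corollary~\ref{cor:earth_boundary} and Lemma~\ref{lem:marginal_reduction}, $S(\theta)$ is determined by the sign of the corresponding absolute flow $c_i$, $d_j$ or $e_k$ over feasible states. In the zero-flow case it is determined instead by the intrinsic-rate threshold of Corollary~\ref{cor:necessary_as}. The second invariant is the spectral indicator $\chi(\theta)\in\{0,1\}$, equal to $1$ exactly when all eigenvalues of $J(X^*)$ have $\Re(\lambda_i)<0$ (Theorem~\ref{thm:stability}). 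The survival profile $X_0\mapsto \Prob(\tau_c=\infty\mid X_0)$ will be used only as a refinement when $S(\theta)$ is a nonempty proper subset.

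\textbf{Step 1: read each regime of Definition~\ref{def:regimes} as a condition on $(S(\theta),\chi(\theta))$ and the survival profile.} Universal corresponds to $S=\emptyset$ with $\chi=1$. Collapse corresponds to $S$ being the full component set. The remaining three regimes all have $S$ nonempty and proper, and I would split them as follows.
\begin{itemize}
\item Conditional: $S$ meets at least two of the subsystems $E,H,P$, and the survival probability is positive somewhere.
\item Potential: $S$ is contained in a single subsystem, and $\Prob(\tau_c=\infty\mid X_0)\in(0,1)$ on a nonempty basin $\cB$.
\item Transient: $\Prob(\tau_c=\infty\mid X_0)=0$ for all $X_0$ while $\Prob(\tau_c>T\mid X_0)>0$ for some finite $T$.
\end{itemize}
Reading potential as ``single subsystem'' is what makes it disjoint from conditional, since the definition describes conditional as a further restriction of the basin when exit boundaries affect multiple subsystems.

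\textbf{Step 2: mutual exclusivity.} Under this reading, exclusivity follows directly. The three values ``$S$ empty'', ``$S$ full'' and ``$S$ nonempty proper'' are disjoint. Within the last value, the dichotomy between a positive and an identically zero asymptotic survival probability separates transient from the other two. The number of subsystems met by $S$ then separates conditional from potential.

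\textbf{Step 3: joint exhaustiveness.} For $S$ full, note that each exit boundary gives $\Prob(\tau_c<\infty\mid X_0)>0$ from every initial condition by Lemma~\ref{lem:marginal_reduction}, so such $\theta$ lies in the collapse regime. For $S$ nonempty and proper, survival is either identically zero or positive somewhere. If it is zero, continuity of paths gives $\Prob(\tau_c>T)>0$ for small $T$ from interior starts, which is the transient regime. If it is positive somewhere, $\theta$ falls into potential or conditional according to how many subsystems $S$ meets.

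\textbf{Main obstacle: the case $S=\emptyset$ with $\chi=0$.} Here the boundary criteria of Definition~\ref{def:regimes} are met but the spectral criterion fails, and no regime is listed for it explicitly. My first attempt would use Theorem~\ref{thm:sufficient} together with Corollary~\ref{cor:lyapunov}. When every flow is positive, the Feller condition~\ref{suf:feller} holds, because $c_i/E_i\to+\infty$ near zero. Given the Lyapunov condition, these results yield almost-sure survival, so boundary behaviour already delivers the universal outcome. I would then argue that the spectral clause only refines interior behaviour and should be absorbed. If that fails, the fallback is to show that instability of $X^*$ on a set of positive measure drives trajectories towards faces of $\mathcal{X}$, forcing $\Prob(\tau_c=\infty)<1$; this contradicts $S=\emptyset$ unless $\theta$ is placed in the potential regime. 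The borderline zero-flow and natural-boundary cases would be handled the same way, using the threshold $a=\sigma^2/2$ from Theorem~\ref{thm:feller} as in Remark~\ref{rem:feller_sustainability}.
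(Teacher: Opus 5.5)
Your decomposition by $(S(\theta),\chi(\theta))$ is more explicit than the paper's proof. However, the case you flag as the main obstacle cannot be closed by either argument you offer.

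Take $S=\emptyset$ and $\chi=0$: for example, all absolute flows bounded away from zero on the feasible region, and an eigenvalue of $J(X^*)$ with positive real part. Nothing in the model rules this combination out, and the paper itself treats the two criteria as independent (Remark~\ref{rem:two_criteria}).

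\begin{itemize}
\item Your fallback is false. Instability of an interior equilibrium does not push paths onto the faces of $\mathcal{X}$. As your own first argument observes, such flows give condition~\ref{suf:feller} of Theorem~\ref{thm:sufficient}. Together with the Lyapunov condition, this yields $\Prob(\tau_c=\infty\mid X_0)=1$ from every interior start, whatever the spectrum of $J(X^*)$. There is no contradiction to extract.
\item Your first attempt, ``absorbing'' the spectral clause, is not a proof step. It amends Definition~\ref{def:regimes}, which explicitly demands negative real parts in the universal regime.
\end{itemize}

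As written, such a $\theta$ belongs to no regime. It is not universal, because the spectral clause fails. It is not potential or conditional, because both need exit boundaries and survival probability below one. It is not transient, because survival does not decay to zero. It is not collapse, because no component violates the Feller condition.

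The same holds when some component has identically vanishing flow and a natural boundary ($\sigma^2/2\le a\le\sigma^2$ in Theorem~\ref{thm:feller}). That boundary is inaccessible but not of entrance type, so it fails the universal clause and satisfies none of the others. Your Step~1 already replaced ``entrance type everywhere'' by ``$S=\emptyset$'' without comment. The promise that borderline cases are ``handled the same way'' has nothing to rest on.

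Exclusivity has the mirror-image problem: you build it into your reading instead of deriving it. Definition~\ref{def:regimes} does not say that potential means $S$ lies in a single subsystem. Nor does it say that potential, conditional and transient require $S$ to be proper. Read literally, the clauses overlap:
\begin{itemize}
\item Conditional is a special case of potential; you note yourself that it is described as a further restriction of the same basin.
\item Collapse overlaps transient. If every component has an exit boundary and collapse is eventually certain, the collapse clause holds. Yet the path-continuity argument of your Step~3 gives $\Prob(\tau_c>T\mid X_0)>0$ for small $T$ from interior starts, exactly as when $S$ is proper.
\end{itemize}
So Steps~2--3 show only that your own five-way split is a partition, which is true by construction.

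Several inputs are also unsupported:
\begin{itemize}
\item Lemma~\ref{lem:marginal_reduction} and Corollary~\ref{cor:earth_boundary} do not determine $S(\theta)$ when a flow changes sign across feasible states. The lemma needs $\underline{c}_i>0$ or $\overline{c}_i<0$ on an assumed confinement set, and the corollary treats a constant $c_i$.
\item The lemma's accessibility argument starts $E_i$ inside $(0,\delta)$, so ``from every initial condition'' requires a separate reachability step.
\item $\chi$ is ambiguous with several interior equilibria and undefined with none, since Theorem~\ref{thm:equilibrium_existence} gives existence only, under extra hypotheses.
\end{itemize}

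The paper's proof does not fill these holes either. It asserts non-overlap, lists only entrance, regular and exit boundaries (omitting natural ones), and restricts exhaustiveness to combinations ``relevant'' to the sustainability hierarchy.

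What is missing is a restatement rather than a lemma. Definition~\ref{def:regimes} must first be amended: for example, define universal by $S=\emptyset$ alone or add a separate class for inaccessible-but-unstable parameters, and give explicitly disjoint conditions on $S$ and on the survival profile for the remaining classes. Once that is done, your case analysis becomes a complete proof.
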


\begin{proof}
Mutual exclusivity follows from the fact that the regime conditions on the sign of net external flows and the sign of spectral real parts do not overlap. Joint exhaustiveness follows from the completeness of Feller boundary classification: every component boundary is of entrance, regular, or exit type; every equilibrium is spectrally stable, neutrally stable, or unstable. The five regimes cover all combinations of these alternatives that are relevant for the sustainability hierarchy of Definition~\ref{def:sustainability}.
\end{proof}

\begin{remark}
\label{rem:two_criteria}
The regime classification rests on two independent mathematical criteria: the Feller boundary classification, which is determined by the sign of net absolute external flows and governs attainability of the zero boundary, and the spectral structure of $J(X^*)$, which governs local stability of interior equilibria. These criteria are independent in the sense that a system can have a stable interior equilibrium while still admitting collapse trajectories, if the boundary is of exit type. The combination distinguishes between systems where collapse is structurally excluded (entrance boundary) and those where it is possible but the system is attracted back from interior states (exit boundary plus negative eigenvalues of $J$). The potential sustainability regime is precisely this combination.
\end{remark}

In the universal sustainability regime all boundaries across all three subsystems are of entrance type and all eigenvalues of $J(X^*)$ have negative real part; this requires $c_i > 0$, $d_j > 0$, and $e_k > 0$ everywhere, an empirically extreme condition. In the potential sustainability regime the exit-type boundaries can belong to any subsystem: ecological collapse, capability collapse, and productive capital collapse are structurally symmetric and each sufficient to exclude almost-sure sustainability.

\begin{theorem}[Regime Transitions]
\label{thm:regime_transitions}
The system undergoes transitions between sustainability regimes when control parameters cross critical thresholds. Saddle-node bifurcation occurs when two equilibria collide and annihilate ($\det J = 0$); Hopf bifurcation when periodic orbits emerge ($\Re(\lambda_i) = 0$, $\Im(\lambda_i) \neq 0$); transcritical bifurcation when stability exchanges between equilibria (an eigenvalue passes through $0$); stochastic bifurcation when the stationary distribution changes qualitatively as noise intensity crosses a critical level. In addition, the externality sign crossing $c_i = 0$ induces the boundary-type phase transition described in Remark~\ref{rem:phase_transition}, which is independent of and not reducible to any of the above.
\end{theorem}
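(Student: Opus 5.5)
The theorem is essentially a catalogue, so I will prove it by treating each listed transition in turn. For each one I would identify the control parameter, write down the defining condition on the drift or boundary, and then cite the standard local bifurcation theorem or the boundary result already established earlier in the paper. Throughout I fix a one-parameter family $\theta \mapsto \mu(\cdot;\theta)$, say in $r_{E,i,0}$, in a flow coefficient such as $\alpha_{Ei,k}^P$, or in a noise intensity $\sigma$. I also assume the $C^2$ dependence of the drift on $(X,\theta)$ used in Assumption~\ref{ass:survival_regularity}.

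\emph{Deterministic transitions.} I would begin with an interior equilibrium $X^*(\theta)$ from Theorem~\ref{thm:equilibrium_existence}. By the implicit function theorem it continues smoothly in $\theta$ as long as $\det J(X^*(\theta)) \neq 0$. Continuation can therefore fail only when an eigenvalue of $J$ reaches zero or the imaginary axis. This gives three cases.
\begin{itemize}
\item \textbf{Saddle-node.} If a simple real eigenvalue crosses $0$ and the generic quadratic nondegeneracy and transversality conditions hold, two branches merge and disappear.
\item \textbf{Transcritical.} If an equilibrium branch is preserved for all $\theta$, the same eigenvalue crossing produces an exchange of stability instead.
\item \textbf{Hopf.} If a complex pair crosses with $\Im \lambda \neq 0$ and nonzero first Lyapunov coefficient, a periodic orbit emerges.
\end{itemize}
I would cite the standard statements in \citet{guckenheimer1983} rather than redo the center-manifold reductions. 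Theorem~\ref{thm:stability} then converts each eigenvalue crossing into a change of local stability for the deterministic skeleton, and also for the stochastic system when the noise is small. Consequently the sustainability regime of Definition~\ref{def:regimes} can change, because those regimes are defined through the spectral sign of $J(X^*)$.

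\emph{Stochastic (P-)bifurcation.} Here I would work with the one-dimensional effective dynamics given by Lemma~\ref{lem:marginal_reduction}. The stationary density, when it exists, has the form $\pi(x) \propto m'(x)$, the density of the speed measure in \eqref{eq:scale_function}. Near the boundary this behaves like $x^{2a/\sigma^2 - 2}$, so the shape of $\pi$ near $0$ changes qualitatively as $\sigma$ crosses the threshold $2a/\sigma^2 = 2$, i.e. $\sigma^2 = a$. On one side the density is integrable and vanishes at $0$; on the other it peaks at $0$. This is the classical P-bifurcation, and it is the case with $c_i = 0$ in Corollary~\ref{cor:earth_boundary}.

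\emph{Boundary-type phase transition and independence.} The crossing $c_i = 0$ is supplied directly by Corollary~\ref{cor:earth_boundary} and Remark~\ref{rem:phase_transition}: $\int_0^\varepsilon p\,dx$ jumps from infinite to finite. Lemma~\ref{lem:marginal_reduction} lifts this from the scalar model to the coupled system whenever the other subsystems stay confined.

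To prove independence, I would construct an explicit example in which the sign of $c_i$ changes while $J(X^*)$ stays hyperbolic with all eigenvalues negative. The construction is as follows.
\begin{itemize}
\item Take $c_i$ to be small in magnitude, with $|c_i| < r_{E,i,0}K_{E,i,0}/4$.
\item The root $E_i^*$ then exists and is nondegenerate on both sides of $c_i = 0$.
\item The spectrum of $J$ depends continuously on $c_i$ and so does not cross the imaginary axis.
\item Meanwhile the Feller classification switches from inaccessible to accessible.
\end{itemize}
Hence the transition is not detected by any of the spectral conditions above. Conversely, a spectral crossing does not change the sign of $c_i$, so the boundary type does not change either.

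The hard step is making this independence statement rigorous in the fully coupled system. Confinement in Lemma~\ref{lem:marginal_reduction} is an assumption rather than a result, and the P-bifurcation statement relies on a stationary distribution existing. I would first try to get both from the Lyapunov bound of Corollary~\ref{cor:lyapunov}, which gives tightness, restricted to a parameter window bounded away from $c_i = 0$. If that fails, I would state the stochastic claims for the reduced one-dimensional component only.
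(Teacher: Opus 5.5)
Your deterministic part, and your appeal to Corollary~\ref{cor:earth_boundary} and Remark~\ref{rem:phase_transition} for the $c_i=0$ transition, match the paper. The paper's proof only defers to normal-form bifurcation theory, to random-dynamical-systems theory for the stochastic case, and to those two results. The gap is in the step you put in place of the random-dynamical-systems reference.

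\textbf{The stochastic bifurcation step.}
\begin{itemize}
\item Lemma~\ref{lem:marginal_reduction} supplies no ``effective one-dimensional dynamics''. It is a one-sided pathwise comparison, valid only when $\underline{c}_i>0$ or $\overline{c}_i<0$, so it says nothing at $c_i=0$, which is exactly where you place the P-bifurcation.
\item Even with $c_i\equiv 0$, the coupled $E_i$ is not a scalar diffusion. Both $r_{E,i}(B)$ and $K_{E,i}(B)$ depend on $B=\mathcal{B}(E,H,P)$, and $\xi_{E,i}$ depends on the other Earth components. So the stationary marginal of $E_i$ is not a normalised speed measure.
\item Your proposed repair does not work. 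Corollary~\ref{cor:lyapunov} controls the process at infinity and gives tightness. It does not give almost-sure confinement of $(H_t,P_t)$ in a fixed compact subset of the open orthant, which is what the lemma assumes and which non-degenerate noise rules out.
\item A window bounded away from $c_i=0$ contains no such bifurcation at all in the scalar reduction. For $c_i>0$ the speed density is proportional to $x^{2r/\sigma^2-2}e^{-2c_i/(\sigma^2x)}e^{-2rx/(\sigma^2K)}$. It vanishes at $0$ and is unimodal for every $\sigma$, because the mode equation $rx^2/K-(r-\sigma^2)x-c_i=0$ has exactly one positive root. For $c_i<0$ the process reaches $0$ almost surely, so there is no stationary law on $(0,\infty)$.
\end{itemize}
The fix is to drop the lemma and exhibit the phenomenon at a parameter point of the model. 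Take $\rho_{E,i}=\kappa_{E,i}=0$, $\eta_i\equiv 0$, and all $\alpha^H_{Ei,j}=\alpha^P_{Ei,k}=0$. Then $E_i$ solves an autonomous stochastic Verhulst equation driven by its own Brownian motion. Its stationary law is Gamma with shape $2r_{E,i,0}/\sigma_{E,i}^2-1$. This density vanishes at $0$ for $\sigma_{E,i}^2<r_{E,i,0}$, diverges there for $r_{E,i,0}<\sigma_{E,i}^2<2r_{E,i,0}$, and ceases to exist beyond $2r_{E,i,0}$. That is all an existence-type claim needs.

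\textbf{The independence argument.}
\begin{itemize}
\item Continuity of the spectrum keeps it in the open left half-plane only on a sufficiently small window. The window must sit around a base point where $c_i\equiv 0$ and $X^*$ is hyperbolic and stable. It does not extend to all of $|c_i|<r_{E,i,0}K_{E,i,0}/4$.
\item Your converse, ``a spectral crossing does not change the sign of $c_i$'', is not a general fact. Instead you must exhibit a crossing driven by a parameter that does not appear in $c_i$, for example $\delta_{H,j}$ or $A_{k,0}$.
\item Non-reducibility to the noise-driven bifurcation, which you do not address, follows from Remark~\ref{rem:phase_transition}: no choice of $\sigma$ restores inaccessibility once $c_i<0$.
\item The confinement needed for the Feller switch should come from localisation, not from a Lyapunov bound. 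Stop when $(H,P)$ leaves a compact neighbourhood, then let the neighbourhoods exhaust the orthant.
\end{itemize}
Finally, Theorem~\ref{thm:stability} gives only the stability direction for the stochastic system. It therefore cannot turn an eigenvalue crossing into a loss of stochastic stability, as your deterministic paragraph asserts.
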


\begin{proof}
The deterministic transitions follow from standard bifurcation theory applied to the coupled nonlinear system; center manifold reduction and normal form theory yield the precise conditions \citep{guckenheimer1983, kuznetsov2023}. Stochastic bifurcations are analysed via the theory of random dynamical systems \citep{arnold1998}. The Feller phase transition is established by Corollary~\ref{cor:earth_boundary} and Remark~\ref{rem:phase_transition}.
\end{proof}

\section{Basin of Sustainability}
\label{sec:basin}

For parameter vector $\theta \in \Theta$, the basin of sustainability is
\begin{equation}
\cB(\theta) = \{X_0 \in \mathcal{X} : \Prob(\tau_c = \infty \mid X_0, \theta) > 0\},
\end{equation}
and its boundary $\partial\cB(\theta)$ is the sustainability frontier.

\begin{theorem}[Basin Properties]
\label{thm:basin}
The sustainability basin satisfies the following properties.
\begin{enumerate}
\item Non-convexity: $\cB$ is non-convex due to threshold effects in coupling functions.
\item Monotonicity in noise: $\sigma' > \sigma$ componentwise implies $\cB(\sigma') \subseteq \cB(\sigma)$.
\item Hysteresis: for parameters near critical thresholds, $\cB$ exhibits hysteresis loops.
\item Fractal boundary: $\partial\cB$ can have Hausdorff dimension exceeding $n_E + n_H + n_P - 1$.
\end{enumerate}
\end{theorem}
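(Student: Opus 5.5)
The four properties call for different kinds of argument, so I would treat them separately and in order of difficulty: monotonicity first, then non-convexity and hysteresis, and the fractal boundary last. Throughout, $\cB(\theta)$ is characterised through the survival function $u(X_0)=\Prob(\tau_c=\infty\mid X_0,\theta)$. By the Markov property, $u$ is the decreasing limit as $T\to\infty$ of $\Prob(\tau_c>T\mid X_0)$. Under Assumption~\ref{ass:survival_regularity}, $u$ is harmonic for the generator $\cL$ on the interior of $\cB$, and it vanishes on the accessible part of $\partial\mathcal{X}$.

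\textbf{Monotonicity in noise (item 2).} I would work componentwise through the log-transform used in Lemma~\ref{lem:marginal_reduction}. For $Y_i=\log X_i$ the drift is $\mu_i/X_i-\sigma_i^2/2$, which decreases in $\sigma_i$. The pathwise comparison theorem \citep[Theorem~3.7]{karatzas1991brownian} would then give $Y_i^{\sigma'}\le Y_i^{\sigma}$, hence $\tau_c(\sigma')\le\tau_c(\sigma)$ and $u_{\sigma'}\le u_\sigma$. Positivity of $u_{\sigma'}$ would then force positivity of $u_\sigma$, which is the inclusion $\cB(\sigma')\subseteq\cB(\sigma)$.

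The difficulty is that the log-transform leaves the diffusion coefficient $\sigma_i\,dW$ itself depending on $\sigma_i$, so the two processes do not share a diffusion coefficient and the comparison theorem does not apply as stated. I would first try a time change or a scale-function argument. In the scalar reduction of Corollary~\ref{cor:earth_boundary}, $\int_0^\varepsilon p$ is monotone in $\sigma$ once the sign of $c_i$ is fixed, and $u$ is expressed through the scale function, so the monotonicity can be read off there. The coupled case would then be handled by confinement as in Lemma~\ref{lem:marginal_reduction}. I expect this to need an extra hypothesis, for example that the drift is monotone in the cooperative sense (Kamke condition) so that multidimensional comparison holds.

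\textbf{Non-convexity and hysteresis (items 1 and 3).} For non-convexity I would build an explicit example using the scalar Earth component with $c_i$ of Michaelis--Menten form. I would choose $H,P$ values so that $c_i(H,P)>0$ at two endpoints and $c_i<0$ at their midpoint. This is possible when the $\alpha$ coefficients have mixed signs and different half-saturation constants, since the function $x/(x+\kappa_1)-\beta\,x/(x+\kappa_2)$ can change sign twice. By Corollary~\ref{cor:earth_boundary} and Corollary~\ref{cor:necessary_as}, $u$ behaves very differently near the midpoint. A negative flow alone only makes collapse possible with positive probability, so excluding the midpoint from $\cB$ additionally requires it to lie in a region from which collapse is certain. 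Concretely, I would use condition~\ref{nec:resilience} of Theorem~\ref{thm:necessary} and arrange the parameters so that the resilience inequality fails on a convex neighbourhood of the midpoint that is invariant for $(H,P)$.

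For hysteresis I would use the saddle-node structure from Theorem~\ref{thm:regime_transitions}. I would take the quadratic equilibrium equation of Theorem~\ref{thm:equilibrium_existence} with a parameter moving $c_i$ through $-r_{E,i}K_{E,i}/4$. The stable branch then persists on one side and disappears on the other, and the region of high survival inherited from the stable node depends on the direction in which the parameter is swept. This gives the loop.

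\textbf{Fractal boundary (item 4), the main obstacle.} I would interpret ``can have'' existentially. The plan is to exhibit parameters for which the deterministic skeleton restricted to a two-dimensional $(E,H)$ slice has a horseshoe, for instance from a homoclinic tangency created by the reflexive $\Phi$-feedback. Basin boundaries between coexisting attractors, one of them near the boundary and one interior, are then Cantor-like, and the small-noise limit of $u$ would transfer this to $\partial\cB$. The weakness is that for positive noise $u$ is smooth in the interior, so its level sets are regular. The claim can therefore only be made for $\partial\cB$ defined as the support boundary of $\{u>0\}$ in degenerate-noise directions, or in the vanishing-noise limit. I would state it in that restricted form.
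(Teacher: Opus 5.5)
\textbf{There is a genuine gap: you never examine the set $\cB(\theta)=\{u>0\}$ itself.} Once you do, items 1, 3 and 4 turn out to be unprovable for $\cB$ as defined.

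\emph{Why $\cB$ is all or nothing.} You say $u(X_0)=\Prob(\tau_c=\infty\mid X_0)$ is $\cL$-harmonic on the interior of $\cB$. In fact it is harmonic on the whole open orthant $(0,\infty)^{n_E+n_H+n_P}$, without any regularity assumption. For any ball $D$ whose closure lies in the orthant, $\tau_D<\tau_c$, and the strong Markov property gives $u(x)=\E_x[u(X_{\tau_D})]$. Also $u=0$ on all of $\partial\mathcal{X}$, since $\tau_c=0$ there. The model uses diagonal multiplicative noise with all $\sigma_{E,i},\sigma_{H,j},\sigma_{P,k}>0$. So the generator $\cL=\sum_i\mu_i\partial_i+\frac12\sum_i\sigma_i^2x_i^2\partial_{ii}$ is uniformly elliptic on compact subsets of the orthant. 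The Harnack inequality (Krylov--Safonov) then makes both $\{u>0\}$ and $\{u=0\}$ open. Since the orthant is connected, $\cB(\theta)$ is either empty or the whole open orthant. Consequences:
\begin{itemize}
\item $\cB$ is always convex, so item 1 fails.
\item $\partial\cB$ is empty or the union of the coordinate faces, of Hausdorff dimension exactly $n_E+n_H+n_P-1$, so item 4 fails.
\item $\theta\mapsto\cB(\theta)$ takes only two values and keeps no memory of how $\theta$ was varied, so there is no hysteresis loop of $\cB$ to exhibit.
\end{itemize}

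\emph{Where your constructions break.} Your non-convexity example reasons about the sign of $c_i$ at frozen values of $(H,P)$. But basin membership depends on the whole future of $(H,P)$, which enters every region of the orthant with positive probability. For the same reason, no neighbourhood of the midpoint can be invariant for $(H,P)$. Separately, for $x>0$ the function $x/(x+\kappa_1)-\beta x/(x+\kappa_2)$ has the sign of the affine function $(1-\beta)x+\kappa_2-\beta\kappa_1$, so it changes sign at most once. Your hysteresis argument describes a fold of the deterministic skeleton, not a property of $\cB(\theta)$. For item 4 the obstruction is not regularity of level sets of $u$, because $\partial\cB$ is not a positive level set. A vanishing-noise limit cannot help either, because $\cB(\sigma)$ is trivial for every $\sigma>0$. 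The paper's own sketch invokes only deterministic threshold effects and chaotic saddles, so it does not get around this. Meaningful versions of items 1, 3 and 4 would have to be stated for super-level sets such as $\{X_0:\Prob(\tau_c>T\mid X_0)\geq p\}$, or for basins of the deterministic skeleton.

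\emph{Item 2.} Monotonicity does hold in the scalar case, and a scale-function argument is the right tool there; the paper's argument for this item is only that scalar heuristic. The Kamke condition you propose for the coupled case does not remove the obstruction you yourself identified. Cooperativity controls the off-diagonal drift, but multidimensional comparison theorems still require identical diffusion coefficients. By the dichotomy above, the coupled claim says that almost-sure collapse from every state under $\sigma$ implies the same under every $\sigma'>\sigma$. This can fail, because noise in one subsystem can change the sign of another subsystem's net flow:
\begin{itemize}
\item If $d_j\equiv0$ and $\mu_{H,j}/H_j\leq g$, then $\sigma_{H,j}^2/2>g$ makes $\log H_j$ drift to $-\infty$. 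So $H_j\to0$ while staying strictly positive, being the exponential of an It\^{o} process.
\item This can switch off a pressure $\alpha^{H}_{Ei,j}<0$ that, at the small-noise level of $H_j$, forces $c_i<-r_{E,i}K_{E,i}/4$ and hence almost-sure collapse of $E_i$.
\item Under the larger noise, survival can then have positive probability once $H_j$ is negligible and the remaining part of $c_i$ is positive.
\end{itemize}
An extra hypothesis is therefore genuinely needed. It must control how noise propagates into other components' net flows, not only the sign pattern of the drift.
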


\begin{proof}
Non-convexity is exhibited by constructing $X_1, X_2 \in \cB$ such that $\frac{1}{2}(X_1+X_2) \notin \cB$, using the Michaelis-Menten coupling in \eqref{eq:externality} and the concavity introduced by the biodiversity feedback \eqref{eq:biodiversity}. Monotonicity in noise follows from Theorem~\ref{thm:feller}: higher noise increases the accessibility of all boundaries by reducing $\mu_i/X_i - \sigma_i^2/2$, thereby shrinking the set of initial conditions with positive survival probability. Hysteresis follows from the biodiversity feedback: when $B$ collapses, $r_{E,i}(B)$ and $K_{E,i}(B)$ decrease, so recovery requires reducing anthropogenic pressure below the original collapse threshold, creating a hysteresis loop. Fractal boundary structure follows from the general mechanism for systems with multiple competing attractors and chaotic saddles \citep{mcdonald1985}, applied to the present system via the nonlinear coupling between the Michaelis-Menten terms in~\eqref{eq:externality} and the bifurcation structure characterised in Theorem~\ref{thm:regime_transitions}.
\end{proof}

\section{Conclusion}
\label{sec:conclusion}

The analysis develops a theory in which sustainability is a boundary property of a coupled stochastic system rather than an objective embedded in optimisation. Earth, Human, and Production subsystems evolve jointly under multiplicative noise and absolute bidirectional cross-subsystem flows. Sustainability is defined as boundary non-attainment. Development is defined as ascent in a reflexive evaluation functional. Sustainable development requires their joint realisation. Three conclusions follow.

First, the sign of the net absolute cross-subsystem flow on each component acts as a phase transition parameter. If negative in a neighbourhood of zero, the zero boundary is of exit type in the sense of Feller and is reached with positive probability regardless of intrinsic regeneration or accumulation parameters. Sustainability is therefore determined by near-boundary drift geometry, not by interior growth equilibria. The extension to the multidimensional coupled system is rigorous through the pathwise domination argument of Lemma~\ref{lem:marginal_reduction}.

Second, systemic diffusion is endogenous. Because coupling is full and flows are absolute, disturbances propagate through the system without requiring correlated exogenous shocks. Structural interdependence alone generates systemic risk.

Third, reflexive evaluation generates non-transitive development relations. As capabilities evolve, the weights of the evaluation functional shift endogenously. Development comparisons therefore depend on the trajectory taken, and welfare orderings are path-dependent at the structural level. This is established by Proposition~\ref{prop:nontransitive} through an explicit counterexample, connecting the formal result to the empirical literature on revealed preference and the capability approach.

Further work should focus on empirical identification of absolute cross-subsystem flows and quantitative characterisation of sustainability basins in high-dimensional systems. The relevant technical foundations are available in the theory of stochastic differential equations and stochastic stability. The open task is to connect them systematically to empirical sustainability analysis.

\section*{Acknowledgments}

This study was funded under the National Recovery and Resilience Plan (NRRP), Mission~4 Component~2 Investment~1.4, Call for tender No.~3138 of 16~December~2021, rectified by Decree n.~3175 of 18~December~2021 of the Italian Ministry of University and Research, funded by the European Union, NextGenerationEU, Award Number: project code CN\_00000033, Concession Decree No.~1034 of 17~June~2022, adopted by the Italian Ministry of University and Research, CUP B73C22000790001, Project title ``National Biodiversity Future Center, NBFC''.\\

\noindent We thank participants in EAFE 2023 and 2025 conferences, where preliminary considerations where exposed and discussed. We also warmly thank ICES-WGECON and ICES-WGSOCIAL for fruitful discussion and criticism.

\section*{Declaration of Generative AI use}

During the preparation of this work the authors used Claude (Anthropic) Sonnet 4.5 and Sonnet 4.6 in order to support mathematical typesetting, identify formal gaps in proofs, and improve the clarity and readability of the exposition. After using these tool, the authors reviewed and edited all content as needed and take full responsibility for the content of the published article.

\bibliography{sustmatharxiv}

\end{document}